\numberwithin{equation}{section}%Make equations to be numbered within sections
\newcommand{\rmd}{\text{d}}
\newcommand{\del}{\partial}
\newcommand{\lieL}{\mathcal{L}}
\newcommand{\symp}[2]{\omega \left( #1 , #2 \right)}
\newcommand{\order}[1]{\mathcal{O} \left( #1 \right)}
\theoremstyle{plain}
\newtheorem{thm}{Theorem}[section]
\newtheorem{cor}[thm]{Corollary}
\newtheorem{conj}[thm]{Conjecture}
\theoremstyle{remark}
\newtheorem{rmk}{Remark}[section]
\newtheorem{nota}[rmk]{Note}
\theoremstyle{definition}
\newtheorem*{defn}{Definition}
\theoremstyle{plain}
\newtheorem*{morseLem}{Morse Lemma}
\begin{document}

%\modulolinenumbers[5]
%\linenumbers

\begin{center} 
{\huge Bifurcations of transition states: Morse bifurcations}\\[10pt]

{\large R S MacKay$^{1 \; 2\; \dagger}$ and D C Strub$^{1 \; *}$}\\[10pt]

$^1$ Mathematics Institute and $^2$ Centre for Complexity Science, University of Warwick, Coventry CV4 7AL, U.K.

$^\dagger$ R.S.MacKay@warwick.ac.uk, $^*$ D.C.Strub@warwick.ac.uk\\[30pt]

Published in {\bf Nonlinearity}, \href{http://dx.doi.org/10.1088/0951-7715/27/5/859}{24(5):859-895} (Open Access)
%\today 
\\[30pt]
\end{center}

\begin{abstract}
A transition state for a Hamiltonian system is a closed, invariant, oriented, codimension-2 submanifold of an energy-level that can be spanned by two compact codimension-1 surfaces of unidirectional flux whose union, called a dividing surface, locally separates the energy-level into two components and has no local recrossings.
For this to happen robustly to all smooth perturbations, the transition state must be normally hyperbolic.
The dividing surface then has locally minimal geometric flux through it, giving an upper bound on the rate of transport in either direction.

Transition states diffeomorphic to $\mathbb S^{2m-3}$ are known to exist for energies just above any index-1 critical point of a Hamiltonian of $m$ degrees of freedom, with dividing surfaces $\mathbb S^{2m-2}$. The question addressed here is what qualitative changes in the transition state, and consequently the dividing surface, may occur as the energy or other parameters are varied? 
We find that there is a class of systems for which the transition state becomes singular and then regains normal hyperbolicity with a change in diffeomorphism class. These are Morse bifurcations.

Various examples are considered. Firstly, some simple examples in which transition states connect or disconnect, and the dividing surface may become a torus or other. Then, we show how sequences of Morse bifurcations producing various interesting forms of transition state and dividing surface are present in reacting systems, by considering a hypothetical class of bimolecular reactions in gas phase. 
\end{abstract}

%{\small Keywords: Hamiltonian transport, transition state theory, dividing surface, locally minimal flux, Morse bifurcation, bimolecular reactions.}

%\noindent MSC numbers: 37J20, 37J05.

%-------------------------------------------------------------------------------

\newpage

\section{Introduction}

Many physical problems can be reduced to considering the rate of transport of volume between different regions of the state space\footnote{The state space of Hamiltonian systems is often referred to as phase space.} of a Hamiltonian system. Such problems arise when studying chemical reaction rates \cite{Keck1967}, capture and escape processes in celestial mechanics \cite{jaffe2002statistical},  particle escape from charged particle storage rings \cite{Pinheiro2008}, and displacement of defects in solids \cite{Toller1985}. 
The state space volume of interest in these problems is that occupied by a classical ensemble of a low-dimensional dynamical system derived from a ``kinetic approximation'', i.e. a great number of different, independent realisations of the same Hamiltonian system.
Since energy is conserved by the systems, it is natural to consider the rate at which energy-surface volume crosses between regions as a function of the energy. 

Both Marcelin \cite[Ch.2]{Marcelin} and later Wigner \cite{wigner1937calculation} realised that a natural way to study the rate of transport is to place a \textit{dividing surface}\footnote{Nb: `surface' means an (embedded) submanifold of codimension-1.} between the regions of interest and consider the flux of energy-surface volume through this surface. It must divide the whole energy level into two distinct regions, such that no trajectory can pass from one region to the other without crossing it. In this case the flux of energy-surface volume through the dividing surface in one direction gives an upper bound on the rate of transport in this direction. In order to obtain a useful upper bound, Wigner \cite{wigner1937calculation} proposed varying the dividing surface to obtain one with (locally) minimal flux. This variational definition does not determine a unique dividing surface, because one can flow any dividing surface along the vector field and obtain another, but the minimum flux is well defined.

A lot of the initial and subsequent work was done with reaction rates in mind and led to variational transition state theory (nicely reviewed and put into context by Pollak and Talkner \cite{pollak2005reaction}). In chemical reactions, one finds the basic transport scenario, which we refer to as ``flux over a saddle''. Here, the two regions of interest represent reactants and products and have an index-1 critical point of the Hamiltonian function between them, hence the name.
For small energies above that at the critical point, the energy level is diffeomorphic to a spherical cylinder and has a bottleneck about the critical point. The dividing surface is therefore placed in the bottleneck region, thus allowing for a local analysis. It can be decomposed into two parts with oppositely directed flows, each of which spans\footnote{We say that a manifold $S$ spans $N$ if the latter is its boundary, $N = \del S$.} a normally hyperbolic\footnote{An invariant submanifold whose linearised normal dynamics are hyperbolic and dominate those tangent to it, see Appendix~\ref{nhms}.}, closed, codimension-2 submanifold, the \textit{transition state}. 
The general picture was given by Wigner \cite{Wigner1938}. Explicit dividing surfaces spanning a family of unstable periodic orbits were constructed by Pechukas in the case of 2 degrees of freedom \cite{Pechukas1976}. Toller et al.~\cite{Toller1985} generalised this to arbitrary degrees of freedom, without however restricting to energy-levels. These were then considered by MacKay \cite{MacKay1990}, though for more than 2 degrees of freedom it left too much to the imagination of the reader, which we rectify here. The construction was rediscovered by Wiggins et al. \cite{wiggins2001impenetrable}, but without explanation of how to pass from the normal form approximations to the full problem.

We are interested in what happens when the energy is increased in the basic scenario and the bottleneck opens up, and also in more general transport problems. 
These scenarios will not allow for a local analysis about some critical point, and will involve global dividing surfaces and transition states that may bifurcate leading to other scenarios.
For two degree of freedom systems, the transition states are hyperbolic periodic orbits. The bifurcations of such objects are well known and can be found in the literature, e.g. Meyer et al. \cite[Ch.11]{meyer2009introduction} and Han\ss mann \cite[Ch.3]{hanssmann2007local}.
Thus, there have been various studies of the bifurcations of periodic orbit transition states in the transition state theory literature, see e.g.~\cite{Pollak1978,Davis1987}. 
For higher degrees of freedom, the transition states are normally hyperbolic submanifolds of higher dimensions, e.g.~in the basic scenario we have a $\left(2m-3\right)$-sphere. The bifurcation of such submanifolds is not much explored or understood, even though there have been a few recent studies \cite{li2009bifurcation,Teramoto2011,Allahem2012}. It has been considered a hard problem because bifurcation necessitates loss of normal hyperbolicity and for submanifolds of dimension 3 or greater there are many possible consequences.
However, what has been overlooked and will be considered here, is that there is a large class of systems for which normal hyperbolicity is regained immediately: the transition state develops singularities at a critical energy, i.e. points at which the manifold structure fails, 
but regains smoothness and normal hyperbolicity with a change in diffeomorphism type\footnote{This possibility was pointed out by MacKay in the course of a workshop in Bristol in 2009. Some examples have been reported in Maugui\`ere et al. \cite{Mauguiere2013}, a paper that appeared in preprint form at about the same time as ours.}.
The dividing surfaces also undergo a similar bifurcation. 
These are \textit{Morse bifurcations}. 

Section \ref{review} is an introduction and review of the dividing surface method for transport problems. After giving the basic definitions, we consider surfaces within an energy level that have locally minimal flux of energy-surface volume across them. This aims to be a (fairly) complete review of the fragmented literature found partly in chemistry and partly in dynamical systems. We follow the introduction to flux of energy-surface volume by MacKay \cite{MacKay1990}. Our main sources for the surfaces of locally minimal flux are Wigner \cite{wigner1937calculation}, Keck \cite{Keck1967}, Toller et al.~\cite{Toller1985} and MacKay \cite{mackay1991variational, MacKay1994}. Next, we consider the basic transport scenario, that of flux over a saddle. 
The dividing surface method was thought up with this scenario in mind, so here we find all the basic ingredients.
In Section \ref{transition}, we give the general definitions and outline the properties of the submanifolds that are needed for the dividing surface method. These generalise the ideas found in the basic scenario.
Next, Section \ref{variation} gives a neat method to construct dividing surfaces about transition states, which uses the normal hyperbolicity of the latter. This is a necessary step in order to consider more general transport problems.
Readers who are already well versed in the subject may skip ahead to Section \ref{bifurcations} which is where the main contribution of this paper starts. This contains a discussion on the possible bifurcations of the dividing surfaces and transition states, and then gives some examples of Morse bifurcations. The examples initially have 2 degrees of freedom, such that the Morse bifurcations stand out.
However, Morse theory, which is briefly reviewed in Appendix \ref{morse}, applies to manifolds of all dimensions, so there is no limitation to the number of degrees of freedom one can consider.
The central role played by Morse bifurcations in transport problems is seen in the application given in Section \ref{bimolecular}. Here, we find interesting sequences of Morse bifurcations and therefore transition states and dividing surfaces, in prototype planar bimolecular reactions.
In Section \ref{conclude}, we conclude our study of Morse bifurcations and discuss the remaining open questions for the dividing surface method.
Following are three appendices, one on how the flux as a function of energy changes at Morse bifurcations, another on approximating normally hyperbolic (symplectic) submanifolds and finally one on Morse theory.

%-----------------------------------------------------------------

\section{Review}
\label{review}

In order to consider the flow of state space volume between different regions of a Hamiltonian system, we start with some definitions and terminology in Section \ref{divSurfs}. Then, we consider codimension-1 submanifolds of an energy level with locally minimal flux through them, and finally we review the basic transport scenario of flux over a saddle.

%--------------------------------------------------------------------------------------------------

\subsection{Flux of energy-surface volume across a surface}
\label{divSurfs}

We consider autonomous Hamiltonian systems with $m$ degrees of freedom $\left( M^{2m}, \omega, H \right)$. Here $M$ is a $2m$-dimensional differentiable manifold, $\omega$ a symplectic form\footnote{A closed, non-degenerate 2-form.} on it and $H$ a smooth function from $M$ to $\mathbb{R}$.

The Hamiltonian vector field $X_H$ is defined by 
$$i_{X_H} \omega = \rmd H,$$
where the interior product $i$ contracts a vector field $X$ and a $k$-form $\alpha$ to give a $\left( k-1 \right)$-form, by
$$ i_X \alpha \left( \nu_1, \cdots , \nu_{k-1} \right) = \alpha \left( X, \nu_1, \cdots , \nu_{k-1} \right),$$
for any $\left( k-1 \right)$ vectors $\nu_i$.

We are interested in the flow of state space (or Liouville) volume
$$\Omega = \frac{1}{m!}\omega^m.$$
In local Darboux coordinates $z = \left(q,p \right)$, $\omega = \sum_{j=1}^{m} \rmd q_j \wedge \rmd p_j$ and $\Omega = \rmd q_1 \wedge \rmd p_1 \wedge \cdots \wedge \rmd q_m \wedge \rmd p_m$.

Since $H$ is conserved by the flow, we can restrict attention to the energy levels $M_E = H^{-1} \left( E \right)$ and so consider energy-surface volume, $\Omega_E$, a $\left(2m-1 \right)$-form on regular $M_E$ (i.e.~$\rmd H$ is nowhere zero on $M_E$; $E \in \mathbb{R}$ is said to be a regular value of $H$), given by the relation 
$$\rmd H \wedge \Omega_E = \Omega.$$

We can now define the (energy-surface) \textit{flux form}
$$\phi_E = i_{X_{H}}\Omega_E.$$
At times, we will also need to consider the state space flux form simply given by
$$\phi = i_{X_H} \Omega .$$
Finally, the flux of energy-surface volume through a codimension-1, oriented submanifold $S_E$ of an energy level is the integral 
$$\phi_E \left( S_E \right) = \int_{S_E} \phi_E.$$
\begin{rmk}
We have chosen to use differential forms, but the equation is the same as the usual flux equation $\phi_E \left( S_E \right) = \int_{S_E} \left( X_H \cdot n \right) \text{vol}_{S_E} $, where $n$ is the unit normal to the surface (with respect to a Riemannian structure) and $\text{vol}_{S_E} $ an infinitesimal volume element, cf. Keck \cite{Keck1967}. This is seen by rewriting the flux form as $\phi_E = i_{X_{H}}\Omega_E = \left( X_H \cdot n \right) i_{n}\Omega_E$ and defining $\text{vol}_{S_E} := i_{n}\Omega_E$. See Frankel \cite[\S2.9b]{Frankel2004} for more details. Use of a Riemannian structure however is an artificial crutch.
\end{rmk}

Evaluating this integral can be simplified by noting that for regular energy levels $M_E$, the flux form reduces to $\phi_E = \omega^{m-1} / \left( m - 1 \right) !$ \cite{Toller1985}.
The symplectic form $\omega$ can, modulo global topological obstructions, be written as $\omega = - \rmd \lambda$, for some 1-form $\lambda$, called the \textit{action form}. In local Darboux coordinates one can take $\lambda = \sum_{k=1}^{m} p_k \rmd q_k $.
This allows us to write 
$$\phi_E = - \rmd \Lambda,$$
with the \textit{``generalised'' action form} $\Lambda = \frac{1}{\left(m-1\right)!} \lambda \wedge \omega^{m-2}$.
Then, we can use Stokes' theorem to obtain
\begin{equation*}
\phi_E \left( S_E \right) = \int_{S_E} \phi_E = - \int_{\partial S_E} \Lambda.
\end{equation*}
We state this as
\begin{thm}[\cite{MacKay1990}]
\label{fluxStoke}
The flux of energy-surface volume through an oriented codimension-1 submanifold
of an energy-level is minus the generalised action integral over its boundary.
\end{thm}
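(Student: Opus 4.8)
The plan is to exhibit the energy-surface flux form as an exact form on the energy level $M_E$ and then apply Stokes' theorem, turning the integral over $S_E$ into a boundary integral over $\partial S_E$. Most of the required identities have already been assembled above; the work is to order them correctly and to isolate the one point where equality holds only after restriction to $M_E$.

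First I would recover the reduction of the flux form. Because $i_{X_H}$ is an antiderivation and $\omega$ has even degree,
\[
i_{X_H}\Omega = \frac{1}{m!}\,i_{X_H}\omega^m = \frac{1}{(m-1)!}\,(i_{X_H}\omega)\wedge\omega^{m-1} = \frac{1}{(m-1)!}\,\rmd H\wedge\omega^{m-1}.
\]
Contracting instead the defining relation $\Omega = \rmd H\wedge\Omega_E$ and using $i_{X_H}\rmd H = \symp{X_H}{X_H} = 0$ expresses $i_{X_H}\Omega$ as $\pm\,\rmd H\wedge\phi_E$. Comparing the two expressions fixes $\phi_E$ up to forms proportional to $\rmd H$; since $\rmd H$ pulls back to zero on a regular level, the two sides agree on $M_E$, giving $\phi_E = \omega^{m-1}/(m-1)!$ there (the overall sign being set by the orientation convention on $\Omega_E$), which recovers the reduction of Toller et al.

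Next I would show $\phi_E$ is exact on $M_E$. Writing $\omega = -\rmd\lambda$ and using $\rmd\omega = 0$, so that $\rmd(\omega^{m-2}) = 0$, the antiderivation property of $\rmd$ gives
\[
\rmd\Lambda = \frac{1}{(m-1)!}\,\rmd\bigl(\lambda\wedge\omega^{m-2}\bigr) = \frac{1}{(m-1)!}\,(\rmd\lambda)\wedge\omega^{m-2} = -\frac{1}{(m-1)!}\,\omega^{m-1} = -\phi_E.
\]
Hence $\phi_E = -\rmd\Lambda$, and Stokes' theorem yields $\phi_E(S_E) = \int_{S_E}\phi_E = -\int_{S_E}\rmd\Lambda = -\int_{\partial S_E}\Lambda$, once $\partial S_E$ carries the boundary orientation induced from $S_E$.

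The sign and degree bookkeeping above is routine; the genuine subtleties are two. The first is that the flux reduction is an identity only after pullback to the energy level, so the entire computation must be read on $M_E$ and the orientation of $\partial S_E$ pinned down consistently for the final sign to be meaningful. The second, which is the real obstruction to a clean global statement, is the existence of an action form $\lambda$ with $\omega = -\rmd\lambda$: the generalised action form $\Lambda$ is defined globally only when $\omega$ is exact, so in the presence of a nonzero symplectic cohomology class one must either work in a neighbourhood on which $\lambda$ exists or restrict to exact symplectic manifolds such as cotangent bundles, which covers the applications of interest here.
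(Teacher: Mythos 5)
Your argument is correct and follows essentially the same route as the paper: reduce the flux form to $\phi_E=\omega^{m-1}/(m-1)!$ on regular energy levels, observe that this equals $-\rmd\Lambda$ once $\omega=-\rmd\lambda$, and apply Stokes' theorem. The only difference is that you derive the Toller et al.\ reduction (and flag that it holds only after pullback to $M_E$, together with the global-exactness caveat on $\lambda$) where the paper simply cites it; this is a welcome addition but not a different proof.
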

In general the flux form $\phi_E$ evaluated on the tangent space to an oriented surface is not single-signed. The flow may cross in the positive direction in some parts of $S_E$ and the other way on other parts. Where we want to emphasise this we refer to the flux integral as ``net flux''.

%--------------------------------------------------------------------------------------------------

\subsection{Surfaces of locally minimal flux}
\label{sectMin}

Consider trajectories crossing from a region $L_E$ to another $R_E$, within an energy level $M_E$, and imagine a dividing surface somewhere between $L_E$ and $R_E$, which is crossed by all these trajectories. The positive part of the flux of energy-surface volume through this surface gives an upper bound on the rate of transport.
In order to obtain a useful bound on the transport, we vary the arbitrary dividing surface and replace it with a dividing surface with (close to) locally minimal flux in the chosen direction. 
Ideally, we would like to find a dividing surface that is crossed once and only once by each trajectory crossing from $L_E$ to $R_E$, which would have minimal flux through it and would give the actual rate of transport.
Hence, we have a variational definition of the (ideal) dividing surface.
However, this condition does not necessarily define a unique dividing surface because one can flow any dividing surface along the vector field and obtain another. Furthermore the minimal dividing surfaces might be hard to find in practice.

There is also a flow in the opposite direction, from $R_E$ to $L_E$, for which we could choose to consider the previous dividing surface, extended to cut all trajectories from $R_E$ to $L_E$.
In order to divide the energy-level and therefore separate the two regions, we expect the surface to be closed (i.e. without boundary), otherwise trajectories could avoid it but still cross between regions.
Thus, the net flux through the dividing surface will be zero, by Theorem \ref{fluxStoke}, and the flux in the two separate directions equal.

We will now consider which surfaces have locally minimal flux, in either direction, and the properties of such surfaces.
We follow the approach of MacKay \cite{MacKay1994}, which lends itself well to the general scenarios that we want to consider.

Invariant surfaces have zero flux in both directions. This is clearly a minimum value and when closed these surfaces are useful for transport problems because they separate different regions.
We are interested however in surfaces that have locally minimal but not necessarily zero flux in each direction.

For a general closed, oriented surface $S_E$, one can decompose it as the union $S_E^+ \cup S_E^-$ of the parts of positive and negative flux\footnote{By ``positive'' we mean ``non-negative'', and by ``negative'' we mean ``non-positive'' but the terminology is too cumbersome.}, with common boundary. Then we can apply the following
consequence of the variational principle for odd dimensional invariant submanifolds of an energy level.
\begin{cor}[\cite{mackay1991variational}]
\label{statFlux}
A  codimension-1 submanifold $S_E^i$
of an energy level $M_E$ has stationary (net) flux of energy-surface volume with respect to variations including of its boundary $\partial S_E^i$ within $M_E$ if and only if $\partial S_E^i$ is invariant under the Hamiltonian flow.
\end{cor}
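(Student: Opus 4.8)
The plan is to exploit Theorem~\ref{fluxStoke}, which already reduces the net flux through $S_E^i$ to a boundary integral, $\phi_E(S_E^i) = -\int_{\del S_E^i}\Lambda$. Consequently the flux depends only on the oriented codimension-2 submanifold $N \equal \del S_E^i$ and not on $S_E^i$ itself; in particular interior variations leave it fixed, and the first variation problem becomes entirely one about variations of $N$ inside $M_E$. I would realise such a variation as $N_t = \psi_t(N)$ for an isotopy $\psi_t$ of $M_E$ generated by a vector field $V$ along $N$ tangent to $M_E$, with $\psi_0 = \mathrm{id}$; every admissible boundary variation arises this way and extends to a variation of $S_E^i$ whose interior contribution is irrelevant.

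First I would compute the first variation by the transport (Lie-derivative) formula together with Cartan's identity:
\begin{equation*}
\left.\frac{\rmd}{\rmd t}\right|_{t=0}\!\int_{N_t}\Lambda = \int_N \lieL_V \Lambda = \int_N \left( \rmd\, i_V \Lambda + i_V\, \rmd\Lambda \right).
\end{equation*}
Because $N = \del S_E^i$ is itself closed ($\del\del = \emptyset$), Stokes' theorem kills the exact term, $\int_N \rmd\, i_V\Lambda = 0$, and using $\rmd\Lambda = -\phi_E$ the first variation of the flux is
\begin{equation*}
\delta\phi_E(S_E^i) = -\left.\frac{\rmd}{\rmd t}\right|_{t=0}\!\int_{N_t}\Lambda = \int_N i_V \phi_E .
\end{equation*}
Thus stationarity is equivalent to $\int_N i_V\phi_E = 0$ for every admissible $V$.

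The heart of the argument is a pointwise analysis of $i_V\phi_E$ restricted to $N$, which I would carry out using $\phi_E = i_{X_H}\Omega_E$ and the fact that $\Omega_E$ is a volume form on $M_E$. For a basis $w_1,\dots,w_{2m-3}$ of $T_pN$,
\begin{equation*}
(i_V\phi_E)(w_1,\dots,w_{2m-3}) = \Omega_E(X_H, V, w_1,\dots,w_{2m-3}),
\end{equation*}
which is nonzero exactly when $X_H, V, w_1,\dots,w_{2m-3}$ span $T_pM_E$. This settles both directions. If $N$ is invariant, then $X_H \in T_pN = \mathrm{span}(w_i)$ at every $p$, so the $2m-1$ arguments are linearly dependent, the integrand vanishes identically, and the flux is stationary. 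Conversely, if $N$ is not invariant I pick $p_0$ with $X_H \notin T_{p_0}N$; then $X_H, w_1,\dots,w_{2m-3}$ are independent, and choosing $V(p_0)$ off the hyperplane they span makes the integrand nonzero at $p_0$.

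The final step, and the only place where a little care is needed, is upgrading this pointwise nonvanishing to a nonzero integral. I would do so by continuity: writing $i_V\phi_E|_N = f\,\mathrm{vol}_N$ with $f(p_0)\neq 0$, the function $f$ keeps its sign on a neighbourhood of $p_0$, so multiplying $V$ by a nonnegative bump $\chi$ supported there with $\chi(p_0)>0$ yields an admissible variation with $\int_N i_{\chi V}\phi_E = \int_N \chi f\,\mathrm{vol}_N \neq 0$, contradicting stationarity. I expect the genuine obstacles to be purely bookkeeping — verifying that $V$ may be taken tangent to $M_E$ so the variation stays on the energy level, and that all boundary variations are captured by such isotopies — rather than anything analytically hard; the kernel computation via $\Omega_E$ is what makes the equivalence transparent.
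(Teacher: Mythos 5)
Your proof is correct. The paper gives no proof of Corollary~\ref{statFlux} itself --- it is quoted from \cite{mackay1991variational} --- and your argument (reduce the flux to the boundary integral of $\Lambda$ via Theorem~\ref{fluxStoke}, compute the first variation with Cartan's formula so that on the closed boundary only $\int_N i_V \phi_E = \int_N i_V i_{X_H}\Omega_E$ survives, then read off the equivalence with tangency of $X_H$ from nondegeneracy of $\Omega_E$ together with a bump-function localisation) is essentially the argument of that reference, so there is nothing to fault and no genuinely different route to report.
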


These stationary values of the flux are however not minima, since
there exist deformations that both increase and decrease the flux.
A nice way of seeing this is to consider deforming $\del S_E^i$ to a helix of small pitch around a part of $\del S_E^i$ with $X_H \neq 0$. Remembering that $X_H$ must be tangent to $\del S_E^i$ for stationary values, we find that the flux increases if the pitch has one sign and decreases if it has the other \cite{MacKay1994}.

We therefore consider (unsigned) \textit{geometric flux} of energy-surface volume though $S_E$, denoted $\Phi_E \left( S_E \right) $. For this, we define the (energy-surface) \textit{flux density} $\lvert \phi_E \rvert$, by
\begin{equation*}
\vert \phi_E \vert \left( \nu_1, ..., \nu_{2m-2} \right) = \vert \phi_E \left( \nu_1, ..., \nu_{2m-2} \right) \vert, \quad \forall \nu_i \in T_z M_E,
\end{equation*}
and integrate it over $S_E$. For a brief introduction to densities and density bundles see e.g. Lee \cite[Ch.14]{lee2003smooth}.

For an arbitrary dividing surface, $\Phi_E \left( S_E \right) \geq \lvert \phi_E \left( S_E \right) \rvert$, with equality if and only if the flux is unidirectional through the surface.
By unidirectional, we mean single-signed, as this occurs when the Hamiltonian vector field $X_H$ is unidirectional across $S_E$.
Thus, decomposing our closed dividing surface as $S_E = S_E^+ \cup S_E^-$, where $S_E^\pm$ are not necessarily connected, gives 
$$\Phi_E \left(S_E \right) = \phi_E \left(S_E^+ \right) - \phi_E \left(S_E^- \right) = 2 \phi_E \left(S_E^+ \right) = -2 \phi_E (S_E^-),$$
since the flux is equal and opposite through $S_E^+$ and $S_E^-$.
Asking for minimal flux in either direction is therefore the same as asking for minimal geometric flux. Furthermore, closed surfaces either have both stationary net flux and stationary geometric flux or neither.

The idea is therefore to consider
dividing surfaces $S_E$ that are the union of two, not necessarily connected, surfaces $S_E^\pm$ of unidirectional flux that span a closed, invariant, codimension-2 orientable submanifold $N_E$ in the energy level, $M_E$.
Then $S_E^\pm$ have stationary flux, by Corollary \ref{statFlux}, and the geometric flux is
$$ \Phi_E \left(S_E \right) = - 2 \phi_E \left( S_E^- \right) =- 2 \int_{S_E^-} \phi_E = 2 \int_{N_E} \Lambda ,$$
where we have chosen the orientation of $N_E$ so that it is $\del S_E^-$.
To show that this situation leads to locally minimal geometric flux, we first need the following 
\begin{defn}
An oriented surface $S$ has \textit{local recrossings} if for all $\varepsilon > 0$ there exists an orbit segment $z(t)$, $t_0 \leq t \leq t_1$, that intersects $S$ in opposite directions at times $t_0$ and $t_1$, and for which 
$$0<d(z(t),S)<\varepsilon \;\; \text{  for all  } \;\; t \in \left( t_0 , t_1 \right),$$
where $d$ denotes distance.
\end{defn}
\begin{figure}
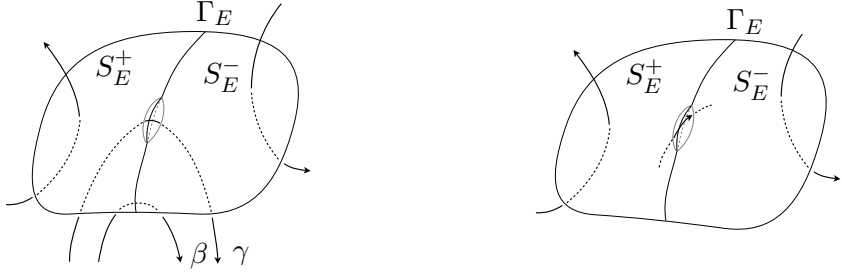

%\psfrag{l}{$l\left(t \right)$}
\begin{minipage}[c]{1\textwidth} \begin{center}
\begin{minipage}[c]{0.4\textwidth}
	\def\JPicScale{0.3}
	\input{figures/figure1a.pst}
\end{minipage}
\hspace{0.02\textwidth}
\begin{minipage}[c]{0.4\textwidth}
	\def\JPicScale{0.3}
	\input{figures/figure1b.pst}
\end{minipage}
\end{center} \end{minipage}
\caption{Proof of Theorem~\ref{thmLocMin}. Left: 
$\Phi \left(S_E \right)$ can be decreased if $\Gamma_E$ is not invariant under $X_H$, shown here by trajectories $\beta$ and $\gamma$ touching and crossing $S_E$, generated by a vector field $X_H$ not tangent to $\Gamma_E$, together with a deformation decreasing $\Phi \left(S_E \right)$. Right: $\Phi \left(S_E \right)$ can be decreased if a nearby orbit intersects $S_E$ twice in opposite directions.
}
\label{proofFig}
\end{figure}
Then, we are ready for 
\begin{thm}[\cite{MacKay1994}]
\label{thmLocMin}
A  codimension-1 orientable submanifold $S_E$ of an energy-level has locally minimal geometric flux if and only if it can be decomposed into surfaces $S^i_E$ of unidirectional stationary flux and $S_E$ has no local recrossings.
\end{thm}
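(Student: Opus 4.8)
The plan is to reduce the problem to the net flux through the positive part $S_E^+$ and to exploit that $\phi_E$ is constant along orbits. For closed oriented $S_E$ the net flux vanishes (Theorem~\ref{fluxStoke} with $\del S_E=\emptyset$), so decomposing by the sign of $\phi_E$ on $TS_E$, with turning set $\Gamma_E=\del S_E^+$ (the locus where $X_H$ is tangent to $S_E$), gives
\begin{equation*}
\Phi_E(S_E)=\phi_E(S_E^+)-\phi_E(S_E^-)=2\,\phi_E(S_E^+)=-2\int_{\Gamma_E}\Lambda .
\end{equation*}
By Corollary~\ref{statFlux}, a decomposition into \emph{stationary} unidirectional pieces exists if and only if $\Gamma_E$ is invariant. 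Throughout I would use the single structural fact $i_{X_H}\phi_E=i_{X_H}i_{X_H}\Omega_E=0$, whence $\lieL_{X_H}\phi_E=0$: the flux form is flow-invariant, its integral over a cross-section records the flux of the bundle of orbits meeting that section, and every transverse crossing of a surface contributes the full unsigned orbit weight to $\int|\phi_E|$.

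For the forward implication I would argue by contraposition, mirroring the two panels of Figure~\ref{proofFig}. If $\Gamma_E$ is not invariant, then $X_H$ is not tangent to $\Gamma_E$ somewhere; the first variation of the geometric flux under a normal deformation $v$ is
\begin{equation*}
\frac{d}{ds}\Big|_{0}\Phi_E=2\int_{\Gamma_E} i_v\phi_E=2\int_{\Gamma_E}\Omega_E(X_H,v,\,\cdot\,),
\end{equation*}
which (consistently with Corollary~\ref{statFlux}) is non-zero for a suitable $v$ exactly when $X_H\notin T\Gamma_E$, so one choice of sign strictly lowers $\Phi_E$. If instead $\Gamma_E$ is invariant but a local recrossing exists, I would take the recrossing segment $z(t)$, $t_0\le t\le t_1$, and deform $S_E$ inside the thin flow-tube around it, sliding the positive patch near $z(t_0)$ along the flow onto the oppositely-signed patch near $z(t_1)$; since the segment stays $\varepsilon$-close to $S_E$, this cancels the two crossings and strictly removes their positive contribution to $\int|\phi_E|$.

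For the converse, assume $\Gamma_E$ invariant and no local recrossings. Then near $S_E$, away from $\Gamma_E$, each orbit meets $S_E$ at most once (two crossings would force an intermediate opposite crossing, i.e.\ a recrossing), so $S_E\setminus\Gamma_E$ is a cross-section of a tube of orbits and $\Phi_E(S_E)$ is the total unsigned flux of the bundle crossing it. Any small deformation $\tilde S_E$ still locally separates the two sides, so each of these orbits crosses $\tilde S_E$ at least once; transporting $\phi_E$ along the tube via $\lieL_{X_H}\phi_E=0$, each crossing carries the same orbit weight, whence the crossing multiplicity is at least one and
\begin{equation*}
\Phi_E(\tilde S_E)=\int_{\tilde S_E}|\phi_E|\ \ge\ \int_{S_E}|\phi_E|=\Phi_E(S_E).
\end{equation*}
Invariance of $\Gamma_E$ is what makes the tube and its bundle well defined up to the tangency locus (which carries no flux), so $S_E$ is locally minimal.

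The main obstacle is the degeneracy at the turning set $\Gamma_E$, where the flow becomes tangent to $S_E$ and the ``crosses exactly once'' tube structure breaks down; making the multiplicity comparison of the converse rigorous would require a collar or normal-form description of orbits along the invariant $\Gamma_E$, together with a check that a nearby surface cannot shed orbits from the bundle. The parallel delicate point in the recrossing case is to extract the cancelling deformation honestly from the bare $\varepsilon$-closeness in the definition of local recrossing, rather than from a clean transversality hypothesis on the recrossing segment.
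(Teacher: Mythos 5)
Your proposal is correct and follows essentially the same route as the paper's proof: the decomposition of $S_E$ along the tangency locus $\Gamma_E$, Corollary~\ref{statFlux} for stationarity, the two deformation arguments of \figurename~\ref{proofFig} for the forward direction, and the ``orbits sneaking back'' multiplicity argument for the converse. The only differences are cosmetic --- you argue the forward direction by contraposition, make the first variation and the flow-invariance $\lieL_{X_H}\phi_E=0$ explicit rather than citing them implicitly, and assume $S_E$ closed where the paper also tracks a possible invariant boundary $\del S_E$ --- and you correctly flag the same delicate points near $\Gamma_E$ that the paper's own proof treats informally.
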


\begin{proof}
Without loss of generality, we represent $S_E$ as the zero-set of some smooth function $G:M_E \rightarrow \mathbb{R}$ with $\rmd G \neq 0$ on $S_E$, so $S_E = \lbrace z \in M_E \vert G \left( z \right) = 0 \rbrace$.
To divide $S_E$ into unidirectional parts $S_E = \cup_i S_E^i$, we consider
$$\Gamma_E = \lbrace z \in M_E \vert \rmd G \left( X_H \right) = 0 \rbrace,$$
which gives $S_E \backslash \Gamma_E$ composed of parts $\hat{S}_E^i$. These are enlarged to $S_E^i$ including the invariant parts such that their union is the whole of $S_E$.

Now, assume that $S_E$ has locally minimal geometric flux, then it has stationary flux and its boundary $\del S_E$ is invariant, by Corollary \ref{statFlux}. The vector field $X_H$ is tangent to $\del S_E^i \backslash \del S_E$, otherwise we could deform $S_E$ and decrease $\Phi_E \left( S_E \right)$, see \figurename~\ref{proofFig}. Therefore $X_H$ is everywhere tangent to $\del S_E^i$, meaning that they are invariant and that $S_E^i$ have stationary flux, again by Corollary \ref{statFlux}. If $S_E^i$ has local recrossings, which must be near $\Gamma_E$, we can decrease $\Phi_E \left( S_E \right)$ by lifting $S_E$ locally, contradicting the assumption, see \figurename~\ref{proofFig}.

Conversely, assume that $S_E$ is the union of surfaces $S_E^i$ of unidirectional stationary flux. Then the flux through $S_E$ is the sum of those through $S_E^i$ and thus stationary, giving stationary geometric flux. The only places where a small deformation would make a difference to $\Phi_E \left( S_E \right)$ are near $\Gamma_E$. However, if there are no local recrossings, lifting $S_E$ near $\Gamma_E$ cannot decrease $\Phi_E \left( S_E \right)$. This can be shown by contradiction: if $\Phi_E \left( S_E \right)$ can be decreased by a small change, then there are points on $S_E$ whose orbit sneaks back to $S_E$.
\end{proof}

%---------------------------------------------------------------------------------------------

%\newpage
\subsection{Basic transport scenario: flux over a saddle}
\label{reactDyn}

This is the case of an autonomous Hamiltonian system $\left(M^{2m},\omega,H \right)$ with a non-degenerate index-1 critical point $\bar{z}_1$ of $H$.
It provides an example of a closed, invariant, codimension-2 submanifold of the energy levels spanned by two codimension-1 submanifolds of unidirectional flux with no local recrossings.

About $\bar{z}_1$, we have the Williamson normal form \cite{Williamson1936}
$$H \left(z \right) = E_{a} + \frac{\alpha}{2} \left(y^2 - x^2 \right) + \sum_{j=1}^{m-1} \frac{\beta_{j}}{2} \left(p_{j}^2 + q_{j}^2 \right) + H_n \left(z \right), \quad H_n \left(z \right) = \order{3},$$
where we ask that $\alpha$, $\beta_j > 0$, and $z = \left(x,q,y,p \right)$ are canonical coordinates with $ \left(x,y \right) $ the hyperbolic degree of freedom and $ \left(q,p \right) = \left(q_1,...,q_{m-1},p_1,...,p_{m-1} \right) $ the elliptic ones \cite[App.6]{arnold1989mathematical}. In the chemistry literature these are called the reaction and bath coordinates, respectively. Note that we do not need the higher-order normal forms found in some of the transition state theory literature, see e.g. Wiggins et al. \cite{wiggins2001impenetrable,uzer2002geometry}.

We now consider the topology of the energy levels about $\bar{z}_1$. First we consider them to second order, where they are given by
\begin{equation*}
M_{E} = H_2^{-1}\left( E \right) = \left\lbrace z \in M \bigg| \frac{\alpha}{2} \left( y^2 - x^2 \right) + \sum_{j=1}^{m-1} \frac{\beta_{j}}{2} \left(p_{j}^2 + q_{j}^2 \right) = \Delta E \right\rbrace ,
\end{equation*}
with $ \Delta E = E - E_{a} $. 
Therefore, in a neighbourhood of $\bar{z}_1$, we can write the energy level as the union of the graphs of two functions
$$x_\pm = \pm \sqrt{ \frac{2}{\alpha} \left( \frac{\alpha}{2} y^2 + \sum_{j=1}^{m-1} \frac{\beta_{j}}{2} \left(p_{j}^2 + q_{j}^2 \right) - \Delta E \right) },$$
over $\mathbb{R}^{2m-1}$. 
For $\Delta E < 0$,
$M_E$ is diffeomorphic to two copies of $\mathbb{R}^{2m-1}$, whereas for $\Delta E > 0$, the two disjoint regions connect and $M_E$ is diffeomorphic to $\mathbb{S}^{2m-2} \times \mathbb{R}$. This is a standard Morse surgery, see Theorem \ref{thmB} in Appendix \ref{morse}.
An important feature of the energy levels is the presence of a ``bottleneck'' about $z_1$, which opens up as the energy is increased from $\Delta E = 0$.
The two regions on either side of the critical point are the ones between which we want to study the transport of energy-surface volume.
For the topology of the energy-levels of the full system, we appeal to the Morse lemma, see Appendix \ref{morse}. This tells us that there are coordinates about the critical point, $z_1$, for which the Hamiltonian function is quadratic. Thus, the previous study of the quadratic case is sufficient. However, the transformation giving the Morse lemma coordinates is not necessarily symplectic. Therefore, whilst these coordinates can be used to study the topology of $M_E$, they cannot be used to study the dynamics without losing the simple expression of the Hamiltonian nature of the system.

\begin{rmk}
Note that considering a system with a saddle$\times$centre$\times \cdots \times$centre equilibrium, as is often stated, is not actually the same as considering an index-1 critical point of the Hamiltonian function for general Hamiltonian systems. We could have for example one unstable dimension and an arbitrary odd index, e.g. three with $\beta_1 < 0$ in the Williamson normal form, say.
Then $M_E$ does not separate for $\Delta E < 0$, i.e. the topology is different. However, for \textit{natural systems}\footnote{Systems for which the Hamiltonian function is the sum of kinetic and potential energy.} with positive-definite quadratic kinetic energy, these cases cannot arise, which is why the two situations are often confused.
\end{rmk}

We now find an invariant codimension-2 submanifold of the energy levels. 
The centre subspace $\hat{N} = \lbrace z \in M \vert x = y = 0 \rbrace$ of the linearised dynamics about $\bar{z}_1$ extends to a centre manifold $N$, which can locally be expressed in the form
$$\quad N = \lbrace z \in M \vert x = X \left(q,p \right), y = Y \left(q,p \right) \rbrace,$$
with the 1-jets of $X$ and $Y$ vanishing at $\bar{z}_1$.
Then $N_E = N \cap M_E$ is an invariant submanifold of the energy level.
$N_E$ is diffeomorphic to $\mathbb{S}^{2m-3}$. This is proved by 
using the Morse lemma as was done for the energy levels. The restriction of the Hamiltonian function to $N$ is
\begin{align*} 
H_N \left( q,p \right) &= E_{a} + \frac{\alpha}{2} \left(Y^2 - X^2 \right) + \sum_{j=1}^{m-1} \frac{\beta_{j}}{2} \left(p_{j}^2 + q_{j}^2 \right) + H_n \left( X,q,Y,p \right) \\
 &= E_{a} + \sum_{j=1}^{m-1} \frac{\beta_{j}}{2} \left(p_{j}^2 + q_{j}^2 \right) + \mathcal{O} \left(3 \right).
\end{align*}
Thus the origin, $\tilde{z}_1$, is a critical point of $H_N$ with index-$0$. 
Then by the Morse lemma, in a neighbourhood of $\tilde{z}_1$, we have $H_N \left( \tilde{z} \right) = E_a + \frac{1}{2} \left( y_1^2 + \cdots + y_{2m-2}^2 \right)$, so $ N_E = H_N^{-1} \left( E \right) \cong \mathbb{S}^{2m-3}$.
Finally, by Theorem \ref{thmA}, the diffeomorphism type is valid until the next critical value of $H_N$ (if one exists), and not just for small $\Delta E$, a proof of which can be found in Sacker \cite{Sacker}, and was already known to Conley and his students, see \cite{Easton1967, Conley1968, Sacker} and references therein.

We now have to show that $N_E$ can be spanned by surfaces $S_E^{\pm}$ of unidirectional flux. 
In a neighbourhood of $\bar{z}_1$, we can simply choose
$$S = \lbrace z \in M \vert G \left( z \right) = x - X \left(q,p \right) = 0 \rbrace ,$$
as done by Toller et al. \cite{Toller1985}, and intersect with $M_E$ to obtain $S_E$.
$S$ spans $N$, but we must check that it is the union of two surfaces of unidirectional flux. We decompose it into the parts $S^\pm$ with $y > Y \left(q,p \right)$ and $y < Y \left(q,p \right)$ and check that $ \rmd G \left( X_H \right) \ge 0 $ for $y > Y \left(q,p \right)$ and vice-versa. This will ensure that the halves of the dividing surface, $S_E^\pm$, are unidirectional, since the energy levels are invariant.
Firstly, we rewrite 
\begin{equation*}
\rmd G \left( X_H \right) = \lbrace G,H \rbrace = - \dot{G} \left(z \right),
\end{equation*}
then we find that
\begin{align*}
\dot{G} &= \dot{x} - DX \cdot \left(\dot{q}, \dot{p}\right) \notag\\
&= \alpha y + \partial_y H_n - DX \cdot \left(\dot{q}, \dot{p}\right).
\end{align*}
Now, the invariance of $N$, on which $x=X,$ $y=Y$, gives us that 
\begin{align*}
%\label{dN1}
\alpha Y + \partial_y H_n^* - DX \cdot \left(\dot{q}, \dot{p} \right)^* &= 0,
%\alpha X - \partial_x H_n^* - DY \left(\dot{q}, \dot{p}\right)^* &= 0,
%\label{dN2}
\end{align*}
where the $^*$ denotes that the equation is evaluated on $N$.
Subtracting the (first) invariance equation from the flux equation gives
\begin{align*}
\dot{G} &= \alpha \left( y - Y \right) + \partial_y H_n - \partial_y H_n^* - DX \cdot \left( \left(\dot{q}, \dot{p}\right) - \left(\dot{q}, \dot{p}\right)^* \right) \notag\\
&= \alpha \left( y - Y \right) + \partial_y H_n - \partial_y H_n^* - DX \cdot \left( \partial_p H_n - \partial_p H_n^*, - \partial_q H_n + \partial_q H_n^* \right),
\end{align*}
where, for a given choice of $\left( q,p \right)$, the terms evaluated on $N$ are constant.
In a neighbourhood of the critical point $\bar{z}_1$, the first term dominates the others and gives the sign, since $H_n$ denotes the higher order terms in the Hamiltonian function, and $X$ defines the centre manifold and is second order in $\left( q, p \right)$.
This construction is local about $\bar{z}_1$. A neater construction, semi-local about $N$, will be presented in Section \ref{variation}.

\begin{rmk}
Note that the dividing surface $S_E$ constructed above is closed, and the two halves $S_E^\pm$ are compact surfaces with boundary $N_E$. On the other hand, the choice $S = \lbrace z \in M \vert y = Y \left( q, p \right) \rbrace$ would not have given compact intersections with $M_E$.
\end{rmk}

Now, in order to apply Theorem \ref{thmLocMin} and show that our dividing surfaces have locally minimal geometric flux, we require that the dividing surfaces have no local recrossings. However, the centre manifold $N$ is normally hyperbolic\footnote{Actually $N$ is not (necessarily) compact, but the level sets $N_E$ are invariant, so the sub-level sets $N_{\leq E}$ are compact submanifolds with (invariant) boundary $N_E$ and normally hyperbolic.} and it has stable and unstable manifolds $W^\pm$ of codimension-1 in $M$. Then $N_E$ is also normally hyperbolic, as a submanifold of $M_E$, and $W_E^\pm$ are codimension-1 in $M_E$, thus dividing a neighbourhood of $N_E$ into four sectors. Finally, since $S_E^\pm$ lie between $W_E^\pm$, unidirectionality implies that there are no local recrossings. 

\begin{nota}
There is a simple asymptotic law for the flux when $\Delta E$ is small \cite{MacKay1990}. In this case, $N_E$ is given to leading order by $x = q_m = 0, y = p_m = 0$, and $\sum_{j=1}^{m-1} \frac{\beta_{j}}{2} \left(p_{j}^2 + q_{j}^2 \right) = \Delta E$. The generalised action is 
\begin{align*}
\Lambda &= \frac{1}{\left(m-1\right)!} \lambda \wedge \omega^{m-2} \\
&= \frac{1}{\left(m-1\right)!} \left( p_1 \rmd q_1 \wedge \rmd q_2 \wedge \rmd p_2 \wedge \cdots \wedge \rmd q_{m-1} \wedge \rmd p_{m-1} + \text{ similar terms}\right).
\end{align*}
Thus, 
the flux is (cf. Vineyard \cite{Vineyard1957})
$$\phi_E \left( S_E^+ \right) = \frac{\Delta E^{m-1}}{\left( m-1\right)!} \prod_{j=1}^{m-1} \frac{2 \pi}{\beta_j}.$$
\end{nota}
%-------------------------------------------------------------------------------------------

%
\begin{figure}
%\psfrag{l}{$l\left(t \right)$}
\begin{minipage}[c]{1\textwidth} 
\begin{center}
\begin{minipage}[c]{0.4\textwidth}
	\begin{pspicture}(5,4.1)
		\rput(2.5,1.55){\includegraphics[width=0.95\textwidth]{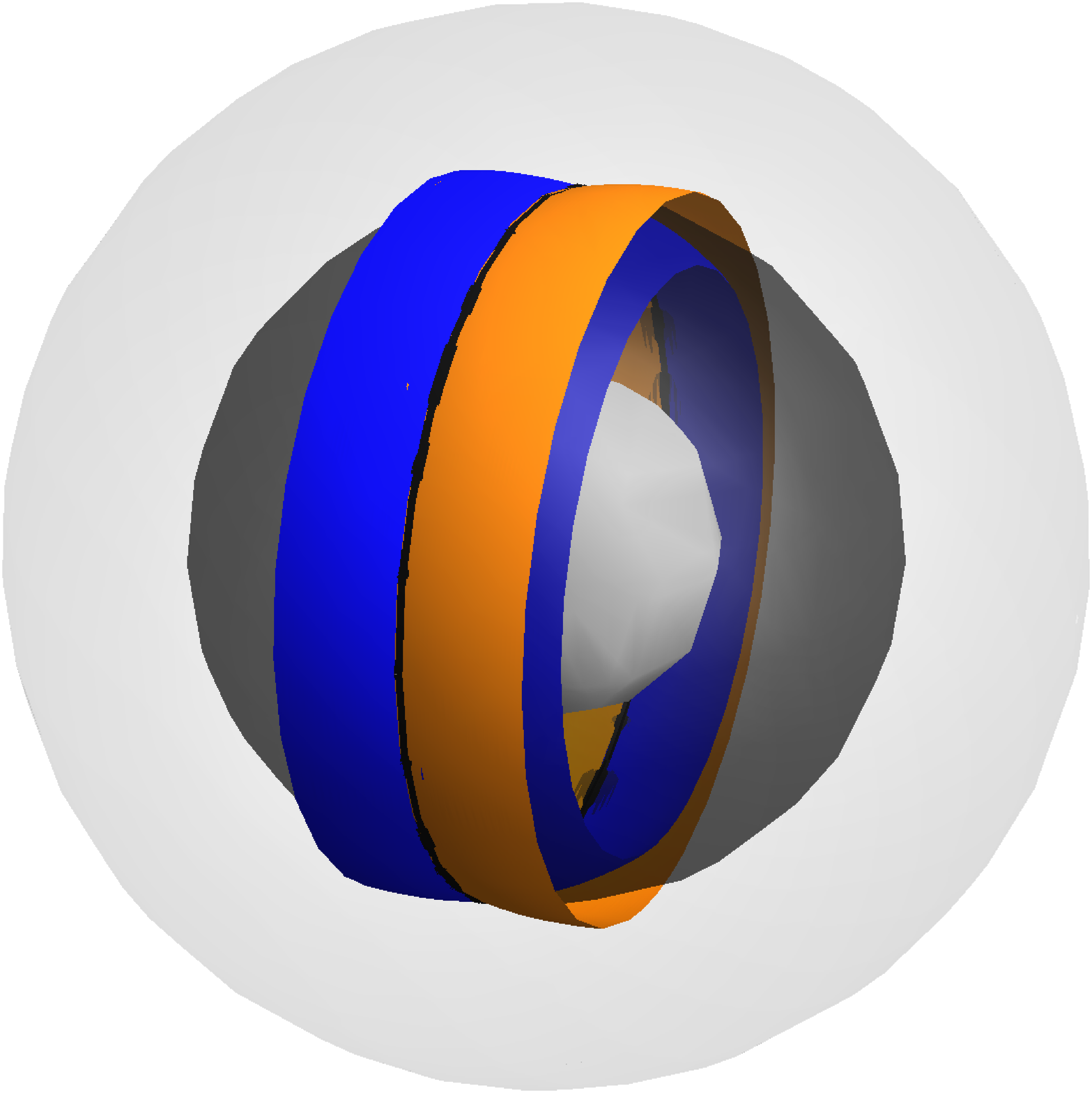}}
		%\psgrid
		\rput(0.9,3.5){$M_E$}
		\rput(2.2,1.7){\white $N_E$}
		\rput(1.12,1.9){$S_E^-$}
		\rput(4.39,1.9){$S_E^+$}
		\rput(2.5,3.76){$W_E^-$}
		\rput(3.45,3.63){$W_E^+$}
		%\rput(3.2,1.6){\white $L_E$}
		%\rput(3.2,0.6){$R_E$}
	\end{pspicture}
\end{minipage}
\hspace{0.02\textwidth}
\begin{minipage}[c]{0.4\textwidth}
	\begin{pspicture}(5,4.1)
		\rput(2.5,2){\includegraphics[width=0.9\textwidth]{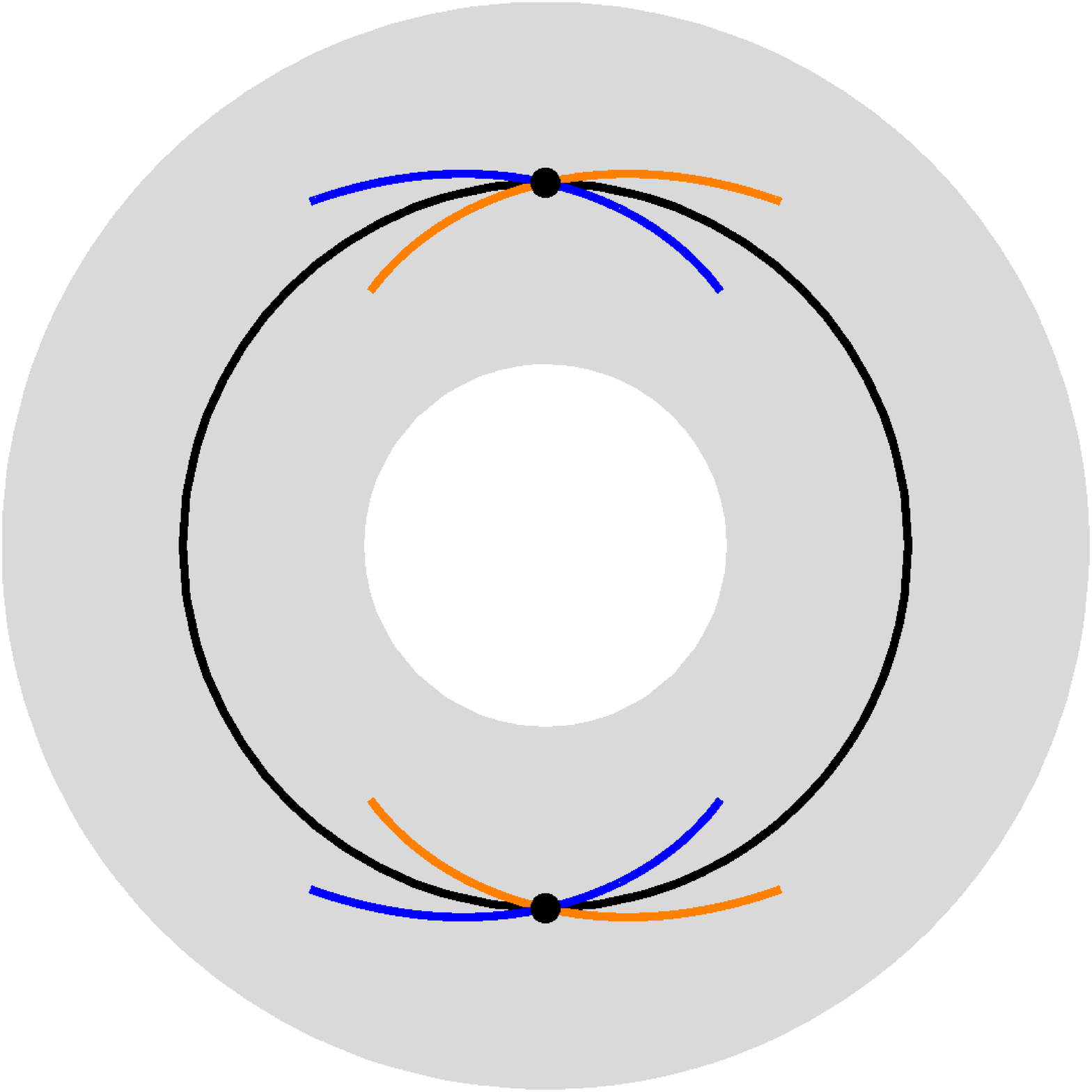}}
		%\psgrid
		\rput(0.9,3.5){$M_E$}
		\rput(2.5,0.40){$N_E$}
		\rput(1.30,1.1){$S_E^-$}
		\rput(3.82,1.1){$S_E^+$}
		\rput(1.67,3.65){$W_E^-$}
		\rput(3.2,3.65){$W_E^+$}
		\rput(3.45,2){$L_E$}
		\rput(4.2,2){$R_E$}
	\end{pspicture}
\end{minipage}
\end{center}
\end{minipage}
\caption{Conley representation of the quadratic approximation of the basic scenario, for some $E > E_a$, showing the energy-level $M_E$, the transition state $N_E$, its stable and unstable manifolds $W_E^\pm$ and the dividing surface $S_E = S_E^+ \cup S_E^-$. Left: full representation, right: cross-section with $u=0$.}
\label{conleyFig}
\end{figure}

A nice way of visualising the energy level and the various submanifolds is to use the Conley representation\footnote{The literature nowadays often also refers to it as the McGehee representation.
}. This method is implicit in a paper by Conley \cite{Conley1968}, was used by McGehee \cite{mcgehee1969homoclinic} and MacKay \cite{MacKay1990}, and is illustrated in \cite{Waalkens2010}.
Considering a 2 degree of freedom system and forgetting the higher order terms, the energy level $M_E$ is given by the equation
$$ \frac{\alpha}{2} y^2 + \frac{\beta}{2} \left(p^2 + q^2 \right) = \Delta E + \frac{\alpha}{2} x^2, \quad \Delta E = E - E_{a}.$$
We have seen that for $\Delta E > 0$, $M_E$ is diffeomorphic to $\mathbb{S}^2 \times \mathbb{R}$. The idea is therefore to represent $M_E$ as a spherical shell in $\mathbb{R}^3$ by considering it to be a 1-parameter ($x$) family of 2-spheres, which we denote $M_E^x$. For a given $x$, we project the sphere $M_E^x$ to another sphere in $\mathbb{R}^3$ by
$$ \pi_x : M_E^x \rightarrow \mathbb{R}^3 : \left( q,y,p \right) \mapsto \frac{r\left(x \right)}{r_E \left(x \right)} \left( q,y,p \right) =: \left( u,v,w \right),$$
where $ r_E \left(x \right) = \left(2 \Delta E + \alpha x^2 \right)^{1/2}$ and the new radius $ r\left(x \right)$ is a monotone function mapping the real line to a bounded positive interval, e.g. $ r\left(x \right) = 2 + \tanh \left( x\right) $, for which $r\left(x \right) \in \left[1,3 \right]$. Under this projection, the parameterised 2-spheres $M_E^x$ are placed concentrically in $\mathbb{R}^3$.
Then we define a map taking points on $M_E$ to $\mathbb{R}^3$ by
$$ \pi : \left( x,q,y,p \right) \mapsto \pi_x \left( q,y,p \right) = \left( u,v,w \right),$$
which gives the desired spherical shell.
The Conley representation of the quadratic approximation can be seen in \figurename~\ref{conleyFig}. This figure is only for 2 degree of freedom systems, but for $m$ degrees of freedom, the same procedure can be applied and we can imagine projecting $M_E$ to $\mathbb{R}^{2m-1}$.

%---------------------------------------------------------------------------------------------

\section{Transition states and dividing surfaces}
\label{transition}

We just saw how closed, invariant, codimension-2 submanifolds of the energy levels are the key to constructing surfaces with locally minimal geometric flux in the basic scenario.
As we are interested in what happens in the basic scenario when the energy is increased further, and in transport scenarios that are not governed by a local analysis, we give the general
\begin{defn}A \textit{transition state} for a Hamiltonian system is a closed, invariant, oriented, codi-mension-2 submanifold of an energy-level that can be spanned by two surfaces of unidirectional flux, whose union divides the energy-level into two components and has no local recrossings.
\end{defn}
``Transition state'' is not an ideal name because it is a set of states, not a single one. Furthermore, it is not a set of intermediate states on paths from reagents to products like the dividing surface, because it is invariant. The chemistry literature often also uses the term for dividing surfaces. This is why MacKay \cite{MacKay1990} avoided using the term.
However, it is by now established terminology and we choose to stick with tradition.

In the basic scenario, the transition states are level sets of the Hamiltonian function restricted to the centre manifold. 
The latter is normally hyperbolic. It is also symplectic, meaning that the restriction $\omega_N$ of the symplectic form to $N$ is non-degenerate. It is furthermore unique since centre manifolds are locally unique when their dynamics are bounded to a neighbourhood of the critical point for all time, see Sijbrand \cite[Thm3.2]{Sijbrand1985}. This is the case for the basic scenario centre manifold, where the motion takes place on invariant spheres $N_E \cong \mathbb{S}^{2m-3}$, the transition states, due to conservation of energy and positive definiteness of the restricted Hamiltonian function.
The normal hyperbolicity of the centre manifold as a submanifold of state space ensures that of the transition states within an energy level, provided that they are smooth manifolds. This in turn prevents local recrossings of the dividing surfaces.
Thus, for higher energies, we must consider normally hyperbolic extension of the centre manifold beyond a local neighbourhood of the index-1 critical point. We expect that these too will be symplectic\footnote{J-P Marco instead has results showing that normally hyperbolic submanifolds of Hamiltonian systems which also satisfy some extra conditions are symplectic.} because of experience with the axi-symmetric case of Section \ref{volcano}, for example, that is
\begin{conj}%
\label{claim}
If a connected submanifold $N$ of a Hamiltonian system is normally hyperbolic and $\omega_N$ is non-degenerate at one point of $N$ then it is non-degenerate on the whole of $N$.%
\end{conj}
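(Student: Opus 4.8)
The plan is to study the pointwise rank of $\omega_N$ and to show that the non-degenerate locus is both open and closed in the connected manifold $N$. First I would exploit invariance: since $N$ is invariant under the Hamiltonian flow $\phi_t$ and $\phi_t^\ast\omega=\omega$, the restricted flow $\psi_t=\phi_t|_N$ satisfies $\psi_t^\ast\omega_N=\omega_N$, where $\omega_N=i^\ast\omega$ for the inclusion $i\colon N\hookrightarrow M$. Hence $\lieL_{X}\omega_N=0$ for $X=X_H|_N$, the rank of $\omega_N$ is constant along internal orbits, and its radical $\ker\omega_N$ is a flow-invariant distribution supported on the closed, invariant \emph{degeneracy set} $K=\{z\in N:\omega_N|_z\text{ is degenerate}\}$. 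Non-degeneracy is an open condition, so $U=N\setminus K$ is open and, by hypothesis, non-empty; by connectedness it suffices to prove that $U$ is also closed.

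The mechanism that should force $U=N$ is linear and lives in the normally hyperbolic splitting $T_zM=T_zN\oplus E^s_z\oplus E^u_z$, which is preserved by the symplectic maps $D\phi_t(z)$. Using $\omega(u,v)=\omega(D\phi_t u,D\phi_t v)$ together with the exponential estimates of normal hyperbolicity, I would first show that $E^s_z$ and $E^u_z$ are $\omega$-isotropic (letting $t\to+\infty$ and $t\to-\infty$ respectively) and then that $T_zN$ is $\omega$-orthogonal to $E^s_z\oplus E^u_z$. Granting these, $T_zM=T_zN\oplus(E^s_z\oplus E^u_z)$ is an $\omega$-orthogonal direct sum, and a standard argument --- if $v\in T_zN$ is $\omega$-orthogonal to $T_zN$ then, since $\omega(v,E^s_z\oplus E^u_z)=0$, it is $\omega$-orthogonal to all of $T_zM$, hence zero --- shows $\omega|_{T_zN}$ is non-degenerate at that point. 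In particular this would establish the conclusion everywhere, which is why I expect the non-degenerate locus to be closed.

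The hard part is that the orthogonality estimate $\omega(T_zN,E^s_z)\to 0$ requires the normal contraction to beat the tangential growth in \emph{absolute} terms, whereas the standard (relative) definition of normal hyperbolicity only bounds the normal rates against the tangential ones, and the tangential Lyapunov exponents need not vanish. The natural remedy is the symmetry about $0$ of the Lyapunov spectrum of a symplectic cocycle: normal hyperbolicity places the stable and unstable exponents strictly beyond the central band, the symplectic pairing matches each stable exponent with an unstable one, and what remains --- the central spectrum --- must be symmetric and sandwiched, which is precisely the absolute domination needed above. The obstacle, and presumably the reason this is stated as a conjecture rather than proved, is the apparent circularity: identifying the symplectic pairing $E^s\leftrightarrow E^u$ and the symmetry of the central spectrum seems to presuppose the very non-degeneracy of $\omega_N$ one is trying to prove. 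This is exactly where I would deploy the ``non-degenerate at one point'' hypothesis: at $z_0$ the splitting $T_{z_0}M=T_{z_0}N\oplus(T_{z_0}N)^\omega$ is symplectic, which anchors the symmetric structure along the orbit closure of $z_0$; I would then try to propagate it to $\overline U$ using the \emph{uniform} spectral gap in the definition of normal hyperbolicity to control the limit as one approaches $\partial U$, thereby showing the rank cannot drop there and closing the connectedness argument. Controlling that boundary limit --- ruling out a sudden collapse of $\ker\omega_N$ onto the tangential directions in the presence of non-zero central exponents --- is the step I expect to be genuinely delicate.
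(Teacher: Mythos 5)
First, a point of comparison: the paper does not prove this statement. It is stated as Conjecture~\ref{claim} and left open, motivated only by the observation in Section~\ref{volcano} that in the volcano example normal hyperbolicity and non-degeneracy of $\omega_N$ are lost simultaneously, and by a footnote citing results of J-P Marco that establish symplecticity of normally hyperbolic submanifolds only \emph{under extra conditions}. So there is no proof in the paper to measure your attempt against; the relevant question is whether your argument settles the conjecture, and it does not --- as you yourself acknowledge.

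Your skeleton is the standard and correct one: $\lieL_{X_H}\omega_N=0$ makes the rank locally constant along orbits and the degeneracy locus closed and invariant, non-degeneracy is an open condition, and the isotropy of $E^s$ and $E^u$ (push forward or backward under $Df^t$ and use boundedness of $\omega$ on the compact invariant $N$) is unproblematic. Two remarks on where this leaves you. (i) Under the definition of normal hyperbolicity the paper actually adopts in Appendix~\ref{nhms} --- absolute rates, $\Vert Df^t\vert_{TN}\Vert \le k e^{\beta\vert t\vert}$ with $0\le\beta<\alpha$ --- the cross estimate $\omega(T_zN,E^s_z)=0$ follows at once from $e^{(\beta-\alpha)t}\to 0$, the splitting $T_zM=T_zN\oplus(E^s_z\oplus E^u_z)$ is $\omega$-orthogonal at \emph{every} point, and the conclusion holds pointwise without ever invoking the ``non-degenerate at one point'' hypothesis. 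If that were the intended setting the statement would not be a conjecture; it is clearly meant for the weaker, relative notion of domination. (ii) In that relative setting your diagnosis of the obstruction is exactly right: relative domination bounds $\Vert Df^t\vert_{E^s}\Vert$ against the \emph{backward} tangential growth $\Vert Df^{-t}\vert_{TN}\Vert$, not against the forward growth $\Vert Df^t\vert_{TN}\Vert$, so the product $\Vert Df^t\vert_{TN}\Vert\,\Vert Df^t\vert_{E^s}\Vert$ can diverge and the orthogonality argument fails; and the symmetric-spectrum rescue presupposes a symplectic pairing on $TN$, which is what is to be proved. Your closing step --- anchoring the pairing at the one good point $z_0$ and propagating it to $\partial U$ via the uniform spectral gap --- is stated as an intention rather than carried out, and it is precisely the unresolved content of the conjecture. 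The proposal is therefore a sound reduction of the problem and a correct identification of where it is hard, but it is not a proof, and it should not be presented as one.
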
%

For more general transport scenarios, we expect invariant, codimension-2 submanifolds of state space, composed of the union of transition states over an interval of energy. These, and the centre manifolds of the basic scenario, we will refer to as \textit{transition manifolds}\footnote{In the chemistry literature, which until now has focused on the basic scenario, these are referred to as the ``activated'' complexes, states or surfaces, see e.g. Henriksen and Hansen \cite[p.140]{Henriksen2008}.}. 

We have been using \textit{dividing surface} to refer to codimension-1 submanifolds of an energy level that divide it into two parts. The union of the dividing surfaces with different energies will be referred to as a \textit{dividing manifold}. This is a codimension-1 submanifold, locally dividing state space.

%---------------------------------------------------------------------------

\section{Constructing dividing manifolds}
\label{variation}

In the basic scenario, we have seen how to construct a local dividing manifold, about the critical point $\bar{z}_1$. More generally, given an invariant, codimension-2 submanifold of state space, we want a method to construct a dividing manifold spanning it. More precisely, let us say a potential transition manifold is a submanifold $N$ of state space $M$ that is invariant, orientable, codimension-2, and on which the restriction of the Hamiltonian $H_N$ is bounded from below and proper, ensuring compact sub-level sets, $N_{\leq E} = \lbrace z \in N \vert H_N \left( z \right) \leq E \rbrace$. The level sets of such submanifolds are closed, invariant, codimension-2 submanifolds $N_E$ of the energy levels $M_E$. Furthermore, we will ask that $N$ is symplectic, as in the basic scenario. This will be necessary in our construction, and seeing as the condition for a submanifold to be symplectic is open in the space of $C^1$ submanifolds of a given symplectic manifold, we do not feel that this imposes significant restrictions. Note that any symplectic submanifold is automatically orientable. Thus, provided $N$ is normally hyperbolic and its stable and unstable manifolds are orientable\footnote{This is not necessarily the case. An example of an orientable, codimension-2 normally hyperbolic submanifold with non-orientable stable and unstable manifolds is the orbit cylinder formed by a family of inversion hyperbolic periodic orbits (with negative characteristic multipliers) parametrised by the energy in a 2 degree of freedom system. This has local stable and unstable manifolds diffeomorphic to a M\"obius strip cross an interval, and emerges, for example, out of a period doubling bifurcation of an elliptic periodic orbit, see e.g. \cite[p.599]{abraham1987foundations}.}, we will show how to construct a codimension-1 submanifold $S$ locally dividing $M$ with no local recrossings and composed of two halves that span $N$ and across which the flow is unidirectional. Then, restricting $S$ to an energy level will give a dividing surface $S_E$ with locally minimal geometric flux, demonstrating that $S$ is a dividing manifold, $N$ a transition manifold and $N_E$ a transition state.

The construction requires a \textit{fibration} of a neighbourhood $U \subset M$ of $N$. This is a manifold $U$ (called the \textit{total space}) together with a projection $\pi: U \rightarrow N : z \mapsto \tilde{z}$ to a manifold $N$ (the \textit{base space}) such that the \textit{fibres} $F_{\tilde{z}} = \pi^{-1} \left( \tilde{z} \right)$ are submanifolds, and a local trivialisation $\psi_i : \pi^{-1} \left( V_i \right) \rightarrow V_i \times F$, where $V_i$ is a set in an open covering of $N$, and $F$ is a fixed manifold (the \textit{standard fibre}). It is usually denoted $\left( U , N, \pi, F \right)$.
The tangent spaces to the fibres give a \textit{vertical subbundle}, $\text{Vert}$, of the tangent bundle $TM$ that is tangent to the fibres
$$\text{Vert}_z = \ker \rmd_z \pi = T_z F_{\tilde{z}},$$
for all points $z \in U$. Then, a choice of \textit{horizontal subbundle}, $\text{Hor}$, gives a splitting of the tangent bundle
$$T M = \text{Vert} \oplus \text{Hor}.$$

Note that the fibres are 2-dimensional, the codimension of $N$ in $M$. We require the fibres to be symplectically orthogonal to the symplectic base space, and choose to consider a \textit{symplectic fibration}, for which the fibres are symplectic submanifolds of the total space, see e.g. Guillemin, Lerman and Sternberg \cite[Ch.1]{Guillemin1996}. With our symplectic total space, we can choose the symplectic form of the fibre $F_{\tilde{z}}$ to be $\omega_{F_{\tilde{z}}}$, the restriction of $\omega$. We then say that $\omega$ is \textit{fibre-compatible}. Asking that the fibration is symplectic adds a constraint, but in exchange we can associate to $\omega$ a symplectic splitting, by defining the horizontal subbundle to be symplectically orthogonal to the vertical subbundle, i.e.
$$ \text{Hor}_z = \text{Vert}_z^\omega := \lbrace \xi \in T_z M \vert \symp{\xi}{\eta} = 0 \; \forall \eta \in \text{ Vert}_z \rbrace. $$

Now, any point $\tilde{z} \in N$ is an index-1 critical point of the Hamiltonian function restricted to the symplectically orthogonal fibre at $\tilde{z}$, $H_{F_{\tilde{z}}}$. This is because $N$ is a codimension-2, normally hyperbolic submanifold and the splitting is symplectic, so the restricted vector field is Hamiltonian, that is $X_H \vert_{F_{\tilde{z}}} = X_{ H_{ F_{\tilde{z}} } }$ where $i_{X_{H_{F_{\tilde{z}}}}} \omega_{F_{\tilde{z}}} = \rmd H_{F_{\tilde{z}}}$, and  $\tilde{z} \in F_{\tilde{z}}$ is a hyperbolic equilibrium of the restricted flow. Thus, we can choose vector fields $X_\pm$ tangent to the fibres such that
$$\symp{X_-}{X_+}>0, \quad \lieL_{X_-} \lieL_{X_-} H < 0 \text{ and } \lieL_{X_+} \lieL_{X_+} H > 0.$$
Note that the Lie derivative $\lieL_X f$ of a $0$-form (i.e.~a function) $f$ is just $\lieL_X f = X \left( f \right) = \rmd f \left( X \right)$, but this last notation does not lend itself to being applied twice. 

We then define 
$$S_{\tilde{z}} = \lbrace z \in F_{\tilde{z}} \vert \lieL_{X_-} H \left( z \right) = 0 \rbrace \text{ and } S = \cup_{\tilde{z}} S_{\tilde{z}}.$$
This dividing manifold $S$ spans $N$ and is an orientable, codimension-1 submanifold of $M$. Orientability following from that of the stable and unstable manifolds. However, we must check the flux of state space volume across $S$. Note that if the state space flux is unidirectional, then so is the energy-surface flux, since the energy-levels are invariant. The transverse component of $X_H$ across $S_{\tilde{z}}$ is
$$\rmd \left( \lieL_{X_-} H \right) \left( X_H \right) = \lieL_{X_H} \lieL_{X_-} H.$$
To find its sign, firstly we note that  $\lieL_{\left[X_H, X_- \right]} H = \lieL_{X_H} \lieL_{X_-} H - \lieL_{X_-} \lieL_{X_H} H = \lieL_{X_H} \lieL_{X_-} H,$ since $\lieL_{X_H} H = \rmd H \left( X_H \right) =0$. Here $ \left[X,Y \right] = XY - YX$ is the Lie bracket of vector fields (thought of as differential operators, as in $X \left( f \right) = \rmd f \left( X \right)$). Next, $\lieL_{X_+} H$ is single signed across each half of $S_{\tilde{z}}$ because ${\tilde{z}} \in N$ is a critical point of $\lieL_{X_+} H$, but $\lieL_{X_+} \lieL_{X_+} H < 0$ on the whole of $F_{\tilde{z}}$, thus $\tilde{z}$ is a minimum. Therefore, we ask that
$$ \lieL_{\left[X_H, X_- \right]} H = - c_{\tilde{z}} \; \lieL_{X_+} H , \quad c_{\tilde{z}} \in \mathbb{R}_+,$$
which is compatible with the initial assumptions.

In practice, it is easier to check the conditions if the vector fields $X_\pm$ are Hamiltonian, so we choose functions $A^\pm_{\tilde{z}} : F_{\tilde{z}} \rightarrow \mathbb{R}$ and define $X_\pm = X_{ A^\pm_{\tilde{z}} } $, where $i_{X_{A^\pm_{\tilde{z}}}} \omega = \rmd A^\pm_{\tilde{z}}$. Then using Poisson brackets, defined as $ \lbrace A,B \rbrace = \symp{X_A}{X_B}$ for two functions on $M$, and considering $A^\pm_{\tilde{z}}$ as functions on the whole of $M$, we can rewrite the conditions satisfied by the vector fields as
$$\lbrace A^-_{\tilde{z}}, A^+_{\tilde{z}} \rbrace>0, \quad \lbrace \lbrace H, A^-_{\tilde{z}} \rbrace, A^-_{\tilde{z}} \rbrace < 0 \text{ and } \lbrace \lbrace H, A^+_{\tilde{z}} \rbrace, A^+_{\tilde{z}} \rbrace > 0,$$
and the new conditions, ensuring that the flux is unidirectional, as
$$ \lbrace A^-_{\tilde{z}}, A^+_{\tilde{z}} \rbrace > 0, \quad \lbrace H, A^-_{\tilde{z}} \rbrace = c_{\tilde{z}} \; A^+_{\tilde{z}} \text{ and } \lbrace \lbrace H, A^+_{\tilde{z}} \rbrace, A^+_{\tilde{z}} \rbrace >0.$$
Thus, we have actually found that
$$S_{\tilde{z}} = \lbrace z \in M \vert A^+_{\tilde{z}} \left(z \right) = 0 \rbrace.$$

Now, seeing as $\tilde{z} \in N$ is an index-1 critical point of the Hamiltonian function restricted to $F_{\tilde{z}}$, it has Williamson normal form
$$ H_{F_{\tilde{z}}} \left( x,y \right) = \frac{\alpha_{\tilde{z}}}{2} \left(y^2 -x^2 \right) + \order{3}, \quad \alpha_{\tilde{z}}  \in \mathbb{R}_+,$$
about $\tilde{z}$. We can then choose
$A^-_{\tilde{z}} \left( z \right) = y$, $ A^+_{\tilde{z}} \left( z \right) = - x$, for example.

The dividing surfaces $S_E$ are then simply given by intersecting the dividing manifold with the energy levels. We must check that these are closed and have no local recrossings.  In order to check that $S_E$ is closed, we show that the sub-level set $S_{\leq E} = \lbrace z \in S \vert H_S \left( z \right) \leq E \rbrace$ is compact. To check that $S_{\leq E}$ is compact, we restrict the fibration to $\pi \vert_{S_{\leq E}} : S_{\leq E} \rightarrow N_{\leq E}$, which has a compact base space $N_{\leq E}$ by choice, and compact fibres $S_{\leq E, \tilde{z}}$ as we can see from $H_{F_{\tilde{z}}}$ in normal form, and thus a compact total space $S_{\leq E}$, as desired. The dividing surface $S_E$ does not have local recrossings by the same argument as for the local dividing surfaces in the basic scenario. $N_E$ is a normally hyperbolic submanifold of $M_E$, and $W^\pm_E$ are codimension-1. Then for each $\tilde{z} \in N$, $W^\pm_E \left( \tilde{z} \right) = \lbrace z \in M \vert y \mp x = 0, \; H \left( z \right) = E \rbrace$, so $S^\pm_E$ lie in-between $W^\pm_E$ and unidirectionality implies no local recrossings. 

This construction generalises, and reduces to, Toller et al.'s local construction \cite{Toller1985} for the basic scenario
$$S = \lbrace z \in M \vert G \left( z \right) = x - X \left( q,p \right)  = 0 \rbrace,$$
see Section \ref{reactDyn}. In this case, the fibres symplectically orthogonal to $N$ are given by 
$$F_{\tilde{z}} = \Big\lbrace q - \tilde{q} = \tilde{X}_p \left[ y - \tilde{Y} \right] - \tilde{Y}_p \left[ x - \tilde{X} \right] , \;
p - \tilde{p} = \tilde{X}_q \left[ y - \tilde{Y} \right] - \tilde{Y}_q \left[ x - \tilde{X} \right] \Big\rbrace,$$
where $\tilde z = \left( \tilde{X},\tilde{q},\tilde{Y},\tilde{p} \right)$ is a point in $ N$, $\tilde{X} = X \left( \tilde{q}, \tilde{p} \right)$, $\tilde{X}_i = \del_i X \left( \tilde{q}, \tilde{p} \right)$ and similarly for $Y$. The rest follows.

%------------------------------------------------------------------------

\section{Bifurcations of transition states}
\label{bifurcations}

Many transport scenarios, including the basic one for energies significantly above the saddle, cannot be considered locally about a critical point. The picture is therefore more complicated than the simple one for flux over a saddle.

Starting with the easiest case, that of systems with 2 degrees of freedom, we know that the transition states, being closed and 1-dimensional, are periodic orbits. Thus, their possible bifurcations are well known, and can be found in the literature, e.g. Meyer et al. \cite[Ch.11]{meyer2009introduction} and Han\ss mann \cite[Ch.3]{hanssmann2007local}.
A crucial feature of the transition state in the basic scenario is its normal hyperbolicity, which ensures that dividing surfaces constructed about it have locally minimal geometric flux. This may be lost at higher energies. However, for periodic orbits, we know what to expect when normal hyperbolicity is lost because normally elliptic periodic orbits also persist, see e.g. Meyer et al. \cite[Ch.9]{meyer2009introduction} on the continuation of periodic orbits.
These bifurcations however affect the underlying transport problem, see e.g.~\cite{Pollak1978, Davis1987}.

For higher degrees of freedom, the transition states are normally hyperbolic submanifolds of higher dimensions, e.g. in the basic scenario we have a $\left(2m-3\right)$-sphere. The bifurcation of such submanifolds when they lose normal hyperbolicity is not much explored.
There have been studies proposing different approaches and partial normal form methods, see \cite{li2009bifurcation, Teramoto2011, Allahem2012} and references therein.
Nonetheless, bifurcations involving the loss of normal hyperbolicity are not well understood.

What has been overlooked though is that there is a large class of systems for which the transition state develops singularities, i.e. points at which the manifold structure fails, at some energy $E_b$ and then reforms as a non-diffeomorphic normally hyperbolic submanifold for energies above $E_b$. The dividing surfaces also undergo a similar bifurcation. In this case, we can say exactly what happens. The context for these bifurcations is that there is a normally hyperbolic submanifold in the full state space, the transition manifold, denoted $N$. 
For example, starting from the basic scenario the transition manifold is an extension of the centre manifold beyond a local neighbourhood of the index-1 critical point. The transition states are then the level-sets of the Hamiltonian function restricted to the transition manifold, $N_E = H_N^{-1} \left(E \right)$, and they undergo a Morse bifurcation. This occurs when $H_N$ has a critical point, and can be studied using Morse theory, see Appendix \ref{morse}.

The critical points of the restricted Hamiltonian function $H_N$ are also critical points of the Hamiltonian function. For the Morse bifurcations, these will be of index one or higher relative to $H_N$, and hence of index two or higher relative to $H$. There have been studies regarding the role of higher index (than one) critical points in transport problems, see e.g. Ezra and Wiggins \cite{ezra2009phase}, Collins, Ezra and Wiggins \cite{Collins2011a}, Haller et al. \cite{Haller2011}. These have however focused on the higher index critical points and a neighbourhood about these, and thus not searched for the global submanifolds beyond this neighbourhood.

The effect that the Morse bifurcations have on the flux of energy-surface volume through the dividing surface is considered in Appendix \ref{fluxCh}.

%---------------------------------------------------------------------------------------------

\subsection{Example. Disconnecting transition states}
\label{volcano}

We now turn to our first example of a Morse bifurcation. This shows one way in which the basic scenario transition state can change topology as the energy increases.

\begin{figure}
	\centering
	\begin{pspicture}(8,4.5)
		\rput(4,2){\includegraphics[width=0.5\textwidth]{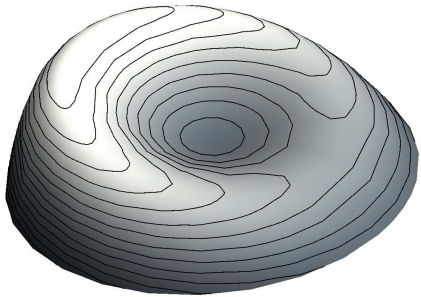}}
		%\psgrid
	\end{pspicture}
	\centering
	\caption{Graph of volcano potential with contour lines for the disconnecting transition states example.}
	\label{VPotFig}
\end{figure}

Consider a ``volcano potential'' given in polar coordinates as
\begin{equation*}
U\left( r, \theta \right) = \frac{1}{2} r^2 \left( 2 - r^2 \right) \left( 1 - \varepsilon r \cos \theta \right),
\end{equation*}
where $\varepsilon$ is a small positive parameter, see \figurename~\ref{VPotFig}.

The natural Hamiltonian function with this potential energy is then
\begin{equation*}
H\left( r, \theta, p_r, p_\theta \right) =  \frac{1}{2} \left(p_r^2 + \frac{1}{r^2}p_\theta^2 \right) + \frac{1}{2} r^2 \left( 2 - r^2 \right) \left( 1 - \varepsilon r \cos \theta \right),
\end{equation*}
and we consider the Hamiltonian system with the canonical symplectic form. The Hamiltonian function has three critical points, one of which $\bar{z}_1 = \left( r_\varepsilon, 0,0,0 \right)$ with index-1 , and another $\bar{z}_2 = \left( r_\varepsilon, \pi, 0,0 \right)$ with index-2. We are interested in transport in and out of the crater and
have an index-1 critical point in between, as expected for the basic scenario. Choosing $r$ as the coordinate joining the two regions, we want to construct a transition and dividing manifold about the index-1 critical point and study the transition states and dividing surfaces over a range of energies.

A similar volcano potential is seen in the ionization of hydrogen in a circularly polarized microwave field studied by Farrelly and Uzer \cite{Farrelly1995}. The transport problem in their example is however different, as escaping from the volcano's crater does not necessarily imply ionization.

Considering the axi-symmetric case, $\varepsilon = 0$, for which a transition manifold can be found explicitly, allows us to find an approximate transition manifold for the full system and due to its normal hyperbolicity deduce that there is a true transition manifold nearby.
The set of critical points of the Hamiltonian function restricted to the fibres, symplectically orthogonal to the axi-symmetric transition manifold, gives an approximate transition manifold as explained in Appendix \ref{nhms}.
Then the Morse bifurcations of the approximate transition states will be qualitatively the same as those of the actual transition states.

In the axi-symmetric case, the Hamiltonian function becomes
\begin{equation*}
H_0 \left( z \right) =  \frac{1}{2} \left(p_r^2 + \frac{1}{r^2} p_\theta^2 \right) + \frac{1}{2} r^2 \left( 2 - r^2 \right),
\end{equation*}
where $\theta$ is a cyclic coordinate, so the angular momentum is conserved, $p_\theta= \mu$. The critical points are $\bar{z}_0 = \left(0,\theta,0,0 \right)$ and $\bar{z}_1 = \left(1,\theta,0,0 \right)$, which are now both degenerate. 
Linearising about $\bar{z}_{1}$, we find that $\left(r, p_r \right)$ are the hyperbolic directions, and $\left(\theta, p_\theta \right)$ the elliptic ones. Thus, the centre subspace about $\bar{z}_1$ is $ \hat{N} = \lbrace z \in M \vert r = 1, \; p_r=0 \rbrace $, and we find the centre manifold, $ N_0 = \lbrace z \in M \vert r = \rho_0 \left( \theta, p_\theta \right), \; p_r=P_0 \left( \theta, p_\theta \right) \rbrace $ by satisfying the invariance equations
\begin{align*}
P_0 - \frac{p_\theta}{\rho_0^2}  \frac{\del \rho_0}{\del \theta} &=0, \\
\frac{p_\theta^2}{\rho_0^3}  - 2 \rho_0 \left( 1 - \rho_0^2 \right) - \frac{p_\theta}{\rho_0^2} \frac{\del P_0}{\del \theta} &= 0.
\end{align*}
This is done by choosing $P_0=0$ and $\rho_0$ satisfying $p_\theta^{2} - 2 \rho_0^4 \left( 1- \rho_0^2 \right) = 0$.

We have actually found a generalised centre manifold that extends beyond a small neighbourhood of $\bar{z}_1$.
To check the stability of $N_0$, i.e.~that it remains normally hyperbolic, we need to find appropriate tangent and normal coordinates and consider the linearised equations about $N_0$. At a point $\tilde{z} = ( \tilde{\rho}_0, \tilde{\theta}, 0, \tilde{p}_\theta )$ on the transition manifold, the tangent vectors are taken to be
$\xi_1 = \del_\theta$, $\xi_2 = \tilde{p}_\theta \del_r + 2 \tilde{\rho}_0^3 \left(2 -3 \tilde{\rho}_0^2 \right) \del_{ p_\theta}$.
We then choose a Riemannian structure, for which the length is given by
$$\rmd s^2 = \frac{c^2}{r^2} \left( \rmd r^2 + r^2 \rmd \theta^2 \right) + \rmd p_r^2 + \frac{1}{r^2} \rmd p_\theta^2,$$
i.e. proportional to the length in configuration space plus the kinetic energy, 
where the constant $c$ balances the dimensions by having those of velocity, and is set to 1. This allows us to define vectors orthogonal to the transition manifold as
$\eta_1 = 2 \tilde{\rho}_0^3 \left(3 \tilde{\rho}_0^2 -2 \right) \del_r +  \tilde{p}_\theta \del_{p_\theta}$ and $\eta_2 = \del_{p_r}$.
Finally, the first variation equations for $ \nu = v_1 \xi_1 + v_2 \xi_2 + v_3 \eta_1 + v_4 \eta_2$ are
$$\dot{v} = 
\left(
\begin{array}{cccc}
 0 & \frac{1}{\tilde{\rho_0} d\left( \tilde{\rho_0} \right)} & 0 & \frac{1 + 8 \tilde{\rho_0}^2 - 12 \tilde{\rho_0}^4}{2 \tilde{\rho_0}^6 d\left( \tilde{\rho_0} \right)}\\
 0 & 0 & \tilde{p}_\theta & 0 \\
 0 & 0 & 0 & \frac{2}{\tilde{\rho}_0^3} \\
 0 & 0 & 2 \tilde{\rho}_0^3 (3 \tilde{\rho}_0^2 -2) & 0
\end{array}
\right)
v,$$
where $d\left( \tilde{\rho_0} \right) = 1 + 7 \tilde{\rho}_0^2 - 24 \tilde{\rho}_0^4 + 18 \tilde{\rho}_0^6$ and is negative in the region of interest.
This choice of splitting is not invariant, but we can see that $N_0$ is normally hyperbolic for $\rho_0 \in \left( \sqrt{ 2/3},1 \right]$.
The point $\rho_0=\sqrt{ 2/3}$ is the steepest point of the potential, at which the normally hyperbolic periodic orbit emanating from the critical point $\bar{z}_1$, that is the transition state, collides with the elliptic periodic orbit from the crater of the volcano in a centre-saddle bifurcation \cite[\S 3.1]{hanssmann2007local}. Interestingly, at $\rho_0=\sqrt{ 2/3}$ when normal hyperbolicity is lost, the symplectic form restricted to the transition manifold $\omega_N$ also becomes degenerate.

\begin{figure}
	\centering
	\begin{pspicture}(14,4.5)
		\psfrag{r}[tr][bl]{\footnotesize $r$}
		\psfrag{e}{\footnotesize $E$}
		\psfrag{x}[tr][bl]{\footnotesize $r$}
		\psfrag{y}[r][l]{\footnotesize $p_r$}
		\psfrag{a}[r][l]{\tiny $E_a$}
		\psfrag{b}[r][l]{\tiny $E_c$}
		\psfrag{0}[t][b]{\tiny $0$}
		\psfrag{1}[t][b]{\tiny $1$}
		\psfrag{z}[r][l]{\tiny $0$}
		\psfrag{c}[t][b]{\tiny $\sqrt{\frac{2}{3}}$}
		\psfrag{d}[t][b]{\tiny $1$}
		\psfrag{f}[r][l]{\tiny $\frac{1}{2}$}
		\psfrag{h}[r][l]{\tiny $-\frac{1}{2}$}
		\rput(2,2.2){\includegraphics[width=0.3\textwidth]{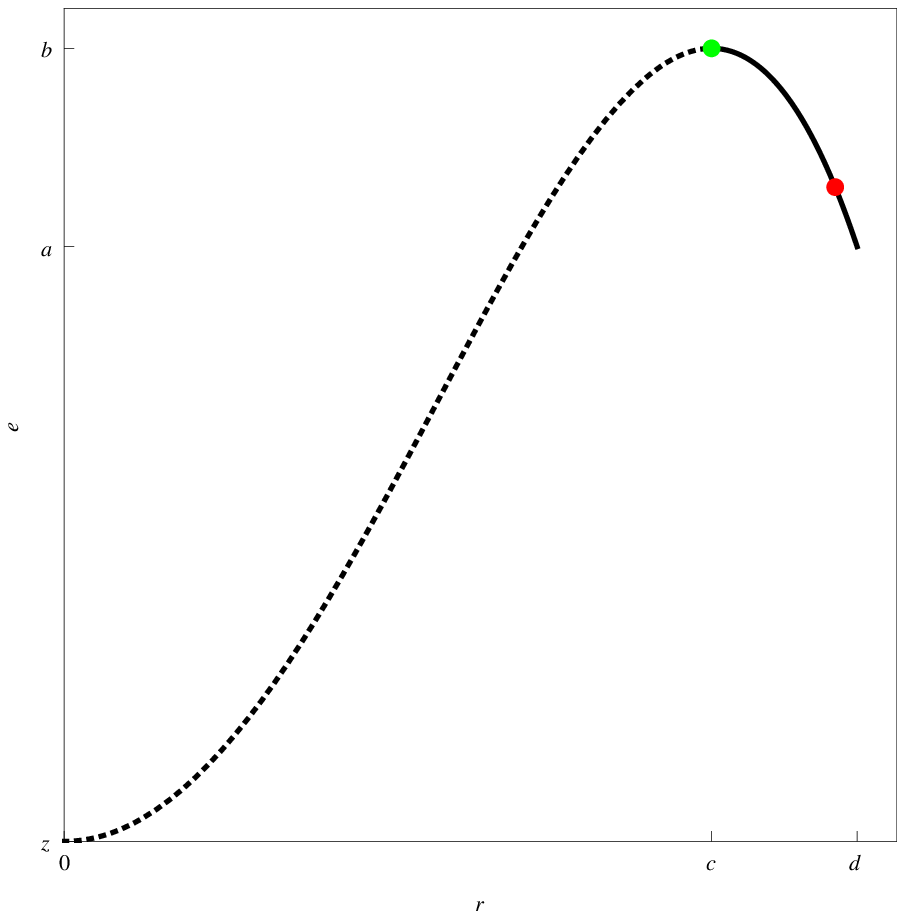}}
		\rput(7,2.2){\includegraphics[width=0.3\textwidth]{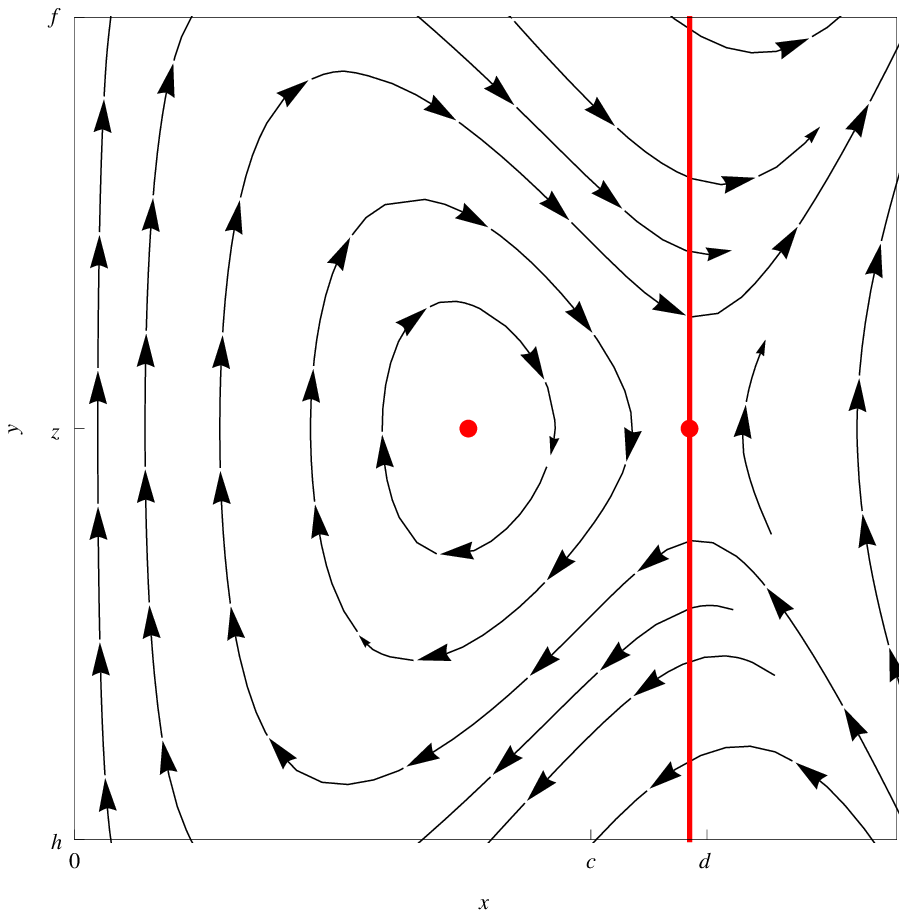}}
		\rput(12,2.2){\includegraphics[width=0.3\textwidth]{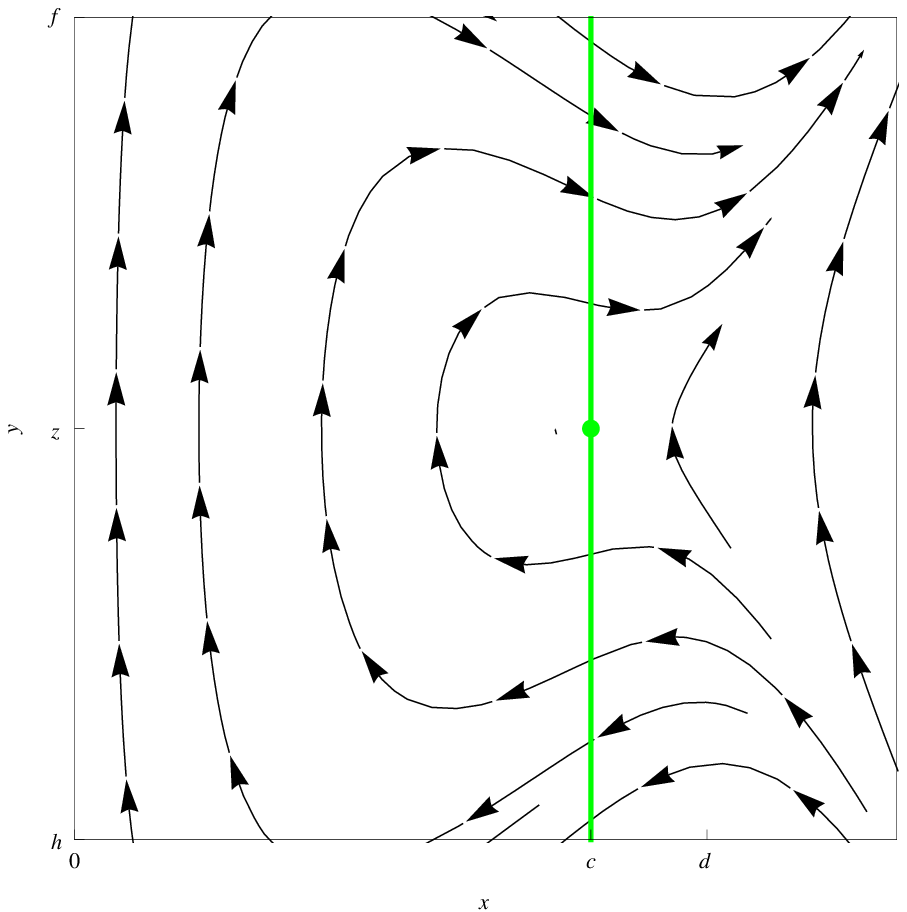}}
		%\psgrid
	\end{pspicture}
	\caption{Axi-symmetric case: energy of the transition states as a function of $r$ continued (dashed) to the elliptic periodic orbit, and flow in the $\left( r , p_r \right)$ plane for an energy in $(E_a, E_c)$ showing the transition state, the dividing surface and the flux through it, and for the bifurcational energy $E_c$.}
	\label{VDSFig}
\end{figure}

In the axi-symmetric case, we can use the dividing manifold construction method of Section \ref{variation}. The fibres, symplectically orthogonal to $N_0$ are
$$F^0_{\tilde{z}} = \lbrace z \in U \vert \theta = \tilde{\theta} + \frac{2 \tilde{\rho}_0^3 \left( 2-3 \tilde{\rho}_0^2 \right)}{\tilde{p}_\theta} p_r, \; p_\theta = \tilde{p}_\theta \rbrace.$$
Thus the restricted Hamiltonian function, linearised about $\tilde{z} \in N_0$ is 
$$ H_0 \vert_{F^0_{\tilde{z}}} \left( z \right) = E_0 + \frac{1}{2} y^2 - 2 \left( 3 \tilde{\rho}_0^2 - 2 \right) x^2 + \order{x^3}
.$$
The functions $A^-_{\tilde{z}} \left( z \right) = y$, $A^+_{\tilde{z}} \left( z \right) = -x$ satisfy the necessary conditions of Section \ref{variation}, and 
$$ S_{\tilde{z}} = \lbrace z \in U \vert A^+_{\tilde{z}} \left( z \right) = \tilde{\rho}_0 - r = 0 \rbrace.$$
Flow in the $\left( r, p_r \right)$ plane showing the dividing surfaces and the flux through them, for an energy in $(E_a, E_c)$ and for the bifurcational energy $E_c$, are shown in \figurename~\ref{VDSFig}. There are clearly recrossings even for small energies above $\bar{z}_1$ due to the geometry of the system, but these are not local (see definition in Section \ref{sectMin}). The local recrossings only appear when the transition state loses normal hyperbolicity at $\rho_0 = \sqrt{2/3}$.

Returning to the full system, we can now find an approximate transition manifold, $N_1$, by constructing a fibration of a local neighbourhood of $N_0$, with symplectically orthogonal fibres $F^0_{\tilde{z}}$, for $\tilde{z} = \left( \tilde{\theta}, \tilde{p}_\theta \right) \in N_0$, and then defining $ N_1 = \lbrace z \in M \vert \rmd_z H_{F^0_{\tilde{z}}}=0 \rbrace $. 
The symplectically orthogonal fibres are the ones used previously to find a dividing manifold for the axi-symmetric case. The Hamiltonian function restricted to the fibre $F^0_{\tilde{z}}$ is then
$$ H_{F^0_{\tilde{z}}} \left( r, p_r \right) = \frac{1}{2} \left( p_r^2 + \frac{1}{r^2} \tilde{p}_\theta^2 \right) + \frac{1}{2} r^2 \left( 2 - r^2 \right) \left( 1 - \varepsilon r \cos \left( \tilde{\theta} + \frac{2 \tilde{\rho}_0^3 \left( 2-3 \tilde{\rho}_0^2 \right)}{\tilde{p}_\theta} p_r \right) \right),$$
so linearising about $N_0$ (by letting $r = \tilde \rho_0 + \varepsilon \rho_1$, $p_r = \varepsilon P_1$) and taking the exterior derivative gives
\begin{align*}
\rmd_z H_{F^0_{\tilde{z}}} &= 
\frac{\varepsilon^2}{2}\left[ 8 \left( 2-3 \tilde \rho_0^2 \right) \rho_1 - \tilde \rho_0^2 \left( 6-5 \tilde \rho_0^2 \right) \cos \tilde \theta \right] \rmd \rho_1 \\
&+ \varepsilon^2 \left[ P_1 - \frac{\tilde \rho_0^6 \left( 4 - 8 \tilde \rho_0^2 + 3 \tilde \rho_0^4 \right)}{\tilde p_\theta} \sin \tilde \theta \right] \rmd P_1 + \order{\varepsilon^3}.
\end{align*}
Asking that $\rmd_z H_{F^0_{\tilde{z}}} = 0$, we obtain
\begin{align*}
\rho_1 &= \frac{\tilde \rho_0^2 \left( 6-5 \tilde \rho_0^2 \right)}{ 8 \left( 2-3 \tilde \rho_0^2 \right)} \cos \tilde \theta + \order{\varepsilon} \\
P_1 &= \frac{\tilde \rho_0^6 \left( 4 - 8 \tilde \rho_0^2 + 3 \tilde \rho_0^4 \right)}{\tilde p_\theta} \sin \tilde \theta + \order{\varepsilon}.
\end{align*}
and so
$$ N_1 = \lbrace z \in M \vert r = \rho_0 \left( p_\theta \right) + \varepsilon \rho_1 \left( \theta, p_\theta \right) + \order{\varepsilon^2} , \; p_r = 0 + \varepsilon P_1 \left( \theta, p_\theta \right) + \order{\varepsilon^2} \rbrace.$$
Then the restricted Hamiltonian is 
\begin{align*}
H_N \left( \theta, p_\theta \right) 
&= \frac{1}{2} \left( \varepsilon^2 P_1^2 + \frac{1}{\rho^2} p_\theta^2 \right) + \frac{1}{2} \rho^2 \left( 2 - \rho^2 \right) \left( 1 - \varepsilon \rho \cos \theta \right) \\
&= \frac{1}{2} \left( \frac{1}{\rho_0^2} p_\theta^2 + \rho_0^2 \left( 2 - \rho_0^2 \right) \left( 1- \varepsilon \rho_0 \cos \theta \right) \right) - \varepsilon \rho_1 \left( \frac{p_\theta^2}{\rho_0^3} - 2 \rho_0 \left(1 - \rho_0^2 \right) \right)  + \mathcal{O} \left( \varepsilon^2 \right) \\
&= \frac{1}{2} \frac{1}{\rho_0^2} p_\theta^2 + \frac{1}{2} \rho_0^2 \left( 2 - \rho_0^2 \right) \left(1- \varepsilon \rho_0 \cos \theta \right) + \mathcal{O} \left( \varepsilon^2 \right),
\end{align*}
and is actually independent of $\rho_1$ and $P_1$ to first order in $\varepsilon$. Note that we have dropped the subscript 1.
Finally, the transition states are given to order $\varepsilon$ as the level sets of the restricted Hamiltonian function, $N_E = H_N^{-1} \left(E \right)$.

The approximate dividing surfaces are then the level sets of an approximate dividing manifold chosen to be
$$S = \lbrace z \in M \vert r = \rho_0 \left( p_\theta \right) + \varepsilon \rho_1 \left( \theta, p_\theta \right) + \order{\varepsilon^2} \rbrace.$$
This spans the approximate transition manifold, which is not invariant, so it does not have minimal geometric flux and the two halves will not be unidirectional, in general. However, there are a true transition manifold and dividing manifold nearby. The true transition manifold being derived by normal hyperbolicity and the true dividing manifold by our construction of Section \ref{variation}.

\begin{figure}
	\centering
	\begin{pspicture}(7,5.5)
	%\psgrid
		\rput(3.5,2.6){\includegraphics[width=0.4\textwidth,height=0.4\textwidth]{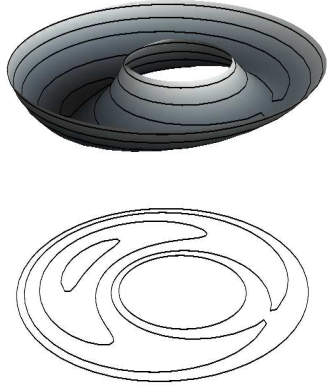}}
	\end{pspicture}
	\caption{Graph of the Hamiltonian function restricted to the transition manifold $H_N$, over an annulus in $\left(\theta , p_\theta \right)$, for the disconnecting example and its projections showing the transition states.
	}
	\label{VTMFig}
\end{figure}

We now consider the topology of the transition states and the dividing surfaces.
Starting with the transition state, we find that within the normally hyperbolic region, the restricted Hamiltonian function $H_N$ has critical points $\tilde{z}_1 = \left(0,0 \right)$ and $\tilde{z}_2 = \left(\pi,0 \right)$ with $\rho_0 = 1$. These have index $\tilde{\lambda}_1=0$ and $\tilde{\lambda}_2=1$, and energies $\frac{1}{2} \left(1-\varepsilon \right)$ and $\frac{1}{2} \left(1+\varepsilon \right)$, respectively.
Starting from the critical point with least energy, $\tilde{z}_1$, by the Morse lemma and Theorem \ref{thmA}, for energies below that at $\tilde{z}_2$ the transition state is diffeomorphic to a circle, $\mathbb{S}^1$. Increasing the energy and passing the critical point $\tilde{z}_2$ results in a bifurcation and the topology of $N_E$ changes, according to Theorem \ref{thmB}, to $2\mathbb{S}^1$, see \figurename~\ref{VTMFig}.
Thus, we have found our first example of a transition state bifurcation, and therefore of a transition state not diffeomorphic to $\mathbb{S}^{2m-3}$, namely $2\mathbb{S}^1 \ncong \mathbb{S}^1$. Similarly, we see that the dividing surface bifurcates and changes from a sphere $\mathbb{S}^2$ to a torus $\mathbb{T}^2$. It can be useful for extrapolation to higher degrees of freedom to write the transition state as $\mathbb{S}^0 \times \mathbb{S}^1$ ($\mathbb{S}^0$ being the two-point set $\lbrace \pm 1\rbrace$) and the dividing surface as $\mathbb{S}^1\times \mathbb{S}^1$.

Care must be taken in studying the Morse bifurcations, as the critical points of the restricted Hamiltonian functions are also critical points of the original Hamiltonian and therefore cause a change in the topology of the energy levels. In this example, the bottleneck opens up and the energy levels change topology, but we can still distinguish two regions and consider transport between them.

Morse theory applies to manifolds of all dimensions. This example can therefore be coupled to another (or more) oscillating degree of freedom to give a 3 degree of freedom system with Hamiltonian function
\begin{equation*}
H\left( r, \theta, q, p_r, p_\theta, p \right) =  \frac{1}{2} \left(p_r^2 + \frac{1}{r^2}p_\theta^2 \right) + \frac{1}{2} r^2 \left( 2 - r^2 \right) \left( 1 - \varepsilon r \cos \theta \right) + \frac{\beta}{2} \left( p^2 + q^2 \right) + \delta V \left( r, \theta, q \right).
\end{equation*}
In the uncoupled case, with $\delta=0$, there is no energy transfer with the new degree of freedom, so we can effectively consider the original volcano system and the oscillator separately. For energy above the maximum on the volcano rim in the volcano degree of freedom, the transition state bifurcates from $\mathbb{S}^3$ to $\mathbb{S}^2 \times \mathbb{S}^1$ and the dividing surface from $\mathbb{S}^4$ to $\mathbb{S}^3 \times \mathbb{S}^1$.
A small perturbation, $\delta \neq 0$, couples the degrees of freedom, but the normally hyperbolic transition manifold persists, along with the Morse bifurcation, so the same scenario occurs.
Specific  examples of higher degree of freedom systems exhibiting this Morse bifurcation will be seen in Section~\ref{bimolecular} where we consider bimolecular reactions.

%---------------------------------------------------------------------------------------

\subsection{Example. Connecting transition states}
\label{ezra}

\begin{figure}
	\centering
	\begin{pspicture}(8,4)
		\rput(4,2){\includegraphics[width=0.45\textwidth]{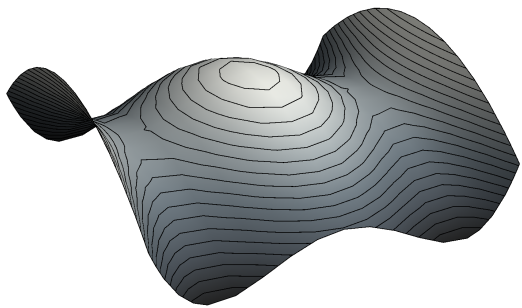}}
		%\psgrid
		%\rput(5.6,2.2){$z_2$}
	\end{pspicture}
	\caption{Graph of the potential energy for the connecting transition states example.}
	\label{EPotFig}
\end{figure}

This example is found in applications such as narcissistic isomerisation reactions, that is chemical reactions in which a given molecule changes from one of its stereoisomers to the mirror image. References to this and other chemical reactions in which this bifurcation appears can be found in Ezra and Wiggins \cite{ezra2009phase}, where this example is also considered. They however focus on a neighbourhood of the index-2 critical point of the Hamiltonian function and the influence of this critical point on the transport, whereas we consider the complete picture.

The Hamiltonian system in question has $T^* \mathbb{R}^2$ as its state space, with its  canonical symplectic form and the Hamiltonian function
\begin{equation*}
H \left(z \right) = \frac{\alpha_2}{4} + \frac{\alpha_1}{2} \left(y^2 - x^2 \right) + \frac{\alpha_2}{2} \left(v^2 - u^2 \right) + \frac{\alpha_2}{4} u^4,
\end{equation*}
where $z = (x,u,y,v)$ and $\alpha_1, \alpha_2 \in \mathbb{R}^+$. The critical points of the Hamiltonian function are the origin, $\bar{z}_0$, and $\bar{z}_{\pm} = \left(0,\pm 1,0,0 \right)$, with index 2 and 1, respectively.

We are interested in transport between the two regions on either side of the two index-1 critical points and therefore the $x$-axis, see \figurename~\ref{EPotFig}. We therefore expand the Hamiltonian function about these critical points by shifting the $u$-axis, namely $u = \pm 1 + q$ and letting $v=p$, to get
\begin{equation*}
H \left(z \right) = \frac{\alpha_1}{2} \left(y^2 - x^2 \right) + \frac{\alpha_2}{2} \left(p^2 + 2 q^2 \right) + H_n \left( z \right) ,
\end{equation*}
with the higher order terms $ H_n \left( z \right) = \pm \alpha_2 q^3 + \frac{\alpha_2}{4} q^4$. Thus the centre subspaces of the critical points are seen to be $ \hat{N} \left( \bar{z}_\pm \right) = \lbrace z \in M \vert x=y=0 \rbrace $. Seeing as the system is uncoupled, the (local) centre manifolds can be chosen to be equal to the centre subspaces.

The two centre manifolds form part of a larger codimension-2 invariant submanifold, given by
\begin{equation*}
N = \lbrace z \in M \vert x=y=0 \rbrace,
\end{equation*}
for which we must check the stability, ensuring that we have normal hyperbolicity and so a transition manifold. This is done by linearising the vector field about $N$ and comparing the linear flows in the the normal, with $\eta_1 = 2^{-1/2} \left( \del_x - \del_y \right)$ and $\eta_2 = 2^{-1/2} \left( \del_x + \del_y \right)$, and the tangent, with $\xi_1 = \del_q$ and $\xi_2 = \del_p$, directions. The linearised equations of motion, about a point $\tilde{z}=\left(\tilde{q},\tilde{p} \right)$ in $N$, are
$$\dot{v} = 
\left(
\begin{array}{cccc}
 0 & \alpha_2 & 0 & 0 \\
 \alpha_2 - 3 \tilde{q}^2 & 0 & 0 & 0 \\
 0 & 0 & \alpha_1 & 0 \\
 0 & 0  & 0 & - \alpha_1
\end{array}
\right)
v, $$
where $ \nu= v_1 \xi_1 + v_2 \xi_2 + v_3 \eta_1 + v_4 \eta_2$. 
The normal dynamics are clearly hyperbolic. Instead, the dynamics tangent to $N$ depend on the point $\tilde{z}$ on the manifold. For $\left(\alpha_2 - 3 \tilde{q}^2 \right)<0$ the motion is elliptic, whereas for $\left(\alpha_2 - 3 \tilde{q}^2 \right)>0$ it is hyperbolic. Although in this uncoupled system we do not need $N$ to be normally hyperbolic, we do to continue the conclusions to cases with small coupling. We therefore compute a condition ensuring that the normal dynamics still dominates the tangent one. The coefficient $\left(\alpha_2 - 3 \tilde{q}^2 \right)$ is greatest when $\tilde{q} =0$, thus with $\alpha_1 > \alpha_2$ the transition manifold is normally hyperbolic. In the basic scenario, (half of) the normally hyperbolic degree of freedom gives the transport direction. At the critical point $\bar{z}_0$ however, the two directions ``compete'' because the tangent dynamics becomes hyperbolic. If the transition manifold stays normally hyperbolic, i.e. the potential energy (surface) is steepest in the $x$ direction, then the transport coordinate is preserved.

Finding a dividing manifold for this example is easy due to the lack of coupling. A simple fibration of a neighbourhood of $N$ has fibres given by $F_{\tilde{z}} = \lbrace u = v = 0 \rbrace$. Restricting the Hamiltonian function to such fibres gives the necessary normal form Hamiltonian. The functions $A^-_{\tilde{z}} \left( z \right) = y$, $A^+_{\tilde{z}} \left( z \right) = \lbrace H, A^-_{\tilde{z}} \rbrace = - \alpha_1 x$ then satisfy the necessary conditions of Section \ref{variation}. Therefore, a dividing manifold is given by
$$S = \lbrace z \in M \vert x = 0 \rbrace.$$

\begin{figure}
	\centering
	\begin{pspicture}(8,4)
		\rput(4,1.6){\includegraphics[width=0.5\textwidth]{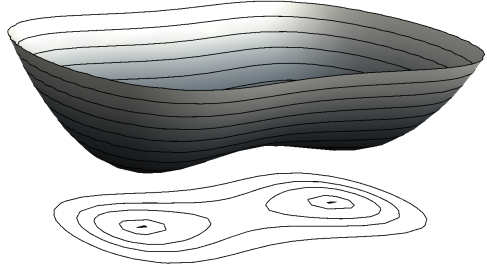}}
		%\psgrid
	\end{pspicture}
	\caption{Graph of the Hamiltonian function restricted to the transition manifold for the connecting example and its projections showing the transition states.}
	\label{ETMFig}
\end{figure}

As usual, we can write the transition states as level sets of the Hamiltonian function restricted to $N$ (see \figurename~\ref{ETMFig}),
$$H_N \left( u,v \right) = \frac{\alpha_2}{4} + \frac{\alpha_2}{2} \left(v^2 - u^2 \right) + \frac{\alpha_2}{4} u^4.$$
This has the origin, $\tilde{z}_0$, and $\tilde{z}_{\pm} =\left(\pm 1, 0 \right)$ as its critical points, with indices $\tilde{\lambda}_0=1$, $\tilde{\lambda}_\pm=0$. Thus the transition states bifurcate and change from $2\mathbb{S}^1$ to $1\mathbb{S}^1$. That is, as the energy is increased, the periodic orbits emanating from $\tilde{z}_\pm$ meet in homoclinic bifurcations at $\tilde{z}_0$ and connect to become one.
For energies below that at $\bar{z}_0$, this example therefore exhibits a transition state different from the usual basic scenario periodic orbit. This transition state is however the disjoint union of two periodic orbits, so if we had restricted our attention to a single index-1 critical point, we could have easily missed this more global picture.

Similarly, by considering $H_S$, we find that two dividing surfaces diffeomorphic to $\mathbb{S}^2$ about the index-1 critical points of $H$ connect to form a single sphere, $\mathbb{S}^2$, as the energy is increased.
The energy levels also bifurcate as we pass the critical point $\bar{z}_0$. Starting with an energy just above that at $\bar{z}_\pm$ and increasing it, we see that the two bottlenecks open up until they meet and become one, with the two regions of interest remaining the same.

We have considered here the uncoupled, symmetric case in which the three critical points are aligned on $x=0$ with $u \to -u$ symmetry. However, due to the persistence of normally hyperbolic submanifolds, adding coupling between the $(x,y)$ and $(u,v)$ degrees of freedom will not alter the conclusions about the bifurcation. For this, the normal hyperbolicity condition $\alpha_1 > \alpha_2$ is essential. We could also break the $u \to -u$ symmetry, in which case the saddles have different energy, so as the energy increases we first obtain one $\mathbb{S}^1$ then $2\mathbb{S}^1$, followed at the index-2 energy by qualitatively the same transition to $1\mathbb{S}^1$.

Now consider coupling our example to another oscillating degree of freedom. The Hamiltonian function for this could be
\begin{equation*}
H \left(z \right) = \frac{\alpha_2}{4} + \frac{\alpha_1}{2} \left(y^2 - x^2 \right) + \frac{\alpha_2}{2} \left(v^2 - u^2 \right) + \frac{\alpha_2}{4} u^4 + \beta \left(p^2 + q^2 \right) + \delta V \left( x, u, q \right) .
\end{equation*}
In the uncoupled case with $\delta = 0$, 
the transition state bifurcates from $2\mathbb{S}^3$ to $1\mathbb{S}^3$ and the dividing surface from $2\mathbb{S}^4$ to $1\mathbb{S}^4$.
In the coupled case, provided the coupling is sufficiently small, we can treat it as a perturbation of the uncoupled case and invoke the persistence of normally hyperbolic submanifolds to obtain topologically the same picture.

%------------------------------------------------------------------------------------------

\subsection{Other Morse bifurcations}

If we restrict our attention to 2 degrees of freedom natural Hamiltonian systems, the critical points of the restricted Hamiltonian $H_N$ can only have index 0 or 1. 
At index-0 critical points a transition state is ``created'', whereas at critical energies corresponding to index-1 critical points, seeing as the transition state is closed, the transition state is generically a \textit{figure eight} (more complicated cases can occur if there are several critical points with the same energy).
Thus, the only generic bifurcations scenarios are the connection and disconnection ones found in Subsections \ref{volcano} and \ref{ezra}. 

It should be noted however, that this limitation on the types of Morse bifurcations of the transition states does not place significant restrictions on the bifurcations of the dividing surfaces. Just as we have found genus-1 dividing surfaces (as well as genus-0), we expect any genus surface should be possible. We expect that limitations will instead come from the transport problems and that these dividing surfaces will only appear in Hamiltonian systems for which the transport problem is not well defined. 

For natural systems with 3 degrees of freedom or higher, we have seen how the connecting and disconnecting scenarios with index-1 critical points of the restricted Hamiltonian function can be coupled to other degrees of freedom. Such systems may also have higher index critical points, which will give rise to other Morse bifurcations.
Explicit examples of connecting, disconnecting and also higher index Morse bifurcations will be seen in the next section in which a hypothetical class of planar bimolecular reactions is considered as an application of the previous sections.
Here we find various sequences of Morse bifurcations.

%--------------------------------------------------------------------------

\section{Application. Planar bimolecular reactions}
\label{bimolecular}

As an application of our results, we shall consider (elementary) bimolecular reactions in gaseous phase,
$$A + B \rightarrow \text{ products},$$
and show how under a qualitative assumption on the interaction potential these can display interesting transition states and dividing surfaces, as a result of Morse bifurcations.
These have not been seen until now, as most studies have focused on the colinear and the zero angular momentum cases.
Here we shall consider the planar case, in which the Morse bifurcations stand out, and comment on the spatial case, leaving the detailed analysis to a subsequent paper. Some analysis of the structures in reaction dynamics in rotating molecules has been done recently in \cite{Ciftci2012}, but we are interested in the interaction of two rotating molecules.

Physically, we have a gas consisting of large numbers of $A$ and $B$ (polyatomic) molecules contained in some volume.
Reactions satisfying the usual assumptions of transition state theory can be described, using the Born-Oppenheimer approximation, as classical dynamics of the nuclei interacting via a potential given by the (ground state) energy of the electrons\footnote{Assumed non-degenerate and hence a smooth energy function, else it can have conical singularities, see e.g. \cite{Domcke2004}.} as a function of the internuclear coordinates, see e.g. Keck \cite{Keck1967}.
We thus have a very high dimensional Hamiltonian system representing a large number of interacting point particles.

Then, by assuming that the system is at any instant the product of ``reacting'' two molecule sub-systems that are independent of each other, we may consider an ensemble of these low dimensional Hamiltonian systems representing the reaction. 
That is we require the gas to be sufficiently dilute. 

The transport problem we consider is that of finding the rate of transport between the region of state space representing two distant molecules (reactants) and the region in which the molecules are close. The latter do not however generally constitute the products. Although the molecules must collide to react, whether the bonds are then broken and they go on to produce new molecules, with new bonds, is generally governed by further barriers. The rate obtained from this transport problem is therefore often referred to as the ``collision'' or ``capture'' rate. It provides a useful upper bound on the (equilibrium) reaction rate constant \cite{Chesnavich1980}. See also Keck \cite{Keck1967} and Henriksen and Hansen \cite[section 5.1]{Henriksen2008} for details of how to compute the rate constant from the flux (of state space volume).

Here, we shall always assume that the systems have Euclidean, i.e. translational and rotational, symmetry so that they can be reduced to a family of systems, in centre of mass frame, parametrised by angular momentum. 

The transport problem is clearly seen in the trivial example of a bimolecular reaction between two atoms (or charged particles).
Reducing the atom plus atom reacting system, 
we obtain a central force field Hamiltonian system $\left( T^* \mathbb{R}_+, \omega_0, H_\mu \right)$ with canonical sympletic form and
$$H_\mu \left(z \right) = \frac{1}{2m} p_r^2 + U_\mu \left( r \right), \quad U_\mu \left( r \right) = \frac{1}{2m}\frac{\mu^2}{r^2} + U \left( r \right),$$
where $m$ is the reduced mass, $\mu = \lvert L \rvert$ is the magnitude of the angular momentum, and $U_\mu$ is the \textit{effective potential} with the \textit{centrifugal term}.

The coordinate $r$ measures distance between the two atoms. We are interested in the case in which $U$ has a non-degenerate maximum at some $r = \bar R$, sufficiently large to keep us away from $r = 0$ which we are not interested in.
In this case, $U_\mu$ will also have a non-degenerate maximum at some $r = \bar R_\mu$, provided $\mu$ is not too large.
Note that this assumption is not necessary, as for certain attractive (away from $r = 0$) potentials $U$, the interplay between the potential and the centrifugal term will give a non-degenerate maximum of $U_\mu$, for some choices of $\mu$.
However, its hyperbolicity will depend on this interplay, so to keep the examples simple, especially in higher degrees of freedom, we assume that $U$ has a non-degenerate maximum.

The system has no centre directions at the critical point, seeing as there is only the radial degree of freedom. We can however still talk of transition and dividing manifolds in state space. Let the transition manifold be the critical point itself $N = \lbrace z \in M \vert p_r = 0, r = \bar R_\mu \rbrace$, so dimension-0.
A dividing manifold is then $S = \lbrace z \in M \vert r = \bar R_\mu \rbrace $ of dimension-1 and spanning $N$. The transition state is not defined for $E\neq E_a$, but the dividing surface for $E> E_a$ is $S_E = H_S (E) \cong \mathbb{S}^0$, i.e. two points.
For such potentials and with no other degrees of freedom, every barrier crossing will be reversed shortly after (except for $\mu=0$, if we do not consider repulsion).

The celestial two-body Kepler problem is nicely reviewed by Smale \cite{Smale1970}, as part of his ``topological program for mechanics'', in which he considers Hamiltonian systems with symmetries, the topology of their reduced state space and their reduced dynamics.

We shall now comment on the reduction procedures for molecular $n$-body problems.

%-------------------------------------------------------------

\subsection{Molecular $n_a + n_b$ body problems}

The low-dimensional reacting system for two polyatomic molecules $A$ and $B$, with $n_a$ and $n_b$ atoms respectively, is given by $\left( T^* \mathbb R^{3(n_a + n_b)}, \omega_0 , H \right)$, where $\omega_0$ is the canonical symplectic form.
This is a natural Hamiltonian system on a cotangent bundle and has a Euclidean $SE(3) = \mathbb R^3 \otimes SO(3)$ group action on configuration space, whose lifted action on state space gives the system a symmetry, namely translational and rotational.

The translational symmetry is related to the conservation of the (total) linear momentum, via Noether's theorem. Reducing this symmetry gives the translation-invariant system $\left( T^* \mathbb R^{3(n_a + n_b -1)}, \omega_0 , H \right)$, crossed with the trivial centre of mass system that is ignored.

The rotational symmetry instead corresponds to an (lifted) $SO(3)$ action and is related to the conservation of angular momentum. This $SO(3)$ action is not free on the whole state space, so reduction gives a stratified reduced state space $M_\mu$ with symplectic strata of different dimensions.
Singular reduction theory, though quite involved, provides a global geometric picture.
Alternatively, seeing as the system is natural and has a lifted action, we can perform the reduction in configuration space instead by looking at the gauge theoretic cotangent bundle reduction picture, which is straightforward and gives an explicit choice of charts. 
This is nicely reviewed by Littlejohn and Reinsch \cite{Littlejohn1997}. 
The gauge in question is related to the Coriolis force.

First the translation symmetry is dealt with by choosing Jacobi vectors for the reacting system,
and forgetting about the trivial centre of mass system, thus obtaining the translation-invariant system. We choose to place a Jacobi vector along the line between the centres of the two molecules, the other vectors then describe the two molecules and can be chosen in a number of ways. See \figurename~\ref{jacobiFig} for the case of two reacting diatoms in the plane.
\begin{figure}
	\centering
	\def\JPicScale{0.5}
	\input{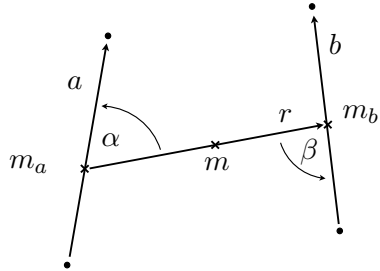}
	\caption{Choice of Jacobi vectors for two interacting diatoms in the plane, showing the vector along the line of centres between the molecules $r$ and the two other Jacobi vectors $a$ and $b$. The reduced masses are $m$, $m_{a}$, and $m_{b}$.}
	\label{jacobiFig}
\end{figure}
Then, we consider the separate cases for the $SO(3)$ action on configuration space, namely action on non-colinear, colinear and collisional configurations (of particles), and introduce a rotating frame. In the non-colinear case, e.g.~such that the line of centres is along the $x$-axis, $r = x$.
This corresponds to taking a section of $\tilde{Q}$, non-collinear configuration space, viewed as a fibre bundle with the orbits of the $SO(3)$ action as fibres, and quotient manifold $\tilde{\Sigma} = \tilde{Q}/SO(3)$, known as the \textit{internal space} (or \textit{shape space}), as a base space \cite{Littlejohn1997}. This gives a specific choice of coordinates.
Finally, seeing as the magnitude of the angular momentum is conserved, the angular momentum lives on a 2-sphere $\mathbb{S}^2_\mu$ and its dynamics is coupled to the internal motions, cf. the rigid body case.
One can complete the reduction by introducing canonical (Serret-Andoyer) coordinates on $\mathbb{S}^2_\mu$, but these introduce coordinate singularities and two charts are needed \cite{Deprit1967b}.

The state space of the reduced system is then a stratified fibre bundle,
and the dynamics can be considered separately on each stratum \cite{Iwai2005a}.
Considering the full dynamics across the strata is not straight forward.
Due to these complications, we consider planar systems that offer simple examples with which to study Morse bifurcations in bimolecular reactions without having to deal with technicalities of singularities and charts.
The interesting transition states and dividing surfaces are in no way limited to this case.

%-------------------------------------------------------------

\subsection{Example 1. Planar atom plus diatom reactions}

The planar reduced three-body Hamiltonian for a bimolecular reaction between an atom and a diatom can be derived by choosing Jacobi vectors and then introducing a rotating frame, as explained above.
We choose a rotating frame such that the line of centres is along the $x$-axis, generally referred to as the $xxy$-gauge. The reduced Hamiltonian system is $\left( T^* \mathbb R^{3}_+, \omega_\mu , H_\mu \right)$ with
\begin{align*}
H_\mu \left(z \right) &= \frac{1}{2}\left(\frac{p_r^2}{m}+\frac{p_b^2}{m_b} + g^\beta p_\beta^2 \right) + U_\mu \left( r, b ,\beta \right) , \\
U_\mu \left( r, b ,\beta \right) &= \frac{1}{2} \frac{\mu^2}{I_3} + U \left( r, b ,\beta \right),\\
\omega_\mu &= \rmd q_i \wedge \rmd p_i 
- \mu \rmd \beta \wedge  \left( \del_{r} A_{\beta} \rmd r + \del_{b} A_{\beta} \rmd b \right),
\end{align*}
where $r$ is the distance from the atom to the centre of the diatom, $b$ is the length of the diatom, $\beta$ is its angle relative to the vector $r$, $\mu = \vert l \vert$ is the magnitude of the angular momentum, $m$ and $m_b$ are reduced masses, $ I_3 = m r^2+m_b b^2$ is the relevant part of the moment of inertia, $g^\beta = I_3/ \left( m m_b r^2 b^2 \right)$ is a component of the internal metric, and $A_\beta = m_b b^2 / I_3$ is the non-zero component of the gauge potential related to the Coriolis force. Note that we have chosen non-canonical coordinates in which the Coriolis effects come from the symplectic form. This simplifies the Hamiltonian function.

Assume that the potential $ U$ has a non-degenerate maximum with respect to $r$, at some $r = \bar{R} \left( b, \beta \right)$, which is sufficiently large and depends weakly on $b$, $\beta$. Then for $\mu$ small enough, the effective potential $U_\mu $ will have a non-degenerate maximum at some $r = \bar{R}_\mu \left( b, \beta \right)$, and depend weakly on $b, \beta$.
Note that for $\mu$ large, $U_\mu $ has no maximum any more, corresponding to distance of closest approach being too large. 

About the non-degenerate maximum in $r$, there is an almost invariant normally hyperbolic submanifold
$$N_0 = \lbrace z \in M_\mu \vert r = \bar{R}_\mu \left( b, \beta \right), p_r = 0 \rbrace.$$
This follows from the assumptions on the potential, namely that $\del_r U_\mu ( \bar{R}_\mu \left( b, \beta \right) , b, \beta ) = 0$, and that $\bar{R}_\mu $ is large so the vector field is almost tangent to $N_0$. Furthermore the assumption that the $b$ and $\beta$ dependence is weak implies that on $N_0$ the tangent dynamics is slower than the hyperbolic normal dynamics. 
Note that we are interested in the whole $(\beta, p_\beta)$ degree of freedom and do not necessarily require an index-1 critical point, though generically these will be present and we will consider the non-degenerate case with a critical point in the example Morse bifurcations below. 

We then choose as an approximate dividing manifold spanning $N_0$, the submanifold
$$S_0 = \lbrace z \in M_\mu \vert r = \bar{R}_\mu \left( b, \beta \right) \rbrace.$$

The Hamiltonian restricted to the approximate transition manifold $N_0$ is 
$$H_{N_0} \left( \tilde{z} \right) = \frac{1}{2} \left( \frac{p_{b}^2}{m_b}+ \bar{g}^\beta p_\beta^2 \right) +  \frac{1}{2} \frac{\mu^2}{\bar{I}_3} + \bar{U} \left( b ,\beta \right).$$

We now make explicit assumptions on the other degrees of freedom and therefore the potential, namely that as a function of $b$, $\bar{U}$ has a single non-degenerate minimum $\bar{b}(\beta)$.
We then consider three cases for the dependence of the potential on the angle $\beta$, see \figurename~\ref{atomDiatom}.
\begin{figure}
\begin{center}
\begin{pspicture}(14,4)
		\rput(2,2){\includegraphics[width=0.24\textwidth]{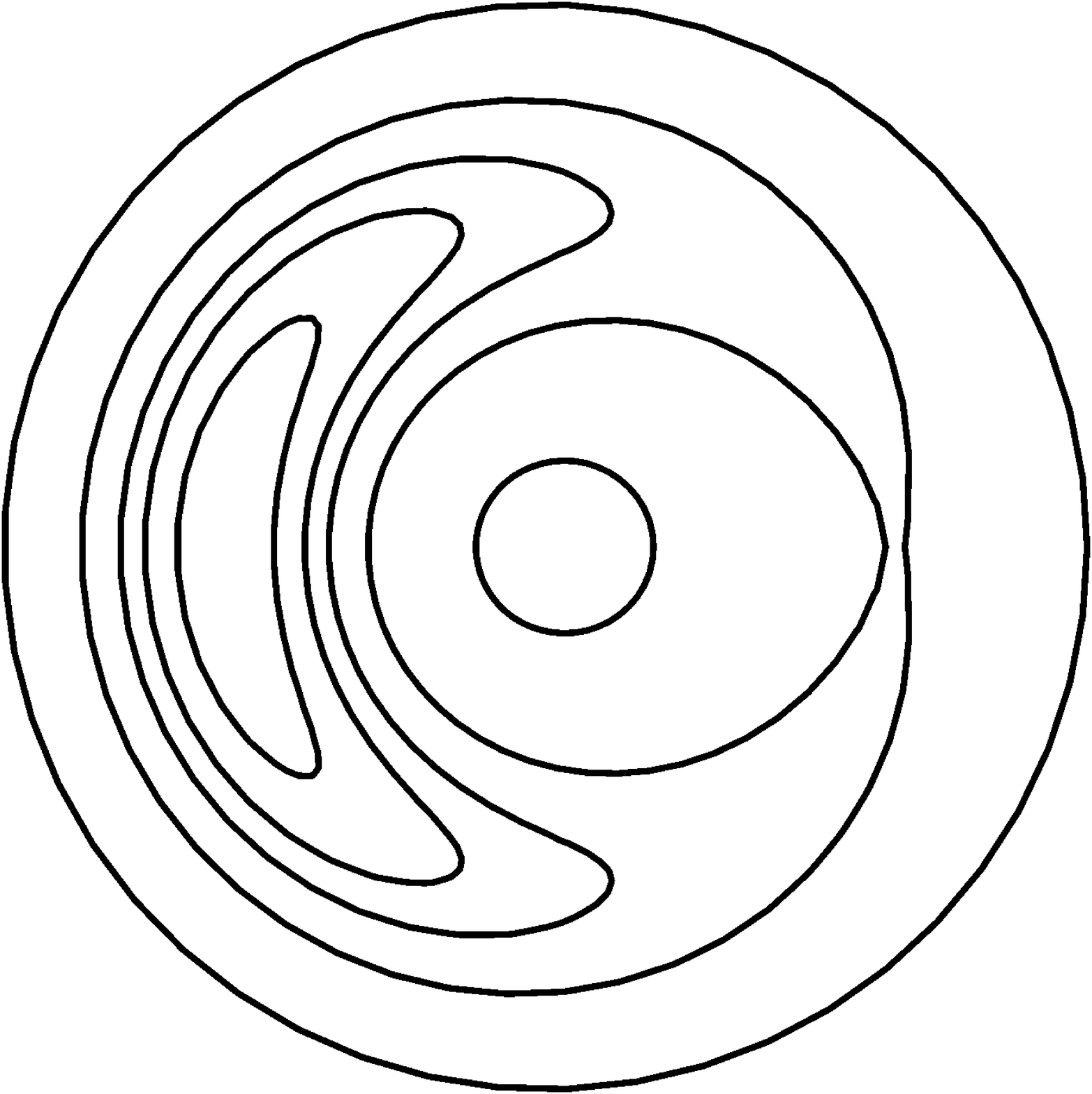}}
		\rput(7,2){\includegraphics[width=0.24\textwidth]{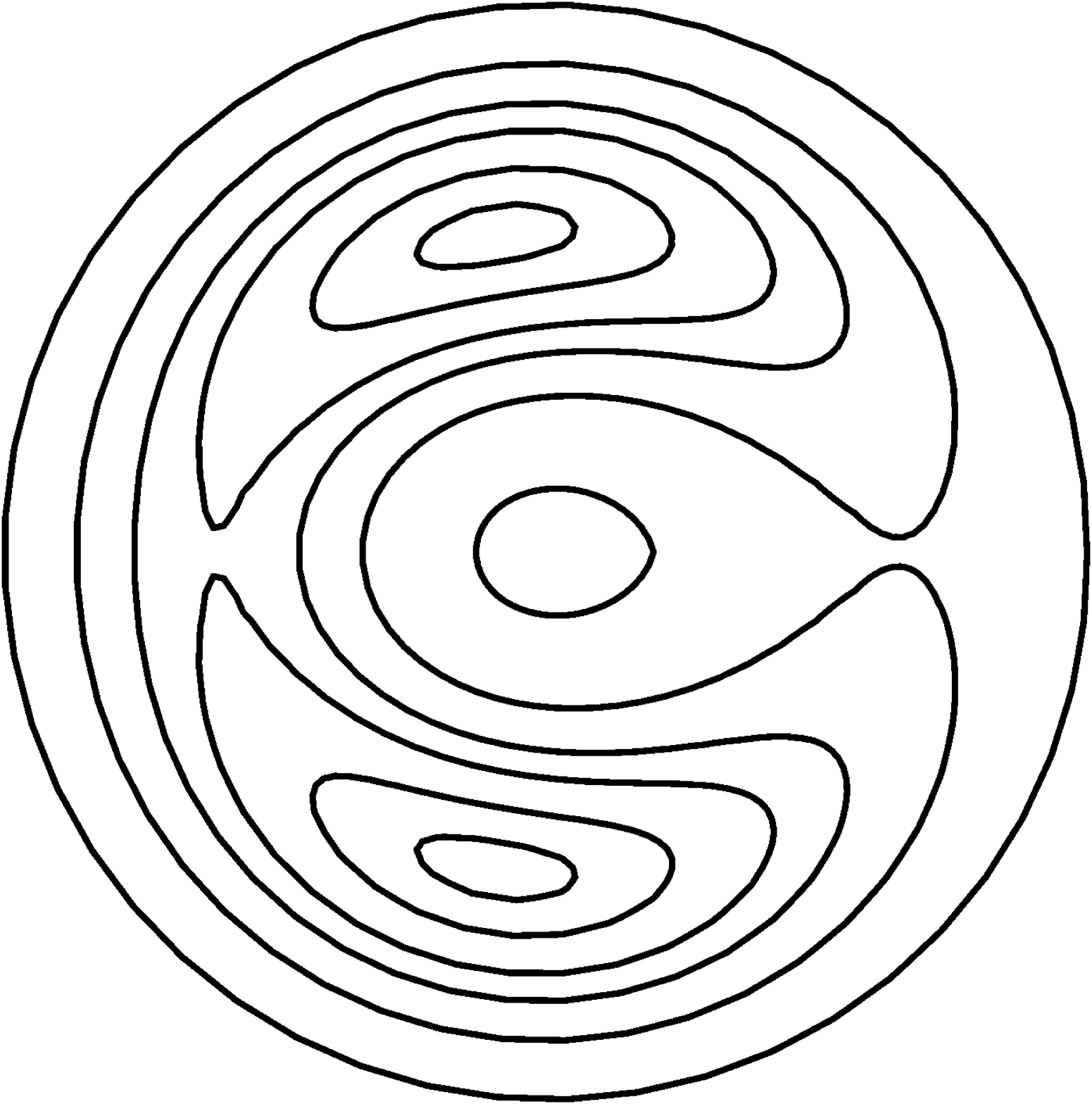}}
		\rput(12,2){\includegraphics[width=0.255\textwidth]{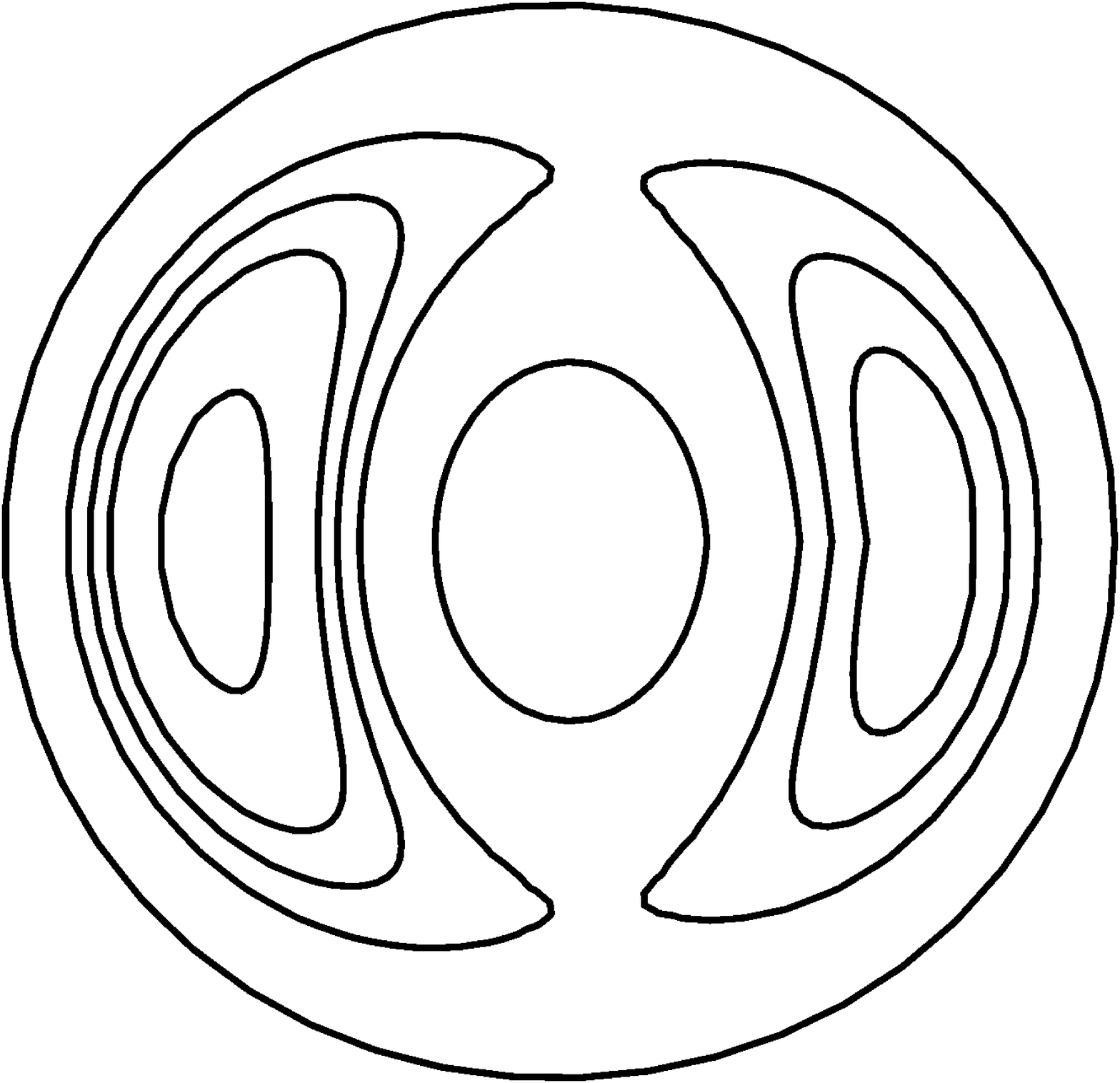}}
		%\psgrid
	\end{pspicture}
\end{center}
\caption{Contours of the restricted potential $\bar U$ for the atom plus diatom reaction in polar coordinates $(b,\beta)$. Left: Case 1. Atom attracted to one of the sides of the diatom (e.g. ion plus dipole). Centre: Case 2. Reaction prefers orthogonal configuration (e.g. atom plus non-symmetric non-polar diatom). Right: Case 3. Reaction prefers aligned configuration (e.g. atom plus non-symmetric dipole).}
\label{atomDiatom}
\end{figure}

To determine the Morse bifurcations, we can minimise the restricted Hamiltonian $H_{N_0}$ over the $(b,p_b)$ degree of freedom, by setting $b = \bar{b}(\beta)$, $p_{b}=0$, to obtain
$$\tilde{H}_{N_0} \left( \beta, p_\beta \right) = \frac{1}{2} \tilde{g}^\beta p_\beta^2 +  \frac{1}{2} \frac{\mu^2}{\tilde{I}_3} + \tilde{U} \left( \beta \right),$$
where $\tilde{U}(\beta) = \bar{U}(\bar{b}(\beta),\beta)$, 
and consider this simplified function. Keeping in mind the extra dimensions, this function gives the Morse bifurcations seeing as the other degree of freedom is not involved in the bifurcation, so the restricted Hamiltonian $H_{N_0}$ can be written as $\tilde{H}_{N_0}$ plus a remainder which is positive definite in $(b,p_b)$, by assumption.
The graph of the function $\tilde{H}_{N_0}$ looks the same as the potential on $N_0$, by replacing $b$ with $p_\beta$, see \figurename~\ref{atomDiatom}. 
The Morse bifurcations of the transition states and dividing manifolds are the following:
\begin{itemize}
\setlength{\itemindent}{15pt}
\item[Case 1.] This is a simple disconnecting bifurcation, see Section \ref{volcano}, with extra degrees of freedom. The transition state goes from $\mathbb{S}^3$ to $\mathbb{S}^1 \times \mathbb{S}^2$, and the dividing surfaces from $\mathbb{S}^4 $ to $ \mathbb{S}^1 \times \mathbb{S}^3$.
\item[Case 2.] Here the two minima of $\bar{U}$ are at the same height, whereas the saddles are at different heights; the transition state goes from $2 \mathbb{S}^3 $ to $ \mathbb{S}^3 $ to $ \mathbb{S}^1 \times \mathbb{S}^2$, and the dividing surface from $2 \mathbb{S}^4 $ to $ \mathbb{S}^4 $ to $ \mathbb{S}^1 \times \mathbb{S}^3$.
\item[Case 3.] Here the two minima of $\bar{U}$ are at different heights, whereas the saddles are at the same height; the transition state goes from $ \mathbb{S}^3 $ to $ 2 \mathbb{S}^3 $ to $ \mathbb{S}^1 \times \mathbb{S}^2$, and the dividing surface from $\mathbb{S}^4 $ to $ 2 \mathbb{S}^4 $ to $ \mathbb{S}^1 \times \mathbb{S}^3$.
\end{itemize}
Thus, we have found interesting sequences of connecting and disconnecting bifurcations in these bimolecular reactions.

%-------------------------------------------------------------

\subsection{Example 2. Planar diatom plus diatom reactions}

The reduced reacting system for a planar diatom plus diatom bimolecular reaction, derived  as in the previous example, is $\left( T^* \mathbb R^{5}_+, \omega_\mu , H_\mu \right)$ with
\begin{align*}
H_\mu \left(z \right) &= \frac{1}{2}\left(\frac{p_r^2}{m} +\frac{p_a^2}{m_a} +\frac{p_b^2}{m_b} + \frac{p_\alpha^2}{m_a a^2} + \frac{\left( p_\alpha + p_\beta \right)^2}{m r^2} + \frac{p_\beta^2}{m_b b^2} \right) + U_\mu \left( r, a, \alpha, b ,\beta \right) , \\
U_\mu \left( r, a, \alpha, b ,\beta \right) &= \frac{1}{2} \frac{\mu^2}{I_3} + U \left( r, a, \alpha, b ,\beta \right),\\
\omega_\mu &= \rmd q_i \wedge \rmd p_i 
- \mu \left( \rmd \alpha \wedge  \del_{q_j} A_{\alpha} \rmd q_j + \rmd \beta \wedge \del_{q_k} A_{\beta} \rmd q_k \right),
\end{align*}
where $(a,\alpha)$ and $(b,\beta)$ describe the length and angle of the two diatoms, $\mu = \vert l \vert$ is the magnitude of the angular momentum, $m$, $m_a$ and $m_b$ are reduced masses, $ I_3 = m r^2 +m_a a^2 +m_b b^2$ is the relevant part of the moment of inertia, 
$A_\alpha = m_a a^2 / I_3$ and $A_\beta = m_b b^2 / I_3$ are the non-zero components of the gauge potentials.

The same assumptions as before give an approximate normally hyperbolic submanifold,
$$N_0 = \lbrace z \in M_\mu \vert r = \bar{R}_\mu \left( a, \alpha, b, \beta \right), p_r = 0 \rbrace,$$
where $R_\mu$ is the non-degenerate maximum of $U_\mu$ with respect to $r$, and is only weakly dependent on the other degrees of freedom.
Similarly, we choose the approximate dividing manifold $S_0= \lbrace r = \bar{R}_\mu \left( a, \alpha, b, \beta \right) \rbrace$.

Now, assuming that $U$ has a non-degenerate minimum with respect to $(a,b)$ for each $(r,\alpha,\beta)$,
we minimise over these degrees of freedom and consider the ``bifurcational'' restricted Hamiltonian
$$\tilde{H}_{N_0} \left( \alpha, \beta, p_\alpha, p_\beta \right) = \frac{1}{2} \left( \frac{p_\alpha^2}{m_a \bar a^2} + \frac{\left( p_\alpha + p_\beta \right)^2}{m \bar{R}_\mu^2} + \frac{p_\beta^2}{m_b \bar b^2} \right) +  \frac{1}{2} \frac{\mu^2}{\tilde{I}_3} + \tilde{U} \left( \alpha, \beta \right).$$

\begin{figure}
\begin{center}
\begin{pspicture}(12.5,5)
		\rput(2.5,2.5){\includegraphics[width=0.285\textwidth]{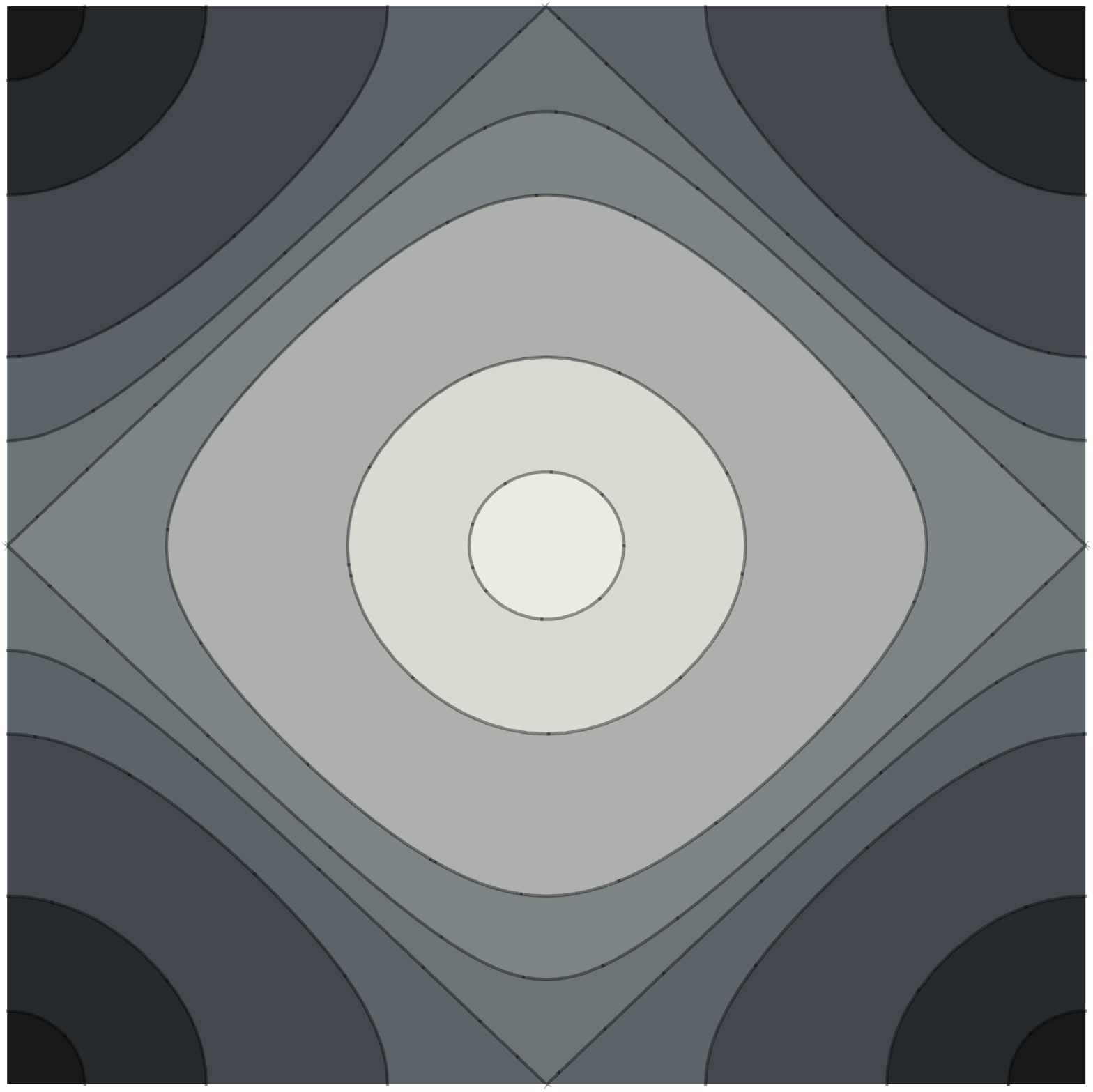}}
		\rput(9,2.5){\includegraphics[width=0.285\textwidth]{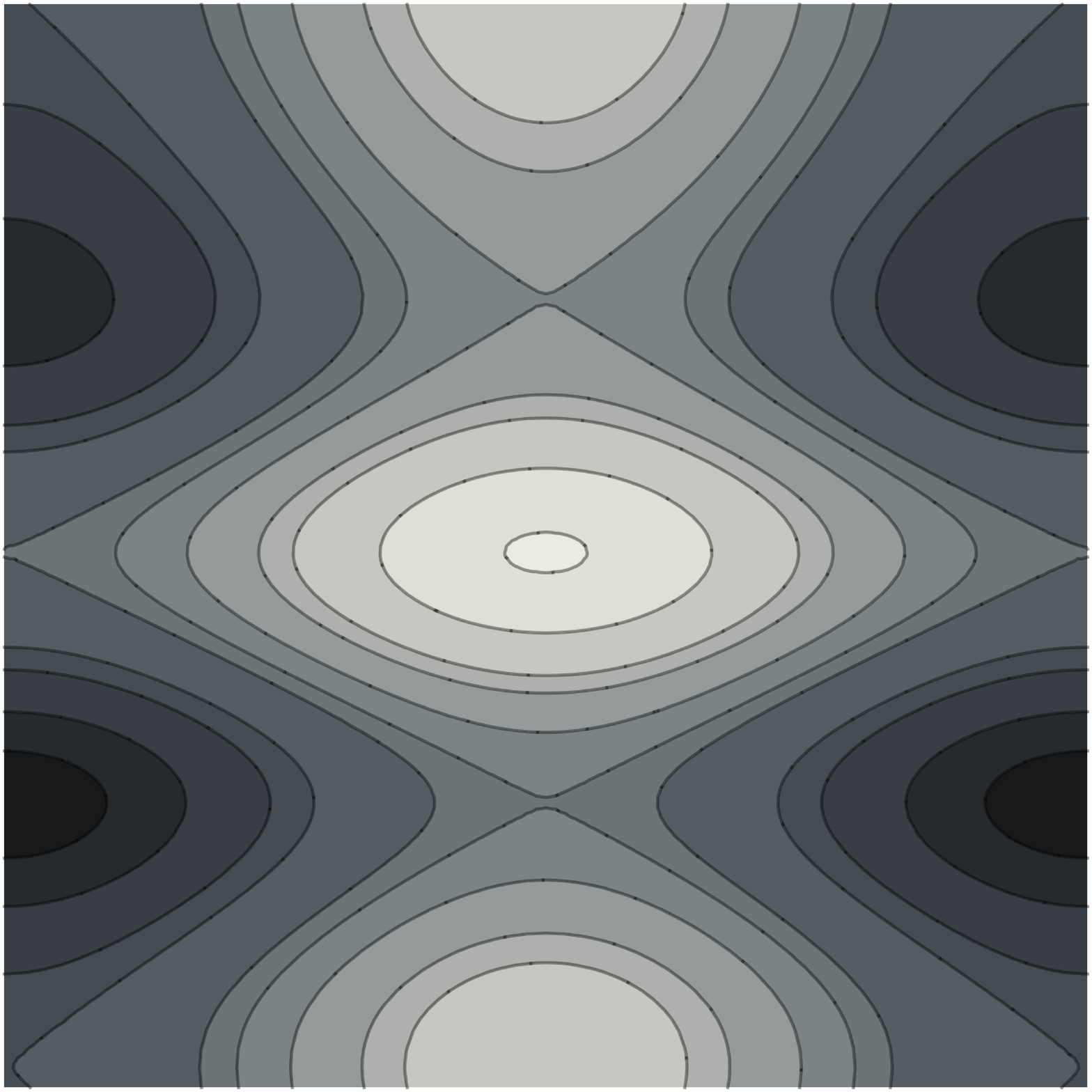}}
		%\psgrid
	\end{pspicture}
\end{center}
\caption{Contours of the frozen restricted potential $\tilde U$ for the diatom plus diatom reaction (darker shading means lower energy). Case 1. Simplest possible Morse function on $\mathbb T^2$ (assuming distinct saddles). Case 2. Possible restricted potential for dipole dipole reaction.}
\label{diatomDiatom}
\end{figure}

The behaviour depends on the qualitative form of the restricted potential $\tilde{U}$ on the 2-torus of $(\alpha,\beta)$. The Betti numbers of a torus $\mathbb{T}^2$ are 1,2,1. Thus, by the Morse inequalities, the simplest possible Morse function on $\mathbb{T}^2$ has four critical points of index 0,1,1,2. Assuming distinct saddle energies, contours of such a Morse function are given in \figurename~\ref{diatomDiatom}. Taking this as the restricted potential $\tilde{U}$ over $\mathbb{T}^2$, the resulting system would have a transition state bifurcating from $\mathbb{S}^7$ to $\mathbb{S}^1 \times \mathbb{S}^6$ to $\del X$ to $\mathbb{T}^2 \times \mathbb{S}^5$, where $\del X$ is the boundary of the handlebody $X = \left( \mathbb{S}^1 \times \mathbb{B}^7 \right) \cup_\psi \left( \mathbb{B}^1 \times \mathbb{B}^7 \right)$ given by Theorem \ref{thmB} and has no standard name. Similarly, the dividing surface would go from $\mathbb{S}^8$ to $\mathbb{S}^1 \times \mathbb{S}^7$ to $\del P$ to $\mathbb{T}^2 \times \mathbb{S}^6$, with $P$ again given by Theorem \ref{thmB}.

A more realistic frozen restricted potential $\tilde{U}$ for the case of two interacting dipoles will likely have more than 4 critical points thus leading to a longer sequence of Morse bifurcations. The contours of a possible example restricted potential with eight critical points at distinct heights are given in \figurename~\ref{diatomDiatom}. This has a sequence of critical points of index 0, 0, 1, 1, 1, 1, 2, 2. The transition state therefore goes from 
$$\mathbb{S}^7 \text{ to } 2 \mathbb{S}^7 \text{ to } \mathbb{S}^7 \text{ to } \mathbb{S}^1 \times \mathbb{S}^6 \text{ to } \del X \text{ to } \del Y \text{ to } \del Z \text{ to } \mathbb{T}^2 \times \mathbb{S}^5,$$
where again $\del X$, $\del Y$ and $\del Z$ are the boundaries of handlebodies given by Theorem \ref{thmB}. The dividing surface changes from
$$\mathbb{S}^8 \text{ to } 2 \mathbb{S}^8 \text{ to } \mathbb{S}^8 \text{ to } \mathbb{S}^1 \times \mathbb{S}^7 \text{ to } \del P \text{ to } \del Q \text{ to } \del R \text{ to } \mathbb{T}^2 \times \mathbb{S}^6,$$
again ending up diffeomorphic to $\mathbb{T}^2 \times \mathbb{S}^6$.

%----------------------------------------------------------

\section{Conclusions and discussion}
\label{conclude}

We have shown the existence of a class of systems for which the dividing surface method can be extended beyond the well known basic transport scenario by using Morse theory. 
This can be used to better understand many important transport problems, an example being bimolecular reaction, as seen from the hypothetical examples studied in Section \ref{bimolecular}. By considering the various different sequences of Morse bifurcations we were able to find interesting new transition states and dividing surfaces for the full non-colinear case, which has mostly been ignored until now. Here, we have considered the case of planar bimolecular reactions that provide simpler examples with which to focus on the Morse bifurcations. The spatial case, which displays similar but higher dimensional sequences of Morse bifurcations, reduces to an interesting Hamiltonian system with an additional angular momentum degree of freedom. This will be considered in a future publication.

Generally, we expect to see Morse bifurcations in a large class of transport problems from various applications.
In the context of applications, numerical methods to find and continue transition manifolds (through the Morse bifurcations) would be ideal. Some numerical methods for normally hyperbolic submanifolds do exist, however the high dimensionality of the transition manifolds and the global nature of the problem poses serious problems.

A natural question is how the flux of energy-surface volume through the dividing surface varies, as a function of energy, through a Morse bifurcation. In Appendix \ref{fluxCh}, we see that except in the 2 degree of freedom case the Morse bifurcations do not have a significant effect on the flux, which varies $C^{m-2}$ smoothly through these. In 2 degrees of freedom, the flux has a $-\Delta E \ln \vert\Delta E\vert$ infinite-slope singularity at an index-1 saddle on the transition manifold.

We have concentrated here on how the transition state and dividing surface vary with energy, but in a system depending smoothly on other parameters, a Morse bifurcation at energy $E_b$ for parameter value $\mu_b$ implies a Morse bifurcation at some nearby smoothly varying energy $E(\mu)$ for parameter $\mu$ near $\mu_b$. Thus for generic (i.e.~non-tangential) paths in the combined space $(E,\mu)$ there is a Morse bifurcation on crossing $ E= E(\mu)$.

Of the remaining open questions regarding the dividing surface method, one that is related to this work is  what happens when a dividing manifold cannot be defined over a large enough region of state space, such that even though one finds a sufficiently large normally hyperbolic potential transition manifold, which could be restricted to transition states, one cannot find dividing surfaces spanning them above certain energies? This breakdown of the dividing manifold, brought about by the intersection of the stable and unstable submanifolds of the transition manifolds, and its effect on transport problems needs to be investigated. For the case of 2 degrees of freedom, see e.g. Davis \cite{Davis1987}.

Finally there is the question of bifurcations leading to the loss of normal hyperbolicity of the transition manifold (for systems with more than 2 degrees of freedom) and their effect on Hamiltonian transport.
There have been some studies investigating the
loss of normal hyperbolicity for submanifolds of dimension greater than one, see e.g. \cite{li2009bifurcation, Teramoto2011, Allahem2012} and references therein, but it is still not understood. However, for Hamiltonian systems we expect that certain normally hyperbolic submanifolds will have extra geometric structure, namely symplecticity, and that this may provide an alternative way of understanding these complicated situations. See Conjecture \ref{claim} and the loss of symplectic nature of the transition manifold in the disconnecting example of Section \ref{volcano}.

%-------------------------------------------------------------------------------

\section*{Acknowledgements}
Invitation by Holger Waalkens and Arseni Goussev to attend a workshop at the University of Bristol in 2009 stimulated MacKay to point out the existence of the Morse bifurcations of transition states.
We would like to thank the Universit\'e Libre de Bruxelles for hospitality during part of the time in which this was written (Sep 2010 - Mar 2011), 
and Simone Gutt for sharing her insight on symplectic fibrations. 
Strub was supported by an EPSRC studentship.

%-------------------------------------------------------------------------------

% Appendix
\appendix

%-----------------------------------------------------------------------------------

%\newpage
\section{Changes in flux of energy-surface volume at Morse bifurcations}
\label{fluxCh}

In order to find the rate of transport in a Hamiltonian system, once we have chosen a dividing surface $S_E$, we must find the flux of energy-surface volume through it in a given direction. In this appendix we address the shape of the flux as a function of energy. For example, we wish to connect to experiments such as in \cite{mackay1991a} (a different RS MacKay, we hasten to add!). Our approach is similar to that of Van Hove in his study of the singularities in the elastic frequency distribution of crystals \cite{VanHove}, which is related to the singularities of density of states, now known as Van Hove singularities.

Recall from Section \ref{divSurfs} that for regular energy levels $M_E$, the flux through a surface $S_E^+$ with boundary $N_E$ is
$$\phi_E \left( S_E^+ \right) = -\int_{N_E} \Lambda,$$
where $\Lambda = \frac{1}{\left( m - 1 \right) !} \lambda \wedge \omega^{m-2}$ is the ``generalised'' action form.

In the basic transport scenario, for small $\Delta E = E- E_a > 0$, the transition state $N_E = H_N^{-1} (E)$ is a small $(2m-3)$-sphere, so we can consider the leading orders of the restricted Hamiltonian $H_N$ in Williamson normal form. Then computing the integral, for example by passing to canonical action-angle variables, $J_i = \left( p_i^2 + q_i^2 \right)/2$ and $\theta_i = \arctan \left( p_i / q_i \right)$, and using Stokes theorem to integrate the volume of the ball $N_{\leq E}$ instead of the generalised action over the boundary $N_E$, gives the flux to leading order
$$\phi_E \left( S_E^+ \right) = \frac{\Delta E^{m-1}}{(m-1)!} \prod_{i=1}^{m-1} \frac{2 \pi}{\beta_i}.$$

Increasing the energy from that of the index-1 critical point $E_a$ of the basic scenario, we find that $N_E \cong \mathbb{S}^{2m-3}$ until the next critical value of $H_N$, by Theorem \ref{thmA}.

Away from critical values of $H_N$, the flux $\phi_E (S_E^+)$ as a function of the energy $E$ is $C^r$ if the restricted Hamiltonian is itself $C^r$. Note that even if $H$ is $C^\infty$, $H_N$ need not be very smooth, the most derivatives we can typically assume being given by the ratio of normal to tangential expansion at $N$. To see why the flux is $C^r$, use Stokes' theorem to write 
$$\phi_E (S_E^+)=-\int_{N_E} \Lambda = \int_{N_{\leq E}} V,$$
where $V=-\rmd \Lambda = \phi_E$ is a volume form on $N_{\leq E}$. Then consider an $e<E$ for which there are no critical values of $H_N$ in $[e,E]$ and rewrite the flux as
$$\phi_E (S_E^+)= \text{vol}(N_{\leq e}) + \int_{ H_N^{-1} ([e,E])} V.$$
Now, choose coordinates $z = (n,u,v) \in M$ so that $N_e = \{ z \in M \vert n=0,v=0 \}$ and $V = \rmd v \wedge V_e$ for some volume form $V_e$ on $N_e$. Then $N_E$ can be written as a graph over $N_e$, namely
$$N_E = \{ z \in M \vert H_N(z) = E \} = \{ z \in M \vert n=0,v=\bar{v}(u,E) \},$$
by the implicit function theorem, as $dH \neq 0$. The function $\bar{v}$ is $C^r$ if $H_N$ is $C^r$. Finally,
$$\phi_E (S_E^+) - \text{vol}(N_{\leq e}) 
%= \int_{ \{e \leq H_N(z) \leq E \} } \rmd v \wedge V_e 
= \int_{ N_e } \bar{v}(u,E) V_e$$
is $C^r$ in $E$. 

We shall therefore focus on flux for $E$ near a critical value $E_b$. The transition state cannot generally be defined in terms of local coordinates. However, by Theorem \ref{thmB} the sub-level sets just above a Morse bifurcation can be written as a handlebody, composed of a lower sub-level set and a handle, from which we deduce that interesting changes in the flux occur in a neighbourhood of the critical point, the contribution from the rest being $C^r$ and we will assume that $r$ is sufficiently large. Hence, we are concerned with changes in flux through Morse bifurcations and shall evaluate our integrals only over the handle region, see \figurename~\ref{thmBFig}.

The Morse lemma coordinates for a neighbourhood of a critical point $\bar{z}$ are not canonical so in general we do not know the form of the generalised action and cannot work with them. However, seeing as $N$ is symplectic, Darboux coordinates can be chosen in a neighbourhood of the critical point $\bar{z}$ in question such that the symplectic form is canonical.
Furthermore, the restricted Hamiltonian $H_N$ can be written in Williamson normal norm.

For an index-1 critical point $\bar{z}_2$ of $H_N$, the leading order terms of the Williamson normal form are
$$H_N \left(\tilde{z} \right) = E_b + \frac{\alpha_1}{2} \left(v_1^2 - u_1^2 \right) + \sum_{j=2}^{m-1} \frac{\beta_{j}}{2} \left(v_{j}^2 + u_{j}^2 \right).$$
Integrating the volume form $\phi_E = \omega^{m-1}/(m - 1)!$ over the sub-level set $N_{\leq E}$ restricted to a neighbourhood of $\bar{z}_2$ gives the contribution to the flux from this neighbourhood. Here we find a term of the form
\begin{align*} 
\vert\Delta E\vert^{m-1} &\ln \vert\Delta E\vert \text{ for } \Delta E < 0, \\
-\Delta E^{m-1} &\ln \vert\Delta E\vert \text{ for } \Delta E > 0,
\end{align*}
as well as terms with various powers of $\Delta E$. This term limits the smoothness of the flux as a function of the energy to $C^{m-2}$.

Similarly, we can consider how the flux changes at Morse bifurcations involving a critical point of any index. Ultimately, we are studying the differentiability of the volume of level sets about critical values. This has been studied by Hoveijn \cite{Hoveijn}, who tells us that the smoothness will always be limited to $C^{m-2}$, irrespective of the index. However, the nature of the discontinuity does depend on the index \cite[Prop.9]{Hoveijn}.

In conclusion, except when the number $m$ of degrees of freedom is small, Morse bifurcations do not have a significant effect on the flux of energy-surface volume, which varies $C^{m-2}$ smoothly through these. Provided $m-2 < r$, the Morse bifurcation will cause a small kink in the graph of $\phi_E (S_E^+)$ over $E$. We do not however expect that these will be visible from experimentally obtained reaction rates, in which other physical considerations probably have a larger impact on the shape of the graph.

%--------------------------------------------------------------------------------------

%\newpage
\section{Approximating normally hyperbolic (symplectic) submanifolds of Hamiltonian systems}
\label{nhms}

Loosely speaking, a smooth, compact (possibly with boundary) invariant submanifold of a dynamical system is said to be normally hyperbolic if the linearised dynamics in the normal direction is hyperbolic and dominates the linearised tangent dynamics. 
Here, we shall recall a precise definition of a normally hyperbolic submanifold
and then present a method of finding approximations to normally hyperbolic symplectic submanifolds taken from MacKay's lectures on slow manifolds \cite{MacKay2004slow}.

We choose to work with the following
\begin{defn} Consider a dynamical system $\left( M , f^t \right) $ consisting of a $C^1$ flow $f^t$ on a smooth manifold $M$, and choose a Riemannian metric on $M$.
Let $N$ be a compact (possibly with boundary) $C^1$ submanifold of $M$ that is invariant under the flow, i.e. $f^t \left( N \right) = N$.
We say that $N$ is a \textit{normally hyperbolic submanifold} (of the dynamical system) if
the tangent bundle of $M$ restricted to $N$, $T_N M$, can be split continuously
$$T_N M = TN \oplus E^+ \oplus E^-,$$
such that $TN \oplus E^\pm$ are invariant under $D f^t$ for all $t$
and there exist real numbers $ k^\pm, k > 0$ and $ 0 \leq \beta < \alpha$,
such that for all $\tilde{z} \in N$, we have the following growth rates
\begin{align*}
\Vert \pi^+ \circ D_{\tilde{z}} f^t \vert_{E^+} \Vert &\leq k^+ e^{\alpha t} , \quad \forall t \leq 0, \\
\Vert \pi^- \circ D_{\tilde{z}} f^t \vert_{E^-} \Vert &\leq k^- e^{-\alpha t} , \quad \forall t \geq 0, \\
\Vert D_{\tilde{z}} f^t \vert_{TN} \Vert &\leq k e^{\beta \vert t \vert } , \quad \forall t \in \mathbb{R},
\end{align*}
where $\pi^\pm: T_N M \rightarrow E^\pm$ are the projections induced by the splitting.
\end{defn}

This is stronger than the usual definition, say that of Fenichel \cite[\S~IV]{Fenichel1971} as we can see from his uniformity lemma, but appropriate for our purposes. 

\begin{rmk}[Choice of splitting] Note that generally the normal hyperbolicity will depend on the choice of splitting.
There exists an invariant splitting that simplifies the theory, and is forced upon the definition in most of the literature. However, this choice of splitting is unnecessary and when considering concrete examples finding it can be cumbersome. Like Fenichel \cite{Fenichel1971}, which uses a Riemannian splitting, we choose a general splitting that is not invariant.
\end{rmk}

For the general properties of normally hyperbolic submanifolds, such as persistence, stable and unstable manifolds and smoothness results, see e.g. Fenichel \cite{Fenichel1971} or Hirsch, Pugh and Shub \cite{hirsch1977invariant}.

Given a vector field $X$ generating the flow $f^t$, the linearised flow $D f^t$ about the normally hyperbolic submanifold $N$ satisfies the (first) variation equation
$$ \frac{\rmd}{\rmd t} \left( D_{\tilde{z}} f^t \left( \nu \right) \right) = D_{f^t \left( \tilde{z} \right)} X \cdot D_{\tilde{z}} f^t \left( \nu \right) ,$$
for $\tilde{z} \in N$, $\nu \in T_{\tilde{z}} M$.
The splitting allows us to write $\nu = v_1 \xi + v_2 \eta_+ + v_3 \eta_-$. Let $v = \left( v_1,v_2, v_3 \right)$, and re-write the variation equation as 
$$\dot{v} = 
\left(
\begin{array}{ccc}
 T & C_+ & C_- \\
 0 & V_+ & 0 \\
 0 & 0 & V_-
\end{array}
\right)
v,$$
where we have used the invariance conditions.
Thus, by asking that
$$\Vert V_+^{-1} \Vert^{-1} \geq \alpha, \quad \Vert V_-^{-1} \Vert^{-1} \geq \alpha,  \quad \Vert T \Vert \leq \beta, \quad \Vert C_\pm \Vert \; \text{bounded},$$
we recover the conditions on the linearised flow from the definition. This way, for Hamiltonian systems we could ``re-write'' the definition in terms of properties of the linearised Hamiltonian (of the variation equation).

A main theorem on normally hyperbolic submanifolds tells us that given an ``almost invariant'' normally hyperbolic submanifold $N_0$ of a dynamical system, meaning that on $N_0$ the normal component of the vector field is small, there exists a true normally hyperbolic submanifold $N$ nearby.
Here, we are only interested in symplectic, normally hyperbolic submanifolds of Hamiltonian systems and we want to find sufficiently good approximations to the Hamiltonian on such a normally hyperbolic submanifold to deduce the sequence of Morse bifurcations as energy is increased. Thus, we will show how to find a better approximation of an almost invariant normally hyperbolic symplectic submanifold $N_0$, using a symplectically orthogonal fibration of its neighbourhood.
The approach follows MacKay's lectures that present the slow manifold case \cite{MacKay2004slow}. 

\begin{thm}
\label{approxNHM}
Every almost invariant, normally hyperbolic symplectic submanifold $N_0$ of a Hamiltonian system can be improved to one that also contains all nearby equilibria and has a smaller angle to the vector field.
\end{thm}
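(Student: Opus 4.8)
The plan is to realise the improvement as a single step of a slow-manifold–type iteration, built on the symplectically orthogonal fibration of Section~\ref{variation}. Since $N_0$ is symplectic, the splitting $TM=\text{Vert}\oplus\text{Hor}$ with $\text{Hor}=\text{Vert}^\omega$ furnishes a symplectic fibration $\left(U,N_0,\pi,F\right)$ of a neighbourhood $U$ of $N_0$, whose fibres $F_{\tilde z}$ are symplectic and transverse to $N_0$. I would then define the improved submanifold exactly as in the volcano example of Section~\ref{volcano}, namely as the fibrewise critical locus
\[
N_1 = \lbrace z \in U \mid \rmd_z H_{F_{\tilde z}} = 0 \rbrace,
\]
where $H_{F_{\tilde z}}$ is the restriction of $H$ to the fibre through $z$.

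The crucial computation is that the vertical component of $X_H$ is exactly the fibrewise Hamiltonian field. For any vertical $\eta$, symplectic orthogonality gives $\symp{X_H^{\text{Hor}}}{\eta}=0$, so $\rmd H(\eta)=\symp{X_H}{\eta}=\symp{X_H^{\text{Vert}}}{\eta}=\omega_{F_{\tilde z}}(X_H^{\text{Vert}},\eta)$, whence $X_H^{\text{Vert}}=X_{H_{F_{\tilde z}}}$. On $N_0$ the vertical directions are precisely the normal directions, so the normal velocity on $N_0$ is $X_{H_{F_{\tilde z}}}$, small by almost-invariance. Because $N_0$ is normally hyperbolic, the fibre dynamics is hyperbolic, so the fibrewise Hessian of $H_{F_{\tilde z}}$ is invertible with inverse controlled by the normal rate; the implicit function theorem then produces a unique fibrewise critical point $W(\tilde z)$ in each fibre, at distance $\order{\lVert X_{H_{F_{\tilde z}}}\rVert}$ from $N_0$, so that, writing $w$ for the fibre coordinate, $N_1=\lbrace w=W(\tilde z)\rbrace$ is a well-defined $C^1$ submanifold $C^1$-close to $N_0$. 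The equilibrium property is then immediate: if $\bar z$ is an equilibrium of $X_H$ near $N_0$ then $X_H^{\text{Vert}}(\bar z)=0$, so $X_{H_{F_{\tilde z}}}(\bar z)=0$, and non-degeneracy of $\omega_{F_{\tilde z}}$ forces $\rmd_{\bar z}H_{F_{\tilde z}}=0$, i.e.\ $\bar z\in N_1$.

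The heart of the argument, and the main obstacle, is the angle estimate. By construction $X_H^{\text{Vert}}$ vanishes identically on $N_1$, so there $X_H$ is purely horizontal; the only residual failure of invariance is the mismatch between this horizontal field and $TN_1$, which is the graph of $DW$ over $TN_0$. A short calculation shows this residual normal component equals, to leading order, $-DW\cdot g$, where $g$ is the tangential velocity. Writing $W\approx -A^{-1}f_0$ with $A$ the fibrewise linearisation and $f_0$ the original normal error, one finds $DW=\order{\lVert f_0\rVert/\alpha}$, so the new error is $\order{\lVert f_0\rVert\,\lvert g\rvert/\alpha}$. Normal hyperbolicity supplies the gap $\beta<\alpha$ between tangential and normal rates of Appendix~\ref{nhms}, controlling $\lvert g\rvert$ by the tangential rate $\beta$, so the error contracts by the factor $\beta/\alpha<1$ and the angle strictly decreases. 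The delicate point is to carry out this estimate uniformly over all of $N_0$, so that the fibration and the implicit-function solution exist globally, and with the correct norms, since it is precisely the domination of normal over tangential rates, and not mere transversality, that makes the step contracting.

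Finally, because being symplectic and being normally hyperbolic are $C^1$-open conditions, both persist for the $C^1$-close $N_1$; hence the step may be iterated, the angles shrinking geometrically by the gap ratio, and the resulting sequence converges to a genuine invariant, still symplectic and normally hyperbolic, submanifold, recovering the true transition manifold promised by the persistence theorem recalled in Appendix~\ref{nhms}.
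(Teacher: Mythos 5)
Your construction is the same as the paper's: take the symplectically orthogonal fibration of a tubular neighbourhood of $N_0$, observe that the vertical part of $X_H$ is the fibrewise Hamiltonian field $X_{H_{F_{\tilde z}}}$, use hyperbolicity of the fibre dynamics plus the implicit function theorem to locate a unique fibrewise critical point, define $N_1$ as its graph (so it contains all nearby equilibria for free), and then estimate the residual normal error as (slope of $N_1$) $\times$ (tangential velocity). For the single improvement step the theorem asserts, this is the paper's proof.

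Two quantitative points in your angle estimate are off, though. First, writing $W\approx -A^{-1}f_0$ and concluding $DW=\order{\lVert f_0\rVert/\alpha}$ is not right: differentiating along the base gives $DW\approx -\left(\del^2_{nn}H\right)^{-1}\del^2_{hn}H$, so the slope of $N_1$ is controlled by the \emph{mixed second derivative} $\delta=\lvert\del^2_{hn}H\rvert$ (equivalently $Df_0$), not by the $C^0$ size of the normal error; if $f_0$ is small but oscillatory the bound you wrote fails, and the paper's conclusion is correspondingly $X_H-X_{H_{N_1}}=\order{\alpha^{-1}\delta\,\lvert X_{H_{N_1}}\rvert}$. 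Second, identifying the gain factor with $\beta/\alpha$ conflates the tangential \emph{velocity} $\lvert g\rvert=\lvert X_{H_{N_1}}\rvert$ with the tangential \emph{growth rate} $\beta$ of the normal-hyperbolicity definition; the improvement is that the new error is proportional to $\lvert X_{H_{N_1}}\rvert$ with constant $\alpha^{-1}\delta$, i.e.\ the angle to the vector field becomes small, not that it contracts by the spectral-gap ratio. Finally, your closing claim that the step iterates with geometrically shrinking angles to produce the true invariant manifold is precisely what the paper warns is \emph{not} straightforward: beyond the first iteration one must re-choose a nearly symplectically orthogonal fibration adapted to $N_1$ at each step, and the paper deliberately stops after one step, obtaining the true manifold from the persistence theorem instead.
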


\begin{proof}
Given a Hamiltonian system $\left( M^{2m}, \omega, H \right)$ 
with vector field $X_H$ generating a flow $f^t$, and a 
symplectic submanifold $N_0$, consider a symplectic fibration of a { tubular} neighbourhood $U \subset M$ of $N_0$, $\pi: U \rightarrow N_0 : z \mapsto \tilde{z}$, { as defined in Section \ref{variation}}. The vertical subbundle is given by
$$\text{Vert}_z = \ker \rmd_z \pi = T_z F_{\tilde{z}}, \quad \forall z \in U,$$
and by choosing the horizontal subbundle to be symplectically orthogonal to the vertical subbundle, i.e.
$$ \text{Hor}_z = \text{Vert}_z^\omega,$$
we obtain a symplectic splitting of the tangent bundle in the neighbourhood
$$T M = \text{Vert} \oplus^\omega \text{Hor}.$$

{ Seeing as $N_0$ is a  symplectic submanifold of $M$, we can choose
(local) Darboux coordinates $z = \left(x,q,y,p \right)$ for a neighbourhood $U_0$ of $\tilde{z} \in N_0$ in which $\omega = \rmd x \wedge \rmd y + \rmd q \wedge \rmd p$ and such that $N_0 = \lbrace z \in M \vert x = y = 0 \rbrace$ and $F_{\tilde{z}} = \lbrace z \in M \vert q = \tilde q , \; p = \tilde p \rbrace$, where $\tilde z = (0, \tilde q, 0, \tilde p )$ and everything is restricted to $U_0$. We shall often write $h = (q,p)$ and $n = (x,y)$. To justify this local chart, we must note that we have restricted to a neighbourhood $V_0$ of $\tilde z$, such that $U_0 = \pi^{-1} (V_0)$ and by the local trivialisation is diffeomorphic to $V_0 \times F$. Then the chart is the symplectomorphism to $(\mathbb{R}^{2m},\omega_0)$ given by the symplectic neighbourhood theorem (see e.g.~\cite[Thm 3.30]{McDuff}). In these coordinates, the tangent space $\text{Hor}_z = \text{span} \lbrace \del_h \rbrace$ for $z \in F_{\tilde z}$, but globally the fibration may not be trivial and the Hor subbundle is not necessarily integrable (as discussed by Guillemin et al.~\cite[section 1.3]{Guillemin1996}).}
Now, we can write the { equations of motion} as
$${ \dot{z} = J \; DH \left( z \right) ,}$$
and the variation equation is 
$${ \dot v = J\;  D^2 H \left( f^t(\tilde{z}) \right) v. }$$

The assumptions of almost invariance, i.e. $\Vert D H \vert_{F_{\tilde{z}}} \left( \tilde{z} \right) \Vert \leq \varepsilon$ small for $\tilde{z} \in N_0$, and that the normal dynamics is hyperbolic, which can be written as $\Vert D^2 H \vert_{F_{\tilde{z}}} \left( \tilde{z} \right)^{-1} \Vert^{-1} \geq \alpha >0$, together with the implicit function theorem give the existence of a locally unique critical point { $ n_c ( \tilde{h} )$} of $H_{F_{\tilde{z}}}$ that is within approximately $\alpha^{-1} \varepsilon$ of $N_0$ and depends smoothly on { $\tilde{z} = (\tilde h, 0) \in N_0$}. Then define the new approximate submanifold $N_1$ to be the graph of $n_c$, so in particular $N_1$ contains all nearby true equilibria of the system. 
Note that finding $N_1$ does not require any special coordinates.
However, our choice of Darboux coordinates will now be used to
show that $N_1$ is a better type of approximation to the true normally hyperbolic submanifold $N$ than $N_0$, in the sense that the angle of the vector field to $N_1$ is small (called ``first order'' in \cite{MacKay2004slow}).  

Firstly, the restriction $\omega_{N_1}$ of $\omega$ to $N_1 $ is non-degenerate, and we use it to define $X_{H_{N_1}}$ tangent to $N_1$ via $\omega_{N_1} ( X_{H_{N_1}} , \zeta )= \rmd H_{N_1} ( \zeta)$ for all $\zeta \in T_z N_1$. Then, to check that $X_H - X_{H_{N_1}}$ is small compared to $X_{H_{N_1}}$, 
we first find that 
$\vert \symp{\zeta}{\eta} \vert \leq c \alpha^{-1} \delta \vert \zeta \vert \vert \eta \vert$ for $\zeta \in T_z N_1$, $\eta \in T_z F_{\tilde{z}}$ with $z \in N_1$, 
$\delta = \vert \del_{hn}^2 H (z) \vert$ and $c$ slightly larger than 1.
This can be seen by splitting the vectors tangent to $N_1$ into a horizontal and vertical part, namely $\zeta = \zeta_h \del_h + \zeta_n \del_n \in T_z N_1$, and writing $DH \vert_{F_{\tilde{z}}} (h, n_c)$ as $\del_n H (h, n_c)$. The tangent vectors satisfy
$$ \rmd (\del_n H) (\zeta) = \del^2_{hn} H (h, n_c) \zeta_h + \del^2_{nn} H (h, n_c) \zeta_n = 0,$$
so 
$$ \zeta_n = - (\del^2_{nn} H (h, n_c))^{-1} \del^2_{hn} H (h, n_c) \zeta_h = D_h n_c ( h) \zeta_h.$$
Then $\vert D_h n_c ( h) \vert \le c \alpha^{-1} \delta$ and $\vert \symp{\zeta}{\eta} \vert \leq c \alpha^{-1} \delta \vert \zeta \vert \vert \eta \vert$.
Next, we note that due to the definition of $N_1$, at $z \in N_1$ the vector field satisfies $\omega (X_H, \eta) =0$ and $\omega (X_H , \zeta ) = \omega (X_{H_{N_1}} , \zeta )$.
Finally, for a general $\nu \in T_z M$, split it as $\nu = \zeta + \eta$ with $\zeta \in T_z N_1$, $\eta \in T_z F_{\tilde{z}}$, then
\begin{align*}
\omega (X_H - X_{H_{N_1}} , \nu ) &= \omega ( X_H - X_{H_{N_1}} ,\zeta + \eta ) \\
&= \symp{X_H}{\zeta} + \symp{X_H}{\eta} - \omega ( X_{H_{N_1}} , \zeta) - \omega ( X_{H_{N_1}} , \eta ) \\
&= \mathcal{O} ( \alpha^{-1} \delta \vert X_{H_{N_1}} \vert \vert \eta \vert ).
\end{align*}
Thus $X_H - X_{H_{N_1}} = \mathcal{O} ( \alpha^{-1} \delta \vert X_{H_{N_1}} \vert )$, as claimed.
\end{proof}

Note however that beyond the first iteration, the required procedure is more subtle than \cite{MacKay2004slow} might lead one to suppose. { In order to ensure that for the successive approximations the normal vector field is of the order of higher powers of the tangential vector field, one has to carefully choose a nearly symplectically orthogonal fibration at each subsequent step.} MacKay gave a talk at the Newton Institute in Cambridge in 2007 { where this was addressed and an incomplete draft paper of 3 March 2007 sketches the procedure, but the paper has not yet been completed.}

%--------------------------------------------------------

\section{Morse theory}
\label{morse}

Morse theory allows us to study the topology of a manifold by considering the properties of ``height'' functions on it, and vice versa. It is therefore a natural tool for Hamiltonian systems with their Hamiltonian functions. 
We briefly state a few of the definitions and theorems (without proofs) and mention how they can be used to study bifurcations. For details see e.g. Milnor \cite{milnor1963morse} or Bott \cite{Bott1982}.

Consider an $m$-dimensional smooth manifold $M$ and a smooth function $H : M \rightarrow \mathbb{R}$. Recall that, a point $\bar{z} \in M$ is \textit{critical}, relative to $H$, if $\rmd_{\bar{z}} H = 0$. Given local coordinates $\left(x \right)$ about $\bar{z}$,  we have that
$$\frac{\del H}{\del x_1} \left( \bar{z} \right) = \cdots = \frac{\del H}{\del x_m} \left( \bar{z} \right) = 0.$$
Also, for a critical point $\bar{z}$, we can define a symmetric bilinear form, $\text{Hess}_{\bar{z}}\left(H\right)$, called the \textit{Hessian}. If $\xi, \eta$ are tangent vectors at $\bar{z}$, and $X,Y$ extensions to vector fields, we let $\text{Hess}_{\bar{z}}\left(H\right) \left( \xi, \eta \right) = X_{\bar{z}} \left( Y \left( H \right) \right)$. This is symmetric and independent of the extensions \cite[\S 2]{milnor1963morse}.

We can now give the
\begin{defn}
The \textit{(Morse) index} $\lambda \left(\bar{z} \right)$ of a critical point $\bar{z}$, relative to $H$, is the maximal dimension of a subspace $V$ of the tangent space on which the Hessian, $\text{Hess}_{\bar{z}}\left(H\right)$, is negative definite, that is $\text{Hess}_{\bar{z}}\left(H\right) \left(\xi, \eta \right)<0$ for all $\xi, \eta \in V$. The \textit{nullity}  of $\bar{z}$ relative to $H$ is the dimension of the null-space, i.e. the subspace consisting of all $\eta \in T_{\bar{z}} M$ such that $\text{Hess}_{\bar{z}}\left(H\right) \left(\eta, \xi \right)=0$ for all $\xi \in T_{\bar{z}} M$.
\end{defn}
In local coordinates, the index is the number of negative eigenvalues of the local representation of the  Hessian at $\bar{z}$, $ D^2 H \left(\bar{z} \right)$, counting multiplicities. The nullity is given by dim $M$ $-$ rank $D^2 H \left(\bar{z} \right)$.
Recall that a critical point is called \textit{nondegenerate} if the Hessian has nullity zero. 

Near a nondegenerate critical point, the level sets of $H$ are quadrics given by the
\begin{morseLem}
Let $\bar{z}$ be a nondegenerate critical point, relative to $H$, of index $\lambda$. Then, in some open neighbourhood of $\bar{z}$, there are local coordinates $\left(x,y \right)$ 
taking the critical point to the origin, and for which the local representation of $H$ satisfies
$$ H \left(x,y\right) = H \left(\bar{z}\right) - \frac{1}{2} \left( x_1^2 + \cdots + x_\lambda^2 \right) + \frac{1}{2} \left( y_1^2 + \cdots + y_{m-\lambda}^2 \right).$$
\end{morseLem}

Now, for a real number $a$, $M_{\leq a} = \lbrace z \in M \vert H\left(z\right) \leq a \rbrace$ is the \textit{sub-level set}, for $a$. If $a$ is a regular value of $H$, then these are manifolds with boundary $M_a = \partial M_{\leq a} = \lbrace z \in M \vert H\left(z\right)=a \rbrace $, the \textit{level sets}. Regarding these manifolds, we have
\begin{thm}
\label{thmA}
Let $a<b$ be real numbers with $H^{-1} \left([a,b]\right)$ compact.
Suppose $H^{-1} \left([a,b]\right)$ contains no critical points of $H$. Then $M_{\leq a}$ is diffeomorphic to $M_{\leq b}$, and hence so are their boundaries, $M_b \cong M_a$. 
Furthermore $H^{-1} \left([a,b]\right) \cong  M_a \times \left[ 0, 1 \right] \cong  M_b \times \left[ 0, 1 \right]$.
\end{thm}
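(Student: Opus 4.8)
The plan is to follow the standard Morse-theoretic argument (as in Milnor, Theorem~3.1): construct a vector field whose flow pushes the level set $M_a$ upward to $M_b$ at unit speed, and then read off all three diffeomorphisms from its time-$t$ maps. The whole point is that, in the absence of critical values, $H$ itself can be used as a ``time'' coordinate transverse to the level sets.

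First I would fix a Riemannian metric on $M$ and form the gradient $\nabla H$, characterised by $\langle \nabla H, \cdot \rangle = \rmd H$. Since the compact set $H^{-1}([a,b])$ contains no critical points, $\rmd H \neq 0$ there, so $\nabla H$ is nowhere zero on this set and $\lvert \nabla H \rvert^2$ is bounded below by a positive constant. This lets me define, on a neighbourhood of $H^{-1}([a,b])$, the rescaled field $X = \nabla H / \lvert \nabla H \rvert^2$, which satisfies $\rmd H (X) = X(H) = 1$ identically on that neighbourhood.

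Next, to obtain a genuinely complete flow I would multiply $X$ by a cutoff $\rho$ that equals $1$ on $H^{-1}([a,b])$ and has compact support, so the modified field (still denoted $X$) is compactly supported and hence generates a global flow $\varphi_t$. Along any trajectory, $\tfrac{\rmd}{\rmd t} H(\varphi_t(z)) = X(H)(\varphi_t(z)) = 1$ as long as the point remains in $H^{-1}([a,b])$, so $H(\varphi_t(z)) = H(z) + t$ while the orbit stays in the region. In particular $\varphi_{b-a}$ carries $M_a$ diffeomorphically onto $M_b$ and $M_{\leq a}$ onto $M_{\leq b}$, with inverse $\varphi_{-(b-a)}$, establishing the first two assertions.

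Finally, the product-structure claim follows by considering the map $F : M_a \times [0,1] \to H^{-1}([a,b])$, $F(z,s) = \varphi_{s(b-a)}(z)$. Because $H$ strictly increases along each orbit, every trajectory meets each intermediate level exactly once, so $F$ is a bijection; it is smooth with smooth inverse $w \mapsto \big(\varphi_{a-H(w)}(w),\, (H(w)-a)/(b-a)\big)$, hence a diffeomorphism, and composing with $\varphi_{b-a}$ gives the version over $M_b$. The main point needing care is the completeness of the flow together with the guarantee that orbits issuing from $M_a$ do not escape $H^{-1}([a,b])$ before reaching $M_b$; both are secured by the compact-support cutoff and the identity $H(\varphi_t(z)) = H(z)+t$, which confines the relevant segment of each orbit to the region where $X(H) = 1$.
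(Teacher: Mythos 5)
The paper states Theorem~\ref{thmA} without proof, deferring to Milnor's \emph{Morse Theory}; your argument is exactly the standard proof given there (Theorem~3.1) --- rescaled gradient field with $X(H)=1$ on $H^{-1}([a,b])$, a compactly supported cutoff to ensure completeness, and the time-$(b-a)$ flow map --- and it is correct. The only point left tacit is that the cut-off field still satisfies $X(H)=\rho\geq 0$ everywhere, which is what guarantees that $\varphi_{b-a}$ carries the sub-level set $M_{\leq a}$ (and not merely the level set $M_a$) into $M_{\leq b}$ and that $\varphi_{-(b-a)}$ carries $M_{\leq b}$ back into $M_{\leq a}$, but this is immediate from your construction.
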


To consider what happens when we ``pass'' a critical point, we need to recall how to attach handles. Firstly, we need the

\begin{defn}
An index-$\lambda$ \textit{handle}, or $\lambda$-handle of dimension $m$ is $h^m_\lambda = \mathbb{B}^\lambda \times \mathbb{B}^{m-\lambda}$, where $\mathbb{B}^k$ is the unit ball in $\mathbb{R}^k$. The \textit{axis} (also called \textit{core}) of the handle is $\mathbb{B}^\lambda \times \lbrace 0 \rbrace \subset h^m_\lambda$.
\end{defn}

Now, consider a manifold $M^m$ with boundary $\partial M$, a handle $h^m_\lambda$ and a smooth embedding $\psi : \mathbb{S}^{\lambda - 1} \times 	\mathbb{B}^{m-\lambda} \rightarrow \partial M$, called the \textit{attaching map}. We can form a topological space by taking the disjoint union, $ M \cup h^m_\lambda $, and then identifying $z$ in the boundary of the handle with $\psi \left( z \right) \in \partial M$. The quotient space thus obtained is denoted $ M \cup_\psi h^m_\lambda $.
Finally, we can show that $ M \cup_\psi h^m_\lambda $ admits a unique (up to diffeomorphism) smooth dimension-$m$ structure (see e.g. Milnor \cite[\S 3]{milnor1963morse}). 
Note that attaching a $0$-handle gives the disjoint union $ M \cup \mathbb{B}^m $.

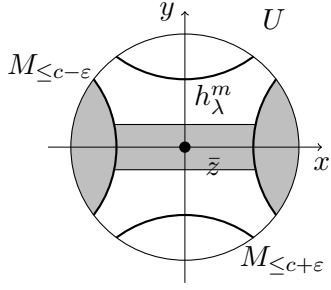
\begin{figure}
	\centering
	\usetikzlibrary{patterns}
\begin{tikzpicture}[ scale=1.5]

\begin{scope}  % otherwise clipping applies to  whole picture
\clip (0, 0) circle (1);

\draw[thick] (0,1.6) circle (1);%,fill=gray
\draw[thick] (0,-1.6) circle (1);

\draw[fill=gray!50] (-1,0.2) -- (-0.5,0.2) --node[above]  {$h^{m}_\lambda$} (1,0.2) -- (1,-0.2)--(-1,-0.2)-- cycle;

\draw[thick,fill=gray!50] (1.6,0) circle (1);
\draw[thick,fill=gray!50] (-1.6,0) circle (1);
\end{scope}

\draw[black] (-1.7,0.4) node[label={60:$M_{\leq c-\varepsilon}$}] {};
\draw[black] (0.3,-1) node[label={right,above:$M_{\leq c+\varepsilon}$}]  {};

\draw[very thin,->] (-1.2,0) -- (1.2,0) node[below]  {$x$} ;
\draw[very thin,->] (0,-1.2) -- (0,1.2) node[left]  {$y$};

\draw (0,0) circle (1);
\draw[black] (0.5,0.9) node[label={30:$U$}]  {};

\fill (0,0) circle (0.05) node[label={-6:$\bar{z}$}] {};

\end{tikzpicture}
	\caption{Schematic representation of the handle attachment in theorem \ref{thmB}.}	\label{thmBFig}
\end{figure}
The effect of ``passing'' a critical point of the function on the diffeomorphism class of the sub-level sets is given by
\begin{thm}
\label{thmB}
Suppose $c$ is a critical value of $H$ such that $M_c$ contains a single nondegenerate critical point $\bar{z}$ of Morse index $\lambda$. Then for every $\varepsilon > 0$ sufficiently small, the sub-level set $M_{\leq c + \varepsilon}$ is diffeomorphic to $M_{\leq c - \varepsilon}$ with an index-$\lambda$ handle attached, i.e.
$$ M_{\leq c + \varepsilon} \cong M_{\leq c - \varepsilon} \cup_\psi h^m_\lambda .$$
\end{thm}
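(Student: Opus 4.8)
The plan is to follow the classical argument of Milnor \cite[\S 3]{milnor1963morse}, feeding the Morse Lemma into Theorem \ref{thmA}. First I would invoke the Morse Lemma at $\bar z$ to obtain coordinates $(x,y) \in \mathbb{R}^\lambda \times \mathbb{R}^{m-\lambda}$ on a neighbourhood $U$ of $\bar z$; after a harmless linear rescaling these may be taken so that $H(x,y) = c - |x|^2 + |y|^2$, with $\bar z$ at the origin. I would then fix $\varepsilon > 0$ small enough that $U$ contains the closed ellipsoidal region $\{\, |x|^2 + 2|y|^2 \leq 2\varepsilon \,\}$ and that $[c-\varepsilon, c+\varepsilon]$ contains no critical value of $H$ other than $c$ (possible since the critical values are isolated by nondegeneracy and the compactness hypothesis).

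The heart of the argument is to replace $H$ by an auxiliary function $F$ that agrees with $H$ outside $U$ but has been pushed below $c-\varepsilon$ near $\bar z$. Concretely, choose a smooth cutoff $\mu : [0,\infty) \to [0,\infty)$ with $\mu(0) > \varepsilon$, $\mu(s) = 0$ for $s \geq 2\varepsilon$, and $-1 < \mu'(s) \leq 0$, and set $F = H - \mu(|x|^2 + 2|y|^2)$ on $U$ and $F = H$ elsewhere. Three checks then follow, each routine in these coordinates: (i) $F$ and $H$ have the same sub-level set at $c+\varepsilon$, since $\mu \geq 0$ gives $F \leq H$, while wherever $\mu > 0$ one already has $|y|^2 < \varepsilon$ and hence $H < c+\varepsilon$; (ii) computing $\mathrm{d}F$ shows its only critical point is still $\bar z$, because the coefficients $-1-\mu'$ and $1-2\mu'$ are respectively strictly negative and strictly positive; and (iii) $F(\bar z) = c - \mu(0) < c - \varepsilon$, so $F^{-1}([c-\varepsilon, c+\varepsilon])$ is free of critical points. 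Applying Theorem \ref{thmA} to $F$ yields $M_{\leq c+\varepsilon} = F^{-1}((-\infty, c+\varepsilon]) \cong F^{-1}((-\infty, c-\varepsilon])$.

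It then remains to recognise $F^{-1}((-\infty, c-\varepsilon])$ as $M_{\leq c-\varepsilon}$ with a handle glued on. The extra region $\mathcal{H} = \{\, H > c-\varepsilon,\ F \leq c-\varepsilon \,\}$ lies inside $U$ and is governed by the inequalities $|y|^2 - |x|^2 > -\varepsilon$ and $\mu(|x|^2 + 2|y|^2) \geq |y|^2 - |x|^2 + \varepsilon$. Parametrising $\mathcal{H}$ by the $\lambda$ descending directions $x$ and the $m-\lambda$ ascending directions $y$ exhibits its core $\{\, y=0,\ |x|^2 \leq \varepsilon \,\}$ as a $\lambda$-disk whose boundary sphere $\{\, y=0,\ |x|^2 = \varepsilon \,\}$ lies on the level set $M_{c-\varepsilon} = \partial M_{\leq c-\varepsilon}$, and identifies $\mathcal{H}$ with $\mathbb{B}^\lambda \times \mathbb{B}^{m-\lambda}$ attached along $\mathbb{S}^{\lambda-1} \times \mathbb{B}^{m-\lambda}$. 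This produces the attaching map $\psi$ and gives $M_{\leq c+\varepsilon} \cong M_{\leq c-\varepsilon} \cup_\psi h^m_\lambda$.

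The main obstacle is this last identification: the naive union $M_{\leq c-\varepsilon} \cup \mathcal{H}$ acquires corners along the sphere where the handle meets the old boundary, so a priori it is only a manifold-with-corners rather than a smooth manifold. To obtain a genuine smooth structure one must round these corners and verify that the smoothing is unique up to diffeomorphism — precisely the fact, invoked earlier in the paper, that $M \cup_\psi h^m_\lambda$ carries a well-defined smooth dimension-$m$ structure. I would dispose of this by choosing a gradient-like vector field for $F$ transverse to the boundary near $\partial \mathcal{H}$ and using its flow to interpolate smoothly between the handle boundary and $M_{c-\varepsilon}$, exactly as in Milnor's treatment; the quantitative inequalities above guarantee the flow stays in the critical-point-free region, so the interpolation is well defined.
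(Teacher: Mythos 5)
The paper gives no proof of this theorem---Appendix \ref{morse} explicitly states its results without proofs and refers the reader to Milnor \cite[\S 3]{milnor1963morse}---and your argument is precisely Milnor's classical one (the auxiliary function $F = H - \mu(|x|^2+2|y|^2)$, the three checks, Theorem \ref{thmA} applied to $F$, and the identification of the excess region as a $\lambda$-handle with corners to be smoothed), correctly executed. No gaps of substance; this matches the standard approach the paper cites.
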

The handle attachment of Theorem \ref{thmB} is shown in \figurename~\ref{thmBFig}.
If we choose the Morse lemma coordinates $\left(x,y\right)$ about $\bar{z}$ the handle is given by
$$h^m_\lambda = \lbrace \lvert x \rvert^2 - \lvert y \rvert^2 \leq \varepsilon, \; \lvert y \rvert^2 \leq \delta \rbrace,$$
and the axis of the handle is given by
$$\mathbb{B}^\lambda \times \lbrace 0 \rbrace =  \lbrace \lvert x \rvert^2 \leq \varepsilon ,y = 0 \rbrace .$$
The figure also shows how the embedding is chosen naturally without ambiguity.

The representation of the sub-level set $M_{\leq c+\varepsilon}$ given in Theorem \ref{thmB} is that of a \textit{handlebody}, namely
\begin{defn}% 
A manifold (with boundary in general) obtained from $\mathbb{B}^m$ by attaching handles of various indices one after another
$$\mathbb{B}^m \cup_{\psi_1} \mathbb{B}^{\lambda_1} \times \mathbb{B}^{m- \lambda_1} \cup_{\psi_2} \cdots \cup_{\psi_k} \mathbb{B}^{\lambda_k} \times \mathbb{B}^{m- \lambda_k}$$
is called an $m$-dimensional \textit{handlebody}.
\end{defn}
When considering bifurcations, we want to rewrite the diffeomorphism type in a more natural way. For this, we must know the topology of the sub-level set before the bifurcation, $M_{\leq c-\varepsilon}$, and the orientation of the handle with respect to the sub-level set. This orientation is given by the Morse coordinates. 

Note that, with a little care, we can also consider multiple (for a given critical value) and degenerate critical points.

Lastly, Theorem \ref{thmB} allows us to derive the Morse inequalities, which give bounds on the number of critical points and their indices based on the topology of the manifold. Firstly, we define the Morse series
$$\mathcal{M}_t \left( H \right) = \sum_{\bar{z}} t^{\lambda \left( \bar{z} \right)} ,$$
for critical points $\bar{z} \in C_r \left( H \right)$, then we need the Poincar\'e series
$$P_t \left( M \right) = \sum_k t^k b_k,$$
where $b_k = \dim H_k \left( M; \mathbb{R} \right)$ are the Betti numbers, i.e. the dimensions of the various homology groups of $M$ over the real numbers. These are topological invariants of the manifold, see e.g. Frankel \cite[Ch.13]{Frankel2004}. Finally, the Morse inequalities are
$$\mathcal{M}_t \left( H \right) - P_t \left( M \right) = \left( 1 + t \right) Q_t \left( H \right),$$
where $Q_t \left( H \right)$ is a polynomial in $t$ with non-negative coefficients, see e.g. Bott \cite{Bott1982}. One often writes the inequality $\mathcal{M}_t \left( H \right) \geq P_t \left( M \right)$ instead, hence the name. These relations can and have been used to study the ``potential energy surfaces'' of molecular dynamics, see Mezey \cite[Ch.2]{Mezey1987} and references therein.

%-------------------------------------------------------------------------------

%-------------------------------------------------------------------------------

\end{document}